\documentclass[10pt]{article}
\usepackage{natbib,setspace,lscape,longtable}
\usepackage{natbib,epsfig,graphicx}
\usepackage{mathrsfs,amsmath,amsthm,amssymb,color, verbatim}
\allowdisplaybreaks
\newcommand{\keywords}[1]{\vspace{1ex}\noindent\textbf{Keywords:} #1}
\usepackage{multirow}
\usepackage{bm}
\usepackage{hyperref}
\usepackage{bbm}

\usepackage{dsfont}
\usepackage{tabularx}

\usepackage{dsfont}
\usepackage{IEEEtrantools}
\usepackage{xr} 

\usepackage{geometry}
\usepackage{float}
\usepackage{array}
\usepackage{enumerate}
\usepackage{comment}
\usepackage{setspace}
\usepackage{amsfonts}
\usepackage{indentfirst}
\usepackage{booktabs}
\usepackage{appendix}
\usepackage{caption}
\usepackage{graphicx, subfig}
\usepackage{algorithm}
\usepackage{algpseudocode}

\numberwithin{equation}{section}
\newtheorem{theorem}{Theorem}
\newtheorem{lemma}{Lemma}
\newtheorem{proposition}{Proposition}
\newtheorem{corollary}{Corollary}

\theoremstyle{definition}

\theoremstyle{remark}

\newtheorem{remark}{Remark}
\newtheorem{assumption}{Assumption}

\theoremstyle{plain}

\newtheorem{lemmaA}{Lemma}
\newtheorem{propositionA}{Proposition}

\theoremstyle{definition}

\theoremstyle{remark}

\def\1{\mathbf{1}}

\def\hat{\widehat}

\def\beq{\begin{equation}}
	\def\eeq{\end{equation}}
\def\beqs{\begin{equation*}}
	\def\eeqs{\end{equation*}}
\def\beqr{\begin{eqnarray}}
	\def\eeqr{\end{eqnarray}}
\def\beqrs{\begin{eqnarray*}}
	\def\eeqrs{\end{eqnarray*}}
\def\bet{\begin{theorem}}
	\def\eet{\end{theorem}}
\def\bel{\begin{lemma}}
	\def\eel{\end{lemma}}
\def\bep{\begin{proposition}}
	\def\eep{\end{proposition}}
\def\bg{\begin{figure}[tbph]\begin{center}}
		\def\eg{\end{center}\end{figure}}

\def\bc{\begin{center}}
	\def\ec{\end{center}}

\newcommand{\nmin}{\min\{n_1,n_2\}}
\newcommand{\maxrhon}{\max\{1,K\rho_{n_1,n_2}\}}
\newcommand{\maxsqrtrho}{\max\left\{1,\sqrt{K\rho}\right\}}
\newcommand{\maxrho}{\max\left\{1,K\rho\right\}}
\newcommand{\maxbeta}{\max\left\{1,\frac{K}{\beta K_2}\right\}}
\newcommand{\maxbarrho}{\max\left\{1,\bar{K}\rho\right\}}

\hypersetup{
	colorlinks=true,
	linkcolor={blue},
	filecolor={Maroon},
	citecolor={blue},
	urlcolor={blue}}

\title{Cross-Validation in Bipartite Networks}
\author{Bokai Yang\thanks{
		The first two authors contributed equally to this work. }\hspace{.2cm}\\
	Qiuzhen College, Tsinghua University\\
	and \\
	Yuanxing Chen \\
	Yau Mathematical Sciences Center, Tsinghua University\\
	and\\
	Yuhong Yang\thanks{
		Corresponding author. Email: yyangsc@mail.tsinghua.edu.cn.
	}\\
	Yau Mathematical Sciences Center, Tsinghua University}
	
\begin{document}
	\maketitle
	\begin{abstract}
		Bipartite networks, which encode interactions between two distinct types of entities, arise widely in applications and exhibit inherent asymmetry across node sets. Despite a growing literature on bipartite community detection, estimating community numbers $(K_1, K_2)$, a critical issue for bipartite network analysis, remains theoretically underdeveloped without any model selection consistency established, to our knowledge. Indeed, the inherent asymmetry and the two-dimensional parameter space with possibly drastically different $K_1$ and $K_2$ pose unique challenges that differ from unipartite cases. In particular, the candidate models may simultaneously overfit one node set while underfitting the other. To address these challenges, we propose Bipartite Cross-Validation (BCV), a penalized cross-validation framework that jointly selects $(K_1,K_2)$ in a fully data-driven manner. We establish the first model selection consistency for bipartite networks, notably accommodating the regime where the numbers of communities scale with the network size, revealing the intricate interplay between sparsity and model complexity. Simulations and real-data applications demonstrate strong finite-sample performance of BCV.
	\end{abstract}
	\keywords{Bipartite networks; Stochastic block model; Model selection; Cross-validation; Consistency.}
	
	\section{Introduction}\label{introduction}
	
	Network data have attracted increasing attention as a powerful framework for representing complex relational structures in recent years. 
	Among them, bipartite networks, which characterize interactions between two distinct types of entities, arise naturally in many applications, including examples such as author-paper relationships in citation networks \citep{newman2001structure}, user-item interactions in recommendation systems \citep{zhang2011tagaware}, and legislator-bill connections in political science \citep{fowler2007cosponsor}. 
	A key feature of bipartite networks is the inherent asymmetry between the two node sets, which often exhibit drastically different structural patterns.
	As a result, collapsing them into a single homogeneous network may obscure important structural information.
	Instead, joint modeling of both node types is essential to capture their interdependent bipartite structure and uncover latent heterogeneity and functional organization.

	The study of bipartite community detection dates back to the seminal work of \citet{barber2007bimod}, who extended modularity maximization to two-mode networks through a ``bi-modularity'' objective. A large body of subsequent work has focused on developing algorithms for uncovering community structure, including methods based on modularity-type criteria \citep{Guimera2007module,liu2010community,pesantez2017efficient,ZHOU2018novel}, 
	one-mode projections of bipartite networks \citep{Alzahrani2016}, and hierarchical or Bayesian-MCMC formulations \citep{Larremore2014infer, larremore2020bayes,tamarit2020hierarchical}. 
	While these methods have demonstrated good empirical performance in certain settings, their treatment of the numbers of communities as tuning parameters still lacks a theoretical understanding. With the numbers of communities known, recent advances have established impressive theoretical guarantees (e.g., on membership recovery) for specific algorithms under the bipartite stochastic block model, such as spectral clustering \citep{Zhou2019AnalysisOS,braun2024strong} and pseudo-likelihood methods \citep{zhou2020optimal}. 
	This calls for a method to consistently select the numbers of communities for bipartite networks.
	
	However, establishing a consistent model selection framework for bipartite networks presents unique structural hurdles, which are further amplified in modern large-scale applications where the latent complexity, or, the numbers of communities, are modeled as growing with the network size. Firstly, due to the inherent asymmetry between node sets, the rank of the associated probability matrix is strictly capped by the smaller number of communities, rendering the side with more communities inherently rank-deficient. Secondly, unlike the one-dimensional complexity in unipartite networks, bipartite community counts occupy a two-dimensional grid $(K_1, K_2)$. This introduces the risk of simultaneously overfitting one side while underfitting the other, fundamentally complicating the bias–variance trade-off. Thirdly, possible substantial disparity between $K_1$ and $K_2$ poses severe challenges to identify the community structure on the rank-deficient side, subsequently hinders the model selection process by diminishing the theoretical separation between the ground-truth and misspecified models. Finally, these networks often exhibit highly imbalanced node sizes (e.g., 99 senators versus over 2000 bills in a later data example), which imposes stringent sparsity requirements: if the network is too sparse, the smaller node set may not accumulate enough edges to provide a sufficient signal for community detection.

	Model selection for unipartite networks has been extensively studied using likelihood-based methods \citep{wang2017likelihood}, spectral approaches \citep{le2022estimating}, and hypothesis testing procedures \citep{jin2023optimal,Bhadra2025unified}. More recent work allows for a diverging number of communities via exploiting a one-dimensional sequence of nested models \citep{lei2016goodness,wu2024eigengap}. 
	However, these procedures are difficult to extend to handle bipartite networks, due to the issues mentioned above.

	Cross-validation provides a natural and principled approach to model selection, with a well-established theoretical foundation in statistical learning \citep{stone1974cv,geisser1975pred,shao1993cv,van2003unified,yang2007consistency,arlot2010survey,lei2025moderncv}.
	In network analysis, recent works have successfully adapted cross-validation to unipartite stochastic block models, establishing consistency guarantees for selecting the number of communities \citep{chen2018network,li2020network,Chakrabarty2025subsampling,yang2025pnncv}. 
	However, these approaches face two hurdles in the bipartite setting. First, existing results are mostly confined to the fixed-$K$ regime and do not account for a diverging number of communities on a two-dimensional grid, which demands a uniform control on the large set of candidate models. Second, standard CV lacks a mechanism to handle the asymmetry between the two node sets. Without a tailored penalty, the procedure can easily overfit one side of the network while underfitting the other. This makes it difficult to correctly identify the true $(K_1, K_2)$ for both node sets simultaneously, rendering existing cross-validation procedures inadequate for bipartite structures.

	In this work, we develop Bipartite Cross-Validation (BCV), a new procedure that decouples and penalizes asymmetric errors across the two node sets. 
	Within a penalized cross-validation framework, BCV incorporates a tailored complexity penalty and estimation strategy to jointly control underfitting and overfitting across both sides of the network. Our contributions are twofold.
	
	First, we establish the first consistency framework for bipartite model selection with a possibly diverging number of communities $(K_1, K_2)$.
	More specifically, we establish consistency of BCV under explicit scaling conditions and balance the trade-off between network sparsity and the growth in community numbers across both balanced and highly imbalanced regimes.
	
	Secondly, we address the intrinsic asymmetry of bipartite data in terms of the numbers of communities and node sizes, absent in unipartite settings. We introduce a penalty that explicitly accounts for this imbalance and develop a refined analysis that partitions the two-dimensional parameter space into overfitting and underfitting regimes. Notably, our framework employs a stochastic penalty and give a concrete data-driven construction, which is a feature rarely considered in standard penalization methods, allowing the control of community sizes to adapt to the specific sparsity level of the observed network. This ensures that with high probability, the true $(K_1, K_2)$ is the unique global minimizer, regardless of the relative growth rates of the numbers of communities across the two sides. 
	Extensive simulations and real-data analyses, including the ``Southern Women'' dataset and the U.S. Senate cosponsorship network, further demonstrate strong finite-sample performance.

	The rest of the article is organized as follows.
	Section \ref{subsec:model} introduces the bipartite stochastic block model (biSBM).
	Section \ref{subsec:bcvalg} presents the BCV algorithm.
	Section \ref{sec:analysis} provides theoretical results, including growth rates of $(K_1,K_2)$ and the construction of the penalty term.
	Section \ref{sec:simulation} reports simulation results under different growth regimes.
	Section \ref{sec:realdata} applies BCV to two real datasets, yielding new insights beyond existing studies.
	Section \ref{sec:discuss} concludes with a brief discussion and future directions. Proofs of the main theorem and corollaries are deferred to the supplementary material.
	
	\section{Methodology}\label{sec:methodology}
	\subsection{Notation}
	For any positive integer $n$, we denote $[n]=\{1,\dots, n\}$.
	For any matrix $M=[M_{ij}]$, we denote $\|M\|$ as its spectral norm. We denote $I_r$ as the identity matrix of order $r$. For a finite set $\mathcal{A}$, we denote $|\mathcal{A}|$ as its cardinality. Besides, let $\text{I}$ represent the indicator function. 
	For positive number, we write $a\gg b$ or $a=\omega(b)$ as $a/b\to\infty$, and $a\ll b$ or $a=o(b)$ as $a/b\to 0$. Moreover, $a=\Omega(b)$ indicates that there exists a constant $C>0$ such that $a\geq Cb$. Finally, for random variables, $a_n=o_{\mathrm{P}}(b_n)$ if $a_n/b_n\overset{p}{\to}0$.
	
	\subsection{The Bipartite Network Model}\label{subsec:model}
	To model latent community structure in bipartite networks, we adopt the bipartite Stochastic Block Model (biSBM), a probabilistic generative model that captures group-dependent interactions between two distinct node sets. 
	
	Let the two sets have sizes $n_1$ and $n_2$, indexed by $[n_1]$ and $[n_2]$, respectively. Let $A\in\{0,1\}^{n_1\times n_2}$ be the bi-adjacency matrix, where $A_{ij}=1$ if there is an edge between node $i\in[n_1]$ and node $j\in[n_2]$, and 0 otherwise. In the biSBM, we assume that nodes on each side are partitioned into \( K_1 \) and \( K_2 \) latent communities, respectively. Let \( Z_r \in \mathbb{H}^{n_r \times K_r} \) $(r=1,2)$ denote the membership matrix for the \( r \)-th node set, where \( \mathbb{H}^{n_r \times K_r} \) denotes the set of binary indicator matrices with exactly one 1 per row. The corresponding community label vector is denoted by \( c_r \in [K_r]^{n_r} \), with \( (c_r)_i = k \) indicating that node \( i \) belongs to the \( k \)-th community. Each element of $A$ is an independent draw from a Bernoulli distribution, and
	\[
	P := \mathbb{E}[A] = Z_1 B^{(n_1,n_2)} Z_2^T,
	\]
	where $B^{(n_1,n_2)}\in [0,1]^{K_1 \times K_2}$ is the block-wise connection probability matrix. Denote $K=\min\{K_1,K_2\}$, which is the largest possible rank for the mean matrix $P$.
	For notational simplicity, and when no confusion arises, we will omit the superscript in $B^{(n_1,n_2)}$ and refer to the two node sets as side 1 and side 2, respectively, throughout the remainder of this paper. We emphasize that due to the nature of biSBM, both $(n_1,n_2)$ and $(K_1,K_2)$ should be allowed to be at different orders.

	\subsection{The BCV Algorithm}\label{subsec:bcvalg}
	This subsection describes a procedure for selecting the numbers of communities \( K_1 \) and \( K_2 \) on the two sides of a bipartite network, referred to as the Bipartite Cross-Validation (BCV) algorithm.

	Let $0<w_{n_1,n_2}<1$ be the training proportion. Each edge pair $(i,j)$ is independently assigned to the training set with probability $w_{n_1,n_2}$ and to the evaluation set with probability $1-w_{n_1,n_2}$. 
	Since the bi-adjacency matrix $A$ is asymmetric, this edge-splitting scheme differs from those used in symmetric SBM cases (e.g., \cite{li2020network,yang2025pnncv}), where only the upper triangular entries are split due to symmetry. 
	For notational simplicity, we write $w$ for $w_{n_1,n_2}$ throughout the paper. 
	Let $\mathcal{E}\subset [n_1]\times[n_2]$ be the training edge set, and let $\mathcal E^c$ denotes its complement, corresponding to the evaluation set.

	Given a bi-adjacency matrix $A$ and a training edge set $\mathcal E$, let $Y$ be a partially observed bi-adjacency matrix such that $Y_{ij}=A_{ij}$ if $(i,j)\in\mathcal E$ and $Y_{ij}=0$ otherwise. To proceed, we need to recover the bi-adjacency matrix from the partially observed matrix $Y$.
	Leveraging the low-rank structure inherent in the bipartite SBM, where the rank corresponds to $\min\{K_1,K_2\}$, we employ a low-rank matrix approximation method. Specifically, for a pre-specified rank $k$, the complete bi-adjacency matrix can be estimated via a singular value decomposition (SVD) with thresholding:
	\begin{equation} \label{recover}
		\hat{A}^{(k)}=\frac{1}{w}S_H(Y, k),
	\end{equation}
	where $S_H(Y, k)$ denotes the rank-$k$ truncated SVD of $Y$. To elaborate, if the SVD of $Y$ is written as
	$
	Y= UDV^{\top},
	$
	with $ D = \operatorname{diag}(\sigma_1, \ldots, \sigma_{\min\{n_1,n_2\}}) $ and  
	$\sigma_1 \geq \sigma_2 \geq \cdots \geq \sigma_{\min\{n_1,n_2\}} \geq 0$, then  
	$
	S_H(Y, k) = UD_{k}V^{\top},
	$
	where  
	$D_{k} = \operatorname{diag}(\sigma_1, \ldots, \sigma_{k}, 0, \ldots, 0)$ is the diagonal matrix retaining only the top $k$ singular values while setting the remaining singular values to zero.

	For any candidate pair $(K_1',K_2')$, we apply the thresholded SVD with $k=\min\{K_1',K_2'\}$ to obtain the recovered bi-adjacency matrix $\hat{A}^{(K_1',K_2')}$. We then estimate the two community labels $\hat{c}_1^{(K_1',K_2')}$ and $\hat{c}_2^{(K_1',K_2')}$ via spectral clustering. 
	In detail, for side 1, we extract the top-$k$ left singular vectors $\hat{U}^{(K_1',K_2')}$ of $\hat{A}^{(K_1',K_2')}$, and apply $k$-means algorithm with $K_1'$ clusters to obtain $\hat{c}_1^{(K_1',K_2')}$. 
	For side 2, we use the top-$k$ right singular vectors $\hat{V}^{(K_1',K_2')}$ and apply $k$-means with $K_2'$ clusters to obtain $\hat{c}_2^{(K_1',K_2')}$. When no confusion arises, we write the estimated label as $\hat{c}_1$ and $\hat{c}_2$ for simplicity.

	Given the estimated labels, we estimate the block probability matrix by
	$$\hat{B}^{(K_1',K_2')}_{k_1k_2}=\frac{\sum_{(i,j)\in\mathcal{E}}A_{ij}\text{I}\{\hat{c}_{1,i}=k_1,\hat{c}_{2,j}=k_2\}}{|(i,j)\in\mathcal{E}:\hat{c}_{1,i}=k_1,\hat{c}_{2,j}=k_2|},\quad k_1\in[K_1'],k_2\in[K_2'],$$
	and the corresponding estimated connection probability matrix results in the form
	$$\hat{P}^{(K_1',K_2')}_{ij}=\hat{B}^{(K_1',K_2')}_{\hat{c}_{1,i}\hat{c}_{2,j}}.$$

	Finally, we determine the optimal model by minimizing the penalized loss
	$$
	L_{K_1',K_2'}(A,\mathcal{E}^c)=\frac{1}{|\mathcal E^c|}\sum_{(i,j)\in\mathcal{E}^c}(A_{ij}-\hat{P}^{(K_1',K_2')}_{ij})^2+d_{K_1',K_2'}\lambda_{n_1,n_2}.
	$$
	Here, $d_{K_1',K_2'}$ implies the model complexity for the candidate $(K_1',K_2')$, which we prefer to set $d_{K_1',K_2'}=K_1'K_2'$ naturally as it is the number of parameters in the block-wise connection probability matrix. 
	
	The penalty factor $\lambda_{n_1,n_2}$, which is allowed to be a random variable instead of a constant, plays a crucial role in balancing goodness-of-fit and model complexity. In bipartite networks, where overfitting on one side and underfitting on the other may occur simultaneously, this regularization helps in ruling out such incorrect solutions and ensuring consistent model selection under suitable conditions. Specifically, when the overfitting side is too large, the penalty dominates any marginal reduction in prediction error, while conversely when the overfitting side is only mildly inflated, the increase in empirical penalized objective due to underfitting rules out such candidates. The full BCV algorithm is summarized in Algorithm 1 below.
	
	\begin{algorithm}
		\caption{Bipartite Cross-Validation (BCV)}\label{alg:bcv}
		\begin{algorithmic}[1]
			\State \textbf{Input:} Adjacency matrix $A$, training proportion $w$, candidate sets $\mathcal{K}_1$ and $\mathcal{K}_2$, number of replications $S$, complexity parameters $d_{K_1',K_2'}$, and penalty order $\lambda_{n_1,n_2}$.
			
			\For{$s = 1$ \textbf{to} $S$}
			\State Randomly sample a training subset of node pairs $\mathcal{E}_s$ with probability $w$.
			\For{\textbf{each} $(K_1', K_2') \in \mathcal{K}_1 \times \mathcal{K}_2$}
			\State Set $k = \min\{K_1', K_2'\}$.
			\State Compute the rank-$k$ truncated SVD on the adjusted partially observed
			\Statex \hspace{2.7em} matrix $Y/w$ obtain its top $k$ left singular vectors $\hat{U}_s^{(k)}$ and right singular
			\Statex \hspace{2.7em} vectors $\hat{V}_s^{(k)}$. 
			\State Run the $k$-means algorithm on $\hat{U}^{(k)}_s$ with no more than $K_1'$ clusters and
			\Statex \hspace{2.7em} on $\hat{V}^{(k)}_s$ with no more than $K_2'$ clusters to get the estimated label $\hat{c}^{(K_1',K_2')}_{1,s}$
			\Statex \hspace{2.7em} and $\hat{c}^{(K_1',K_2')}_{2,s}$.
			\State Obtain the corresponding estimator $\hat{P}^{(K_1',K_2')}_{s}$.
			\State Evaluate the penalized loss on the test set $\mathcal{E}_s^c$:
			\Statex \hspace{3.7em} $L_{K_1',K_2'}(A,\mathcal{E}_s^c) = \frac{1}{|\mathcal{E}_s^c|} \sum_{(i,j) \in \mathcal{E}_s^c} (A_{ij} - \hat{P}^{(K_1',K_2')}_{s,ij})^2 + d_{K_1',K_2'} \lambda_{n_1,n_2}$.
			\EndFor
			\EndFor
			
			\State Determine the best model: 
			\Statex \hspace{1em} $(\hat{K}_1, \hat{K}_2) = \arg\min_{K_1',K_2'} S^{-1} \sum_{s=1}^S L_{K_1',K_2'}(A,\mathcal{E}_s^c)$.
			
			\State \textbf{Output:} Estimated community numbers $\hat{K}_1$ and $\hat{K}_2$.
		\end{algorithmic}
	\end{algorithm}
	\begin{remark}
		Unlike the results as in \cite{chen2018network} and \cite{li2020network} for unipartite SBM,  since the data analyst typically does not know sensible upper bounds on $K_1$ and $K_2$, we assume that the candidate sets $\mathcal{K}_1$ and $\mathcal{K}_2$ can be unbounded, defaulting to the largest possible choice, $[n_1]$ and $[n_2]$.  
	\end{remark}

	\section{Theoretical Properties of the BCV algorithm}\label{sec:analysis}
	\subsection{The Main Consistency Theorem}\label{subsec:theo}
	
	We first introduce some assumptions on the community structrue and the underlying probability matrix. Recall that $P=\mathbb{E}[A]= Z_1 B^{(n_1,n_2)} Z_2^T$, where $Z_1$ and $Z_2$ are membership matrices and $B^{(n_1,n_2)}$ is the block-wise probability matrix. Denote $n=\max\{n_1,n_2\}$. Without loss of generality, we assume $K_1\leq K_2$, that is, the side 2 has more number of communities.

	\begin{assumption}[Balanced community structure]\label{asm1}
		Let \( n_{rk} = |\{i \in [n_r] : (c_r)_i = k\}| \) denote the size of the \(k\)-th community on side \(r\). There exists a constant \( \pi_0 \in(0,1) \) such that, for all \(r\) and \(k \in [K_r]\), \( n_{rk} \geq \pi_0 n_r/K_r \).
	\end{assumption}

	Assumption \ref{asm1} is analogous to the standard balanced community structure condition in symmetric SBM models \citep{chen2018network, li2020network}.
	
	In theoretical analysis of the bipartite setting, the gap between the number of clusters $K_2$ and the rank $K$ of the underlying mean matrix $P$ brings difficulties in establishing consistency properties for the recovered community label on the second side. Therefore, to address this concern, we adopt the following incoherence condition proposed in \cite{Zhou2019AnalysisOS}.
	Specifically, denote $N_r=\text{diag}(n_{r1},\ldots,n_{rK_r})$, where $n_{rj}$ is the size of the $j$-th community on side $r$. Define 
	$$\bar{B}:=N_1^{1/2}BN_2^{1/2}\qquad \text{ and }\qquad\bar{Z}_r := Z_r N_r^{-1/2},\quad r\in\{1,2\}.$$

	\begin{assumption}[Incoherence condition]\label{asm2}
		Let $V\in\mathbb{O}^{K_2\times K}$ be the matrix of right singular vectors of $\bar{B}$. For a $K_2\times K_2$ matrix $M$ and an index set $\mathcal{I} \subset [K_2 ]$, let $M_{\mathcal{I}}$ be the principal sub-matrix of M on indices $\mathcal{I} \times \mathcal{I}$. Then $V$ satisfies that for some $\beta_{n_1,n_2}\in(0,1]$,
		$$\max_{\mathcal{I}\subset[K_2]:|\mathcal{I}|=2}\|(VV^{\top}-I_{K_2})_{\mathcal{I}}\|\leq1-\beta_{n_1,n_2}.$$
	\end{assumption}

	This assumption ensures sufficient separation among the rows of $\bar{Z}_2 V$, which is the right singular vector matrix of $P$ as shown in \cite{Zhou2019AnalysisOS}. Specifically, it provides a lower bound on pairwise distances between distinct rows of $\bar{Z}_2 V$, enabling consistent recovery of communities on side 2 via spectral clustering with rank $K_2$.

	\begin{remark}\label{remark2}
		Assumption \ref{asm2} is mild in many practical settings, particularly when $K_1$ and $K_2$ are fixed. To see this, suppose $c_1$ and $c_2$ are independently drawn from multinomial distributions $\text{Multinomial}(g_1)$ and $\text{Multinomial}(g_2)$, where $g_r = (g_{r,i})_{i=1}^{K_r}$ satisfies $\sum_i g_{r,i} = 1$. Then, the normalized block-wise probability matrix $\bar{B}/\sqrt{n_1 n_2}$ converges in probability to the limit
		$
		B_{\mathrm{limit}} := \mathrm{diag}(g_1)\, B\, \mathrm{diag}(g_2).
		$
		Consequently, the spectral structure of $\bar{B}$ stabilizes as $n_1, n_2 \to \infty$, and its singular vectors concentrate around those of $B_{\mathrm{limit}}$. 
		This implies that the incoherence parameter $\beta_{n_1,n_2}$ is bounded away from zero with high probability.
		
		However, in high-dimensional regimes where $K_1, K_2 \to \infty$, the behavior of $\beta_{n_1,n_2}$ depends on their relative growth:
		\begin{enumerate}
			\item \textit{Balanced case} ($K_1 \asymp K_2$): if $B$ is well-conditioned, the rows of $V$ remain nearly orthonormal, and $\beta_{n_1,n_2}=\Theta(1)$.
			\item \textit{Asymmetric case} ($K_1 \ll K_2$): $\beta_{n_1,n_2}= O(K_1/K_2)$, since $VV^{\top}$ is a rank $K_1$ projection matrix and its leverage values $(VV^{\top})_{ii}$ is around $K_1/K_2$.
			Thus, by inequality $\|M\|_2\geq M_{ii}$ one can get the rate. This reflects the geometric fact that the singular vectors must spread their mass across a larger number of nodes.
		\end{enumerate}
		In all above scenarios, Assumption \ref{asm4} and Theorem \ref{theo1} explicitly account for this dependency on $\beta_{n_1,n_2}$, as specified in the rate conditions.
	\end{remark}

	\begin{assumption}[Structural conditions on the block probability matrix]\label{asm3}
		Assume $B^{(n_1,n_2)} = \rho_{n_1,n_2} B_0^{(n_1,n_2)}$, where the scaled matrix $B_0^{(n_1,n_2)} \in [0,1]^{K_1 \times K_2}$ satisfies:
		\begin{enumerate}
			\item Normalization: $\max_{i,j} B_{0,ij}^{(n_1,n_2)} = 1$, $\min_{i,j}B_{0,ij}^{(n_1,n_2)}\geq C$ for some constant $C>0$.
			\item Spectral strength: There exists a constant $C_\sigma > 0$ such that the $K$-th singular value satisfies $\sigma_K(B_0^{(n_1,n_2)}) \geq C_\sigma \sqrt{K_2}$.
			\item Separation distance: There exists $d_{n_1,n_2} \in (0,1)$ such that the minimum separation distance between distinct rows and columns satisfies:
			\[
			\min \left\{ \min_{k_1 \neq k_2} \left\| B_{0,k_1 \cdot}^{(n_1,n_2)} - B_{0,k_2 \cdot}^{(n_1,n_2)} \right\|_\infty, \, \min_{l_1 \neq l_2} \left\| B_{0, \cdot l_1}^{(n_1,n_2)} - B_{0, \cdot l_2}^{(n_1,n_2)} \right\|_\infty \right\} \geq d_{n_1,n_2}.
			\]
		\end{enumerate}
	\end{assumption}
	
	\begin{remark}
		Unlike classical SBMs with fixed $K_1$, $K_2$, we allow them to grow with network size, so $B_0$ is defined as a sequence of matrices with increasing dimensions. Condition (ii) ensures that the signal remains identifiable as the parameter space expands. 
		This is justified by the fact that $\|B_0^{(n_1,n_2)}\|_F^2 = \sum_{k=1}^K \sigma_k^2(B_0^{(n_1,n_2)})$. Under the common setting where entries of $B_0$ are of constant order, we have $\|B_0^{(n_1,n_2)}\|_F^2 \asymp K_1 K_2$., then assuming all $K$ singular values are of the same order of magnitude yields $\sigma_K \asymp \sqrt{K_2}$. Finally, Condition (iii) provides the necessary geometric separation to resolve individual clusters, where $d_{n_1,n_2}$ is often treated as a constant in standard regimes.
	\end{remark}

	\begin{assumption}[Scaling conditions for model selection consistency]\label{asm4}
		Assume $d_{n_1,n_2}$ and $w_{n_1,n_2}$ are bounded away from zero by absolute constants. Let $n = \max\{n_1, n_2\}$. For BCV to consistently identify $(K_1, K_2)$, the following scaling limits must hold:
		\begin{enumerate}
			\item Spectral identifiability: $n \rho_{n_1,n_2}=\Omega( (K_2/K) \log n)$.
			\item Underfitting signal dominance: $n_1 n_2 \rho_{n_1,n_2}^2 \cdot \min\{1/\log n, 1/(K_2 \rho_{n_1,n_2})\} \gg K^4 K_2^6.$
			\item Model separation: $\min\{n_1,n_2\}\rho_{n_1,n_2} \beta_{n_1,n_2} \cdot \min\{1 , 1/(K\rho_{n_1,n_2}) \} \gg K^4 K_2^4$.
		\end{enumerate}
	\end{assumption}
	
	\begin{remark}[Intuition for the scaling conditions]
		The conditions in Assumption \ref{asm4} delineate the feasible regime for consistent bipartite community selection. The fully general conditions accommodating vanishing $w_{n_1,n_2}$ and $d_{n_1,n_2}$ are provided in Section A.1 of the Supplementary Material.
		
		Condition 1 ensures the spectral structure is identifiable. Conditions 2 and 3 address underfitting: if  $K'_1 < K_1$, distinct communities must be merged, causing systematic inflation in the cross-validation loss. 
		Condition 2 ensures this inflation dominates estimation noise, as well as this ``merging penalty'' is not masked by the complexity of the bipartite structure on the other node set when no severe overfitting occurs on that side. Condition 3 guarantees that the penalized loss of the correctly specified model remains strictly lower than underfitted alternatives.
	\end{remark}
	
	We now state the consistency result for the BCV estimator.	
	
	\begin{theorem}\label{theo1}
		Suppose Assumptions \ref{asm1}--\ref{asm4} hold and let $d_{K_1',K_2'}=K_1'K_2'$. Assume the penalty factor $\lambda_{n_1,n_2}$ satisfies: 
		\begin{enumerate}
			\setlength{\itemsep}{0.05cm}
			\item $K^2K_2\Pr[\lambda_{n_1,n_2}>\rho_{n_1,n_2}^2/K^3K_2^4]\to 0.$
			\item For any constant $C_1>0$, $$\Pr\left[\lambda_{n_1,n_2}<C_1\frac{\rho_{n_1,n_2} K}{\beta_{n_1,n_2}\nmin}\maxrhon\right]\ll \frac{1}{n^2}$$ \item For any constant $C_2>0$, $$\Pr\left[\lambda_{n_1,n_2}<C_2\frac{KK_2^2\log n}{n_1n_2}\right]\ll \frac{1}{n^2}.$$
		\end{enumerate}
		Then BCV is consistent in the sense that 
		$$
		\Pr\left((\hat{K}_1,\hat{K}_2)=(K_1,K_2)\right)\to1\quad\mathrm{as}\quad (n_1,n_2)\to\infty.
		$$
	\end{theorem}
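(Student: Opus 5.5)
The argument compares the penalized CV loss of the true pair $(K_1,K_2)$ against every competitor $(K_1',K_2')$ in the unbounded candidate grid. The grid splits into three regions.

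\begin{enumerate}
\item \textbf{Pure overfitting:} $K_1'\ge K_1$ and $K_2'\ge K_2$, with at least one strict inequality.
\item \textbf{Pure underfitting:} $K_1'\le K_1$ and $K_2'\le K_2$, with at least one strict inequality.
\item \textbf{Mixed:} one side overfitted and the other underfitted.
\end{enumerate}

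On each region I show $L_{K_1',K_2'}-L_{K_1,K_2}>0$ with high probability. Region-wise, each bound holds with failure probability $o(n^{-2})$, or the number of relevant candidates times a vanishing probability, which is what condition 1 of the theorem is shaped for. A union bound over at most $n_1n_2\le n^2$ candidates then gives consistency. Since the evaluation set $\mathcal E^c$ is independent of $\hat P$ given the split, I write
\[
\frac1{|\mathcal E^c|}\sum_{\mathcal E^c}(A_{ij}-\hat P_{ij})^2=\frac1{|\mathcal E^c|}\sum_{\mathcal E^c}(A_{ij}-P_{ij})^2+\frac1{|\mathcal E^c|}\sum_{\mathcal E^c}(P_{ij}-\hat P_{ij})^2+\frac2{|\mathcal E^c|}\sum_{\mathcal E^c}(A_{ij}-P_{ij})(P_{ij}-\hat P_{ij}).
\]
The first term is common to all candidates and cancels. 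The cross term is a conditionally centered sum, which I control by Bernstein's inequality with a bound of order $\sqrt{\rho\log n/(n_1n_2)}\,(\text{approx. error})^{1/2}$.

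\textbf{Step 1: the true model.} I use matrix concentration for the sparse Bernoulli matrix $Y/w$, namely $\|Y/w-P\|\lesssim\sqrt{n\rho}$ with high probability; condition 1 of Assumption \ref{asm4} ensures the spectral regime. By Davis--Kahan and the singular-value lower bound $\sigma_K(P)\gtrsim \rho\sqrt{K_2}\,n_1^{1/2}n_2^{1/2}/\sqrt{K_1K_2}$ from Assumptions \ref{asm1} and \ref{asm3}, the singular subspaces are close. The $k$-means error bounds, with the incoherence Assumption \ref{asm2} giving row separation of order $\sqrt{\beta K_2/n_2}$ on side 2, then give a vanishing misclustering fraction on both sides. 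This yields $\|\hat B-B\|_{\max}\lesssim\sqrt{K_1K_2\rho\log n/(n_1n_2)}$ up to label errors. Hence the estimation term for the truth is $O_P\bigl(\rho K/(\beta\nmin)\cdot\maxrhon+KK_2^2\log n/(n_1n_2)\bigr)$. This is exactly the scale that conditions 2 and 3 on $\lambda$ dominate, because a competitor differs in complexity by at least $\min\{K_1,K_2\}\ge1$ units of $\lambda$.

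\textbf{Step 2: overfitting.} Any $\hat P$ is a blockwise average, so its approximation loss is nonnegative. The true model's excess loss over the oracle is bounded by Step 1. Thus $L_{K_1',K_2'}-L_{K_1,K_2}\ge (K_1'K_2'-K_1K_2)\lambda-O_P(\text{Step 1 rate})-|\text{cross terms}|$. The cross term for an overfitted model must be bounded uniformly over labels. I bound it crudely by $\sqrt{\rho\,\text{loss}\,\log n/(n_1n_2)}$ and absorb it using $2ab\le a^2+b^2$. Conditions 2 and 3 on $\lambda$ make the penalty win, with $o(n^{-2})$ failure per candidate.

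\textbf{Step 3: underfitting and mixed cases.} This is the main obstacle. If, say, $K_1'<K_1$, then by pigeonhole some estimated cluster contains at least $\pi_0n_1/(K_1K_1')$ nodes from each of two true communities. This holds regardless of how large $K_2'$ is, because the estimated side-2 partition refines nothing on side 1. By the separation $d_{n_1,n_2}$ in Assumption \ref{asm3}, there is some column block of size $\gtrsim n_2/K_2'$ — and I need it to be $\gtrsim n_2/K_2$ — where the two true rows differ by $\rho d$. Within it, any constant fit incurs squared error $\gtrsim\rho^2$ on a $1/(K_1^2K_2)$-type fraction of entries. That gives an approximation loss $\gtrsim \rho^2/(K^aK_2^b)$; I will track the exponents to match $\rho^2/(K^3K_2^4)$.

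The delicate part is the mixed region, where $K_2'$ may be huge and a finely split side 2 could shrink the blocks. Here I use the pigeonhole on side-2 true communities instead, so the loss bound does not degrade with $K_2'$. I then compare against the complexity gain: if $K_1'K_2'\lambda$ exceeds $K_1K_2\lambda$ the penalty suffices; otherwise $K_2'$ is at most $K_1K_2$ and the underfit inflation must dominate the penalty difference $K_1K_2\lambda$. Condition 1, $\lambda\le\rho^2/(K^3K_2^4)$, is tailored for exactly this, with the $K^2K_2$ factor counting such bounded candidates. Finally, conditions 2 and 3 of Assumption \ref{asm4} ensure that this inflation dominates both the cross-term noise and the Step 1 estimation error of the true model. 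A union bound then completes the proof.
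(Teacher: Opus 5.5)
Your architecture matches the paper's: label consistency and an excess-loss bound for the true pair, pigeonhole lower bounds whenever a side is underfitted, and the penalty absorbing everything else. You control the noise with a genuinely different device, though.

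\emph{The paper's device.} It lower-bounds every candidate through a cell-wise oracle. On each cell $\mathcal{T}_{k_1,k_2,l_1,l_2}$ the test average minimizes squared error, so Hoeffding and a union over cells give $\ell_{K_1',K_2'}-\ell_0\ge-\sum|\mathcal{T}_{k_1,k_2,l_1,l_2}|(\hat p_{k_1,k_2,l_1,l_2}-B_{l_1l_2})^2\ge-3K_1K_2K_1'K_2'\log n$. Because this noise grows with $K_1'K_2'$, the paper needs a penalty excess proportional to $K_1'K_2'$, hence its split at $K_2'=2K_1K_2$.

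\emph{Your device.} You write $\ell_{K_1',K_2'}-\ell_0=\sum_{\mathcal{E}^c}(P_{ij}-\hat P_{ij})^2+2\sum_{\mathcal{E}^c}(A_{ij}-P_{ij})(P_{ij}-\hat P_{ij})$. Conditional on the split and the training entries, $\hat P$ is fixed while the test \emph{values} remain independent Bernoulli$(P_{ij})$. One Bernstein bound plus $2ab\le a^2+b^2$ then gives $\ell_{K_1',K_2'}-\ell_0\ge\frac12\sum_{\mathcal{E}^c}(P_{ij}-\hat P_{ij})^2-C\log n$ for every candidate, independently of $K_1'K_2'$. This is sharper and simpler than the paper's route.

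Your true-model rate $\rho K\max\{1,K\rho\}/(\beta\min\{n_1,n_2\})$ is also attainable, one factor of $K$ below Proposition~\ref{prop_sbmupper}. The contaminated entries number $O(nK\max\{1,K\rho\}/(\rho\beta))$, and each costs $O(\rho^2)$, including through the bias of $\hat B$. This sharper rate is what justifies your claim that a penalty excess of a single unit of $\lambda$ suffices; with the paper's rate you would need its $2K_1K_2$ threshold. Drop the phrase ``uniformly over labels.'' A union over labelings is neither needed nor affordable at the $\log n$ scale; the conditioning does all the work.

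Two points in Step~3 need repair.

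\emph{Test counts in data-dependent sets.} The underfitting loss counts test \emph{positions} in sets like $(\hat{\mathcal{G}}_{k_1}\cap\mathcal{G}_{l_1})\times\mathcal{H}_{l_2}$. Since $\hat c$ depends on the split, these counts are not Binomial. You need a union bound over all subsets $S\subset\mathcal{G}_l$, $T\subset\mathcal{H}_{l'}$ of the relevant sizes, as in Lemma~\ref{lemma_underfitindex1}. This is affordable because $|S||T|\gg n$ under Assumption~\ref{asm4}.

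\emph{The column block.} A column block of width $\gtrsim n_2/K_2$ is generally not contained in one estimated side-2 cluster, so ``any constant fit'' does not apply to it. The non-degrading bound still holds, for a different reason:
\begin{itemize}
\item For each fixed $j\in\mathcal{H}_{l_2}$, $\hat P_{ij}$ takes a single value over all $i\in\hat{\mathcal{G}}_{k_1}$, so it misses $B_{l_1l_2}$ or $B_{l_1'l_2}$ by at least $\rho d/2$.
\item Take the half of $\mathcal{H}_{l_2}$ on which the same one is missed and apply the uniform count. This gives excess loss $\gtrsim n_1n_2(1-w)\rho^2d^2/(K_1K_1'K_2)$ for every $K_2'$.
\item The symmetric argument for an underfitted side 2 uses a whole true side-1 community. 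It therefore needs no side-1 label accuracy, unlike Lemma~\ref{lemma_underfitindex2}.
\end{itemize}
If you skip this argument, your fallback split by $K_1'K_2'$ is the paper's device and suffices. For $K_1'K_2'\le K_1K_2$ the shrinking intersections still give excess loss $\gtrsim n_1n_2(1-w)\rho^2d^2/(K_1^2K_2^2)$. Condition 1 then handles the penalty deficit, which is at most $K_1K_2\lambda|\mathcal{E}^c|$, as in the paper.
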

	
	Here, the penalty factor $\lambda_n$ is allowed to be a random variable, enabling our method to handle unbounded candidate sets $\mathcal{K}_1$ and $\mathcal{K}_2$. Condition 1 controls the upper tail, while Conditions 2-3 impose lower bounds to ensure consistency.

	While Theorem \ref{theo1} provides a general framework for consistency, the roles of network sparsity $\rho_{n_1,n_2}$, size asymmetry $(n_1,n_2)$, and community counts $(K_1, K_2)$ are implicit. To clarify the ``identifiable region'', we now consider two representative regimes: the balanced case ($n_1 \asymp n_2$) and the highly imbalanced case ($n_1 = n_2^a, a > 1$). The following corollary makes the resulting sparsity thresholds and corresponding admissible growth rates of $K$ explicit.
	
	\begin{corollary}\label{cor1}
		Suppose $n_1, n_2 \to \infty$, and $\beta_{n_1,n_2}=\Omega(K/K_2)$. Under the conditions of Theorem \ref{theo1}, the following regimes arise:
		\begin{enumerate}
			\item Balanced case ($n_1 \asymp n_2$): the condition for $\rho_{n_1,n_2}$ says $\rho_{n_1,n_2}=\Omega(\log n/n)$. One the one hand, at the sparsity threshold $\rho_{n_1,n_2}\asymp\log n/n$, BCV is consistent provided $K_1\asymp K_2$ and $K = o((\log n)^{1/10})$. 
			\item Highly imbalanced case ($n_1 \sim n_2^a, a > 1$): the condition for $\rho_{n_1,n_2}$ says $\rho_{n_1,n_2}=\Omega(1/n_2)$. Moreover, under the sparsity threshold $\rho_{n_1,n_2}\asymp1/n_2$, the consistency of BCV requires $K_1$ and $K_2$ to be of constant order.  
		\end{enumerate}
	\end{corollary}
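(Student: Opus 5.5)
The plan is to treat Corollary \ref{cor1} as a bookkeeping exercise. We substitute the two regimes into Assumption \ref{asm4} and into the penalty requirements of Theorem \ref{theo1}, and read off the admissible sparsity and growth rates. Using $\beta_{n_1,n_2}=\Omega(K/K_2)$, every occurrence of $1/\beta$ becomes $O(K_2/K)$. The conditions to check are:
\begin{itemize}
\item spectral identifiability, $n\rho=\Omega((K_2/K)\log n)$;
\item underfitting dominance, $n_1n_2\rho^2\min\{1/\log n,1/(K_2\rho)\}\gg K^4K_2^6$;
\item model separation, $\min\{n_1,n_2\}\rho(K/K_2)\min\{1,1/(K\rho)\}\gg K^4K_2^4$;
\item compatibility of the penalty window: the lower thresholds in Conditions 2--3 of Theorem \ref{theo1} must lie below the upper threshold $\rho^2/(K^3K_2^4)$ in Condition 1, so that a (possibly deterministic) $\lambda$ exists.
\end{itemize}
For the last item it suffices that $\rho^2/(K^3K_2^4)$ dominates both $\rho K_2\max\{1,K\rho\}/(K\min\{n_1,n_2\})$ and $KK_2^2\log n/(n_1n_2)$. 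These two requirements reduce to exactly the same type of inequalities as the last two bullets, up to powers of $K,K_2$.

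\emph{Balanced case.} With $n_1\asymp n_2\asymp n$, the first bullet at fixed $K_2/K$ forces $\rho=\Omega(\log n/n)$, which is the stated threshold. Set $\rho\asymp\log n/n$ and $K_1\asymp K_2\asymp K$, so $K\rho\to0$ and the minima reduce to their first arguments.
\begin{itemize}
\item The second bullet becomes $n^2(\log n)^2n^{-2}/\log n=\log n\gg K^{10}$, i.e.\ $K=o((\log n)^{1/10})$.
\item The third bullet becomes $\log n\gg K^8$, which is implied.
\item The penalty window requires $\rho^2/K^7\gg \rho/n$ and $\rho^2/K^7\gg K^3\log n/n^2$. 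Both reduce to $\log n\gg K^{10}$.
\end{itemize}
So a choice such as $\lambda\asymp \rho^2/(K^7\omega_n)$, with a slowly diverging $\omega_n$, satisfies all three penalty conditions deterministically. The probability conditions are trivial, and the claim follows.

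\emph{Highly imbalanced case.} Take $n_1\sim n_2^a$ with $a>1$, so $n=n_1$ and $\min=n_2$. The third bullet at constant $K$'s requires $n_2\rho\min\{1,1/\rho\}\to\infty$, i.e.\ $\rho\gg1/n_2$ up to constants. This is the binding constraint, since the first bullet only needs $\rho\gtrsim\log n_1/n_1$, which is much weaker. This identifies the threshold $\rho=\Omega(1/n_2)$; the precise $\Omega$ versus $\omega$ form should be read in the sense of the statement. At $\rho\asymp1/n_2$ the quantity $n_2\rho\beta$ is $O(K/K_2)$, so the third bullet reads $O(1)\gg K^4K_2^4\cdot K_2/K$. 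This forces $K_1,K_2$ to stay bounded, and we then verify that the remaining bullets hold with constant $K$'s. For example, the second bullet gives $n_1n_2\rho^2/\log n\asymp n_2^{a-1}/\log n\to\infty$.

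The main obstacle is making the threshold statements match exactly: at $\rho\asymp 1/n_2$ the third bullet is only $O(1)$ rather than divergent. We would handle this by interpreting ``constant order'' together with any slowly diverging slack allowed by the general version in Supplement A.1, and by stating the necessity direction as the inequality chain above.
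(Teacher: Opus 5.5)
Your proposal is correct and is essentially the paper's argument. The paper's proof also substitutes $n_1\asymp n_2$ and $n_1\sim n_2^a$ into the three conditions of Assumption~\ref{asm4} with $\beta=\Omega(K/K_2)$. It obtains $K_1\asymp K_2$ and $K=o((\log n)^{1/10})$ at $\rho\asymp\log n/n$. In the imbalanced case it uses the third condition to get $n_2\rho\gg K^4K_2^3\ge 1$, and reads the ``threshold'' $\rho\asymp 1/n_2$ loosely as $\rho$ just above $1/n_2$ with bounded $K_1,K_2$ --- the same looseness you flag.

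Your check that the penalty window of Theorem~\ref{theo1} is nonempty is not in the paper's proof and is a harmless addition. One small slip there: after using $1/\beta=O(K_2/K)$, the Condition~2 lower threshold is $\rho K_2\max\{1,K\rho\}/\min\{n_1,n_2\}$, without your extra $1/K$. That window inequality therefore reduces to $\log n\gg K^8$ rather than $\log n\gg K^{10}$, which is still implied by your balanced-case condition.
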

	
	The assumption $\beta_{n_1, n_2}$ agrees with the upper bound discussed in Remark \ref{remark2}. 
	The contrast between the balanced and asymmetric regimes highlights the unique challenges of bipartite model selection. In the balanced case, the sparsity condition for $\rho_{n_1,n_2}$ matches the optimal rate (up to constants) known for unipartite SBMs and allows $(K_1,K_2)$ to grows slowly with $n$, in a logarithmic scale. 
	
	By comparison, existing unipartite results that accommodate a growing number of communities typically require stronger sparsity conditions. For instance, \cite{lei2016goodness} and \cite{hu2021using} assume relatively dense regimes with $\rho$ bounded away from zero, while \cite{wu2024eigengap} requires $\rho \gg n^{-1/3}$. Under a comparable regime $\rho \gg n^{-1/3}$, our result allows $K = O(n^{1/12})$, which is less restrictive than $K=O(n^{1/12-\psi})$ for $\psi>0$ in \cite{wu2024eigengap}.
	
	In the highly imbalanced regime $n_1\sim n_2^a$, Corollary 1 implies
	$$\rho_{n_1,n_2}\gg n_2^{-1}\asymp n_1^{-1/a}.$$
	This requirement can be interpreted as ensuring that the smaller side of the bipartite graph receives sufficient signal strength to distinguish its true communities. Compared with the balanced regime where the sparsity threshold is typically of order \(\log n / n\), the condition here is stricter, reflecting the fact that the side with fewer nodes would otherwise be overwhelmed by the scale of the larger partition. Moreover, due to the new difficulties brought by imbalance data sizes, $(K_1,K_2)$ need to be of constant order to compensate for the information loss on the smaller side.
	
	\subsection{Data-Driven Choice of Penalty Factor}\label{subsec:penalty}
	
	To make use of the theoretical results for practice, we now provide a concrete construction of a data-driven stochastic penalty $\lambda_{n_1, n_2}$.  
	Theorem \ref{theo1} characterizes admissible ranges of $\lambda_{n_1,n_2}$ ensuring model selection consistency, but these depend on unknown quantities such as the sparsity level $\rho_{n_1,n_2}$, the incoherence parameter $\beta_{n_1,n_2}$, and $(K_1,K_2)$, and are therefore not directly implementable. In many applications, researchers can specify conservative upper bounds, $(\bar{K}_1, \bar{K}_2)$, which represent the maximum latent resolution of interest. In this subsection, we focus on the case that such upper bounds $\bar{K}_1$ and $\bar{K}_2$ are relatively small compared to $n_1$ and $n_2$ such that they satisfies the scaling conditions in Assumption \ref{asm4}, when choosing the data-driven $\lambda_{n_1, n_2}$.
	
	In this case, the admissible range of $\lambda_{n_1,n_2}$ can be expressed in terms of the observable quantities up to constant factors. 
	Moreover, as discussed in Remark \ref{remark2}, we shall assume that $\bar{K}_1$ and $\bar{K}_2$ are of the same growth order, so that $\beta_{n_1,n_2}$ is of constant order. 
	
	Compared with the unipartite setting, an additional difficulty arises in the bipartite case, as the number of candidate models now grows over a two–dimensional grid $(K_1,K_2)$ rather than a one–dimensional sequence. 
	As a result, a substantially larger proportion of the candidate models are high–complexity models with large values of $K_1K_2$.
	This amplifies sensitivity to the choice of the penalty level: overly large penalties induce systematic underfitting, while insufficient penalties fail to control model complexity.
	Therefore, accurate calibrating of $\lambda_{n_1,n_2}$ is particularly crucial in the bipartite case. 
	
	This leads to a practical guideline for choosing the penalty level.
	
	\begin{corollary}
		Assume that the true community numbers satisfy $K_1 \leq \bar K_1$ and $K_2 \leq \bar K_2$, where the known upper bounds $\bar K_1$ and $\bar K_2$ may grow with the network size and are of the same order. 
		Suppose further that Assumptions \ref{asm1}--\ref{asm3} hold, the scaling conditions in Assumption \ref{asm4} are satisfied with $(K_1,K_2)$ replaced by $(\bar K_1,\bar K_2)$, and that $\beta_{n_1,n_2}=\Theta(1)$. 
		Without loss of generality, assume $\bar K_1 \leq \bar K_2$. Let $\hat{\rho}_{n_1,n_2}=\sum_{i,j}A_{ij}/n_1n_2$. Then, the choice
		\begin{equation}
			\lambda_{n_1, n_2}=C\frac{\hat{\rho}_{n_1,n_2}}{\bar{K}_1^{3/2}\bar{K}_2^2}\sqrt{\max\left\{\frac{\hat{\rho}_{n_1,n_2}\bar{K}_1}{\nmin}\max\{1,\bar{K}_1\hat{\rho}_{n_1,n_2}\},\frac{\bar{K}_1\bar{K}_2^2\log n}{n_1n_2}\right\}}\label{eqn:penalty}
		\end{equation}
		satisfies the penalty conditions in Theorem \ref{theo1}. In particular, if the two upper bounds $\bar{K}_1$ and $\bar{K}_2$ are bounded, then
		\begin{equation*}
			\lambda_{n_1, n_2}=C\frac{\hat{\rho}_{n_1,n_2}^{3/2}}{\sqrt{\nmin}}
		\end{equation*}
		is a proper choice of the penalty factor for consistent model selection.
	\end{corollary}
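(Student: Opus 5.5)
The plan is to exploit the geometric-mean structure of \eqref{eqn:penalty}. Up to the random factor $\hat\rho_{n_1,n_2}$, $\lambda_{n_1,n_2}$ is roughly the geometric mean of the upper threshold in condition 1 of Theorem \ref{theo1} (with $(K,K_2)$ replaced by $(\bar K_1,\bar K_2)$) and the larger of the two lower thresholds in conditions 2--3. Assumption \ref{asm4}, applied with $(\bar K_1,\bar K_2)$, should say exactly that this lower threshold is of smaller order than the upper one. Then $\lambda_{n_1,n_2}$ sits strictly between them, and its ratio to each threshold is a quantity diverging to infinity.

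\textbf{Step 1 (concentration of $\hat\rho_{n_1,n_2}$).} $\sum_{i,j}A_{ij}$ is a sum of independent Bernoulli variables with mean $n_1n_2\bar p$, where $C\rho_{n_1,n_2}\le \bar p\le \rho_{n_1,n_2}$ by Assumption \ref{asm3}(i). Bernstein's inequality gives $\hat\rho_{n_1,n_2}/\bar p\in[1/2,2]$ with probability at least $1-2\exp(-c\,n_1n_2\rho_{n_1,n_2})$. Assumption \ref{asm4}(i) gives $n\rho_{n_1,n_2}\gtrsim \log n$, so $n_1n_2\rho_{n_1,n_2}\gtrsim \nmin\log n$. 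Hence this failure probability is $o(n^{-2})$, and even $o\bigl(n^{-2}/(\bar K_1^2\bar K_2)\bigr)$ since $\bar K_r\le n$. On this event $\mathcal G$ we may replace $\hat\rho_{n_1,n_2}$ by $\rho_{n_1,n_2}$ up to absolute constants. All three probability conditions of Theorem \ref{theo1} then reduce to deterministic inequalities on $\mathcal G$, plus the negligible mass of $\mathcal G^c$. For the last claim, note $\max\{1,\hat\rho\bar K_1\}$ is monotone, so it too is bounded by constant multiples of its population version.

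\textbf{Step 2 (upper tail, condition 1).} Write
$$x_1=\frac{\rho\bar K_1}{\nmin}\max\{1,\bar K_1\rho\},\qquad x_2=\frac{\bar K_1\bar K_2^2\log n}{n_1n_2},\qquad u=\frac{\rho}{\bar K_1^{3/2}\bar K_2^2}.$$
On $\mathcal G$ we have $\lambda\asymp u\sqrt{\max\{x_1,x_2\}}$. Since $K\le K_1\le\bar K_1$ and $K_2\le\bar K_2$, the threshold satisfies $\rho^2/(K^3K_2^4)\ge u^2$, so it suffices to show $\max\{x_1,x_2\}\ll u^2$.
\begin{itemize}
\item $x_1\ll u^2$ is equivalent to $\nmin\,\rho\min\{1,1/(\bar K_1\rho)\}\gg \bar K_1^4\bar K_2^4$. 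This is Assumption \ref{asm4}(iii) with $\beta$ constant.
\item $x_2\ll u^2$ is equivalent to $n_1n_2\rho^2/\log n\gg \bar K_1^4\bar K_2^6$. This is part of Assumption \ref{asm4}(ii).
\end{itemize}
Thus $\lambda\le \rho^2/(K^3K_2^4)$ on $\mathcal G$ for large $n$. By Step 1, $K^2K_2\Pr(\mathcal G^c)\to0$.

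\textbf{Step 3 (lower tails, conditions 2--3).} Because $K\le\bar K_1$ and $\beta=\Theta(1)$, the lower target in condition 2 is at most $C_1'x_1$, and that in condition 3 is at most $C_2'x_2$. On $\mathcal G$,
$$\lambda\gtrsim u\sqrt{x_i}=x_i\sqrt{u^2/x_i}\qquad (i=1,2).$$
By Step 2, $\sqrt{u^2/x_i}\to\infty$, so $\lambda$ eventually exceeds $C x_i$ for any fixed constant. The failure probability is then at most $\Pr(\mathcal G^c)\ll n^{-2}$.

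\textbf{Bounded case and main obstacle.} When $\bar K_1,\bar K_2$ are bounded, $x_2/x_1\asymp \nmin\log n/(n_1n_2\rho)\lesssim \log n/(n\rho)=O(1)$ by Assumption \ref{asm4}(i), and $\max\{1,\bar K_1\rho\}\asymp1$. Formula \eqref{eqn:penalty} therefore collapses, up to constants, to $C\hat\rho^{3/2}/\sqrt{\nmin}$, and Steps 2--3 apply unchanged.

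The main obstacle is the bookkeeping in Step 1. The randomness of $\hat\rho$ enters inside the nested $\max$, and its failure probability must beat both the $1/n^2$ rate and the polynomial factor $K^2K_2$. Combined with the monotone replacement of the true $(K,K_2)$ by $(\bar K_1,\bar K_2)$, which must go in the favourable direction in each of the three inequalities, this is where I would be most careful.
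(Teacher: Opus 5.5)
Your proposal is correct and follows essentially the same route as the paper's proof. Both arguments concentrate $\hat\rho_{n_1,n_2}$ within constant factors of $\rho_{n_1,n_2}$ via Bernstein, replace $(K,K_2)$ by $(\bar K_1,\bar K_2)$ monotonically, and reduce the three penalty conditions to the inequalities $\nmin\rho\min\{1,1/(\bar K_1\rho)\}\gg\bar K_1^4\bar K_2^4$ and $n_1n_2\rho^2/\log n\gg\bar K_1^4\bar K_2^6$ from Assumption \ref{asm4}(ii)--(iii), with the bounded case settled by Assumption \ref{asm4}(i); your failure probability $2\exp(-c\,n_1n_2\rho)$ is if anything slightly cleaner than the paper's $2n^{-3}$, since it absorbs the $\bar K_1^2\bar K_2$ multiplier in condition 1 without further comment.
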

	
	The choice of $\lambda_{n_1, n_2}$ balances the upper and lower bounds in Theorem \ref{theo1}. Specifically, the proposed form corresponds to the geometric mean of the upper bound and the leading rate of the lower bound, ensuring sufficient penalization of overfitting while preserving the true signal.
	
	While this corollary provides an implementable rule under a pre-specified bound $(\bar{K}_1, \bar{K}_2)$, extending fully data-driven calibration to the general growing-$K$ regime remains a non-trivial challenge. 
	In that setting, admissible range of $\lambda_{n_1, n_2}$ depends on latent community sizes and incoherence, suggesting that a fully adaptive tuning rule might necessitate iterative pilot estimations or multi-stage calibration procedures. We leave the development of such adaptive schemes for future work.
	
	\section{Simulation Studies}\label{sec:simulation}

	As discussed before, the relative growth of the two node sets affects the sparsity required for consistent model selection. In practice, bipartite network datasets exhibit diverse scaling patterns. For instance, the well-known Southern women dataset \citep{davis1941deep} has comparable sizes on both sides, whereas the Cosponsorship dataset studied in \cite{lo2025senate} (99 senators and 2631 bills) is highly imbalanced. These differences in scale motivate an evaluation of our method across varying regimes of $(n_1,n_2)$.
	In this section, we mainly focus on two scenarios: (i) $n_1$ and $n_2$ are of the same order, and (ii) $n_2$ grows polynomially with $n_1$. The training proportion is set to $w = 0.9$, and BCV is implemented via a single 10-fold split. For the penalty, empirical performance is stable with $C=0.01$, and we fix $\lambda_{n_1, n_2}=0.01\hat{\rho}_{n_1,n_2}^{3/2}/\sqrt{\nmin}$ in all numerical experiments.

	\subsection{Case I: Balanced Growth}\label{subsec:balance}
	
	In this subsection, we study the balanced growth regime and evaluate BCV method under varying network structures and community configurations. Specifically, we consider three settings with different community sizes and imbalance levels. 
	In all case, the community-wise edge probability matrix is set to $B = r B_0$ with $r \in \{0.05, 0.1\}$, and the numbers of nodes on the two sides are $n_l = K_l n_0$, where $n_0 \in \{100,200,300,400,500\}$ under the balanced setting and $n_0 \in \{200,300,400,500,600\}$ under the imbalanced setting.	
	\vspace{5pt}\\
	\textit{Setting 1:} $(K_1,K_2)=(3,3)$. The diagonal entries of $B_0$ are independently drawn from uniform distribution $\mathcal{U}(0.7,1)$, and the off-diagonal entries from $\mathcal{U}(0.1,0.3)$. Community labels $c_1$ and $c_2$ are generated from multinomial distributions $\mathcal{M}(n, \Pi_1)$ and $\mathcal{M}(n, \Pi_2)$, with (i) balanced proportions $\Pi_1=\Pi_2=(1/3,1/3,1/3)$, and (ii) imbalanced proportions $\Pi_1=\Pi_2=(1/6,1/3,1/2)$.\vspace{5pt}\\
	\textit{Setting 2:} $(K_1,K_2)=(3,4)$. The construction of $B_0$ and the sampling mechanism are identical to Setting~1, except that (i) for the balanced case, $\Pi_1=(1/K_1,\dots,1/K_1)$ and $\Pi_2=(1/K_2,\dots,1/K_2)$; (ii) for the imbalanced case, $\Pi_1=(1/2,1/3,1/6)$ and $\Pi_2=(3/8,1/4,1/4,1/8)$.\vspace{5pt}\\
	\textit{Setting 3:} $(K_1,K_2)=(10,14)$. This setting examines performance in larger models. The diagonal entries of $B_0$ are set to 1; the first three entries of the 12th column, the 4th--6th of the 13th column, and the last four of the 14th column are set to~0.75; and the remaining entries to~0.25. For the balanced case, $\Pi_1=(1/K_1,\dots,1/K_1)$ and $\Pi_2=(1/K_2,\dots,1/K_2)$; for the imbalanced case, 
	$\Pi_1=(1/15,1/15,1/15,1/15,1/10,1/10,2/15,2/15,2/15,2/15)$ and \\
	$\Pi_2=(1/21,1/21,1/21,1/21,1/21,1/14,1/14,1/14,1/14,2/21,2/21,2/21,2/21,2/21)$.\vspace{5pt}

	To alleviate the computational burden of exhaustive search over all candidate pairs $(K_1,K_2)$, we adopt an adaptive strategy. Starting from $(1,1)$, we expand the candidate set to include all pairs satisfying $K_1 \leq k$ and $K_2 \leq k$ at step $k$, and evaluate the corresponding penalized loss. 
	The search stops if no improvement is observed for a pre-specified number of consecutive steps (which is called the patience parameter).
	This adaptive procedure greatly reduces the number of models evaluated while ensuring that the true model $(K_1,K_2)$ is visited with high probability, when the patience parameter is sufficiently large (Theorem \ref{theo1}). 
	Thus, this strategy offers an efficient and reliable alternative to full two-dimensional grid search. In practice, a patience value of 3 performs well.

	We compare BCV with two well-established alternatives.
	The first is the bimodularity method proposed by \citet{barber2007bimod}, which is the earliest community detection approach specifically designed for bipartite networks, jointly clustering both node sets.
	For a fair comparison under our setting, we separate the nodes into the two corresponding parts and count the number of communities detected in each part.
	The second is the projection-based method \citep{Alzahrani2016}, which projects the bipartite graph onto each side to obtain two unipartite graphs and then applies modularity-based clustering.
	We do not include Bayesian approaches such as \citet{larremore2020bayes} due to their computational cost at our data scale. 
	All results are averaged over 100 repetitions and reported in Table \ref{simulation:balanced_growth}. 
	Each entry reports the recovery rates on the left and right node sets, respectively.
	
	\begin{table}[htbp]
		\centering\renewcommand\arraystretch{1}{\scriptsize
			\caption{Simulation result under balanced growth.}\label{simulation:balanced_growth}
			\setlength{\tabcolsep}{5pt}{
				\begin{tabular}{ccccccc}
					\hline
					\cline{1-7}
					
					$(K_1,K_2)$ & Balance & $r$ & $n_0$ & BCV & Projection & Bimodularity\\
					\hline
					(3,3) & Balanced & 0.05 & 100 & (0,0) & (0,0) & (0,0)\\
					& & & 200 & (0.80,0.80) & (0.94,0.94) & (0.31,0.31)\\
					& & & 300 & (1.00,1.00) & (0.99,0.99) & (0.94,0.94)\\
					& & & 400 & (1.00,1.00) & (1.00,1.00) & (0.98,0.98)\\
					& & & 500 & (1.00,1.00) & (1.00,1.00) & (1.00,1,00)\\
					& & 0.1 & 100 & (0.79,0.79) & (0.94,0.96) & (0.42,0.42)\\
					& & & 200 & (1.00,1.00) & (1.00,1.00) & (1.00,1.00)\\
					& & & 300 & (1.00,1.00) & (1.00,1.00) & (1.00,1.00)\\
					& & & 400 & (1.00,1.00) & (1.00,1.00) & (1.00,1.00)\\
					& & & 500 & (1.00,1.00) & (1.00,1.00) & (1.00,1.00)\\
					& Imbalanced & 0.05 & 200 & (0.02,0.01) & (0.43,0.39) & (0,0)\\
					& & & 300 & (0.26,0.13) & (0.70,0.63) & (0.09,0.09)\\
					& & & 400 & (0.45,0.30) & (0.63,0.68) & (0.52,0.52)\\
					& & & 500 & (0.62,0.51) & (0.53,0.56) & (0.88,0.88)\\
					& & & 600 & (0.63,0.69) & (0.54,0.55) & (0.97,0.97)\\
					& & 0.1 & 200 & (0.44,0.40) & (0.58,0.58) & (0.61,0.61)\\
					& & & 300 & (0.68,0.68) & (0.53,0.50) & (0.95,0.95)\\
					& & & 400 & (0.87,0.82) & (0.50,0.53) & (0.99,0.99)\\
					& & & 500 & (0.92,0.92) & (0.52,0.54) & (0.99,0.99)\\
					& & & 600 & (0.99,0.99) & (0.52,0.52) & (0.99,0.99)\\
					\hline
					(3,4) & Balanced & 0.05 & 100 & (0,0) & (0.02,0) & (0,0)\\
					& & & 200 & (0.80,0.16) & (0.93,0.30) & (0.11,0.31)\\
					& & & 300 & (1.00,0.68) & (1.00,0.08) & (0.49,0.32)\\
					& & & 400 & (1.00,0.90) & (1.00,0.02) & (0.69,0.25)\\
					& & & 500 & (1.00,0.98) & (1.00,0) & (0.82,0.14)\\
					& & 0.1 & 100 & (0.81,0.26) & (0.96,0.28) & (0.20,0,49)\\
					& & & 200 & (1.00,0.85) & (1.00,0.02) & (0.78,0.19)\\
					& & & 300 & (1.00,0.99) & (1.00,0) & (0.96,0.04)\\
					& & & 400 & (1.00,1.00) & (0.99,0)  & (1.00,0)\\
					& & & 500 & (1.00,1.00) & (1.00,0) & (0.99,0.01)\\
					& Imbalanced & 0.05 & 200 & (0.08,0.10) & (0.63,0.25) & (0,0.16)\\
					& & & 300 & (0.49,0.06) & (0.76,0.04) & (0.20,0.51)\\
					& & & 400 & (0.96,0.42) & (0.66,0.01) & (0.59,0.35)\\
					& & & 500 & (0.93,0.64) & (0.69,0) & (0.80,0.13)\\
					& & & 600 & (0.99,0.79) & (0.68,0) & (0.94,0.05)\\
					& & 0.1 & 200 & (0.84,0.38) & (0.69,0) &  (0.80,0.17)\\
					& & & 300 & (0.99,0.78) & (0.65,0) & (0.98,0.01)\\
					& & & 400 & (1.00,0.92) & (0.68,0) & (0.98,0.01)\\
					& & & 500 & (1.00,0.99) & (0.67,0) & (0.99,0)\\
					& & & 600 & (1.00,1.00) & (0.69,0) & (0.99,0)\\
					\hline
					(10,14) & Balanced & 0.05 & 100 & (0,0) & (0.01,0) & (0,0)\\
					& & & 200 & (0,0) & (0,0) & (0,0)\\
					& & & 300 & (0,0) & (0,0) & (0,0)\\
					& & & 400 & (0.88,0.88) & (0,0) & (0,0)\\
					& & & 500 & (0.99,0.99) & (0,0) & (0,0)\\
					& & 0.1 & 100 & (0,0) & (0,0) & (0,0)\\
					& & & 200 & (0.89,0.89) & (0,0) & (0,0)\\
					& & & 300 & (1.00,1.00) & (0,0) & (0,0)\\
					& & & 400 & (1.00,1.00) & (0,0) & (0,0)\\
					& & & 500 & (1.00,1.00) & (0,0) & (0,0)\\
					& Imbalanced & 0.05 & 200 & (0,0) & (0,0) & (0,0)\\
					& & & 300 & (0,0) & (0,0) & (0,0)\\
					& & & 400 & (0,0) & (0,0) & (0,0)\\
					& & & 500 & (0,0) & (0,0) & (0,0)\\
					& & & 600 & (0,0) & (0,0) & (0,0)\\
					& & 0.1 & 200 & (0,0) & (0,0) & (0,0)\\
					& & & 300 & (0,0) & (0,0) & (0,0)\\
					& & & 400 & (0.42,0.42) & (0,0) & (0,0)\\
					& & & 500 & (1.00,0.99) & (0,0) & (0,0)\\
					& & & 600 & (1.00,1.00) & (0,0) & (0,0)\\
					\hline
					\cline{1-7}	
				\end{tabular}
		}}
	\end{table}
	Table \ref{simulation:balanced_growth} shows that the proposed BCV method consistently outperforms the competing methods across all scenarios.
	In Setting 1, all three methods converge under balanced conditions, whereas the projection method fails in the imbalanced case, possibly due to information loss from projection.
	In Setting 2, despite asymmetric community numbers, our BCV method remains stable. Both the projection and bimodularity methods recover the smaller-side community structure reasonably well in the balanced case, but perform poorly on the other side.
	In the more challenging Setting 3, the BCV method maintains strong performance except under sparse and imbalanced conditions. 
	By contrast, both alternatives consistently fail to identify the correct numbers of communities.
	
	\subsection{Case II: Polynomial growth}\label{subsec:poly}
	
	To empirically validate the discussion in Section \ref{sec:analysis}, we evaluate the BCV method in a polynomially imbalanced regime.
	We focus on the representative case $n_{2}=n_{1}^{3/2}$, which introduces moderate imbalance. 
	We consider two size regimes: $n_{1}\in\{100,200,300,400\}$ under balanced community proportions, and $n_{1}\in\{200,400,600,800\}$ otherwise. Edges are generated from a bipartite SBM with $B = rB_{0}$, where $r$ controls the sparsity level. 
	Guided by theory, we adopt lower sparsity levels than in the balanced-growth setting to ensure sufficient signal strength. 
	We compare BCV with the projection and bimodularity methods under two settings.
	Results are averaged over 100 repetitions and reported in Table~\ref{simulation:polynomial_growth}.\vspace{5pt}\\

	\noindent\textit{Setting 1:} We adopt the same configuration of $(K_{1},K_{2})$, $B_{0}$, and community proportions as in Setting 1 of Section~\ref{subsec:balance}, but restrict $r \in \{0.1,0.2\}$ to reflect the stricter sparsity requirements under imbalance, as discussed in the preceding analysis.\vspace{5pt}\\
	\textit{Setting 2:} We consider a more heterogeneous case with $(K_{1},K_{2})=(3,6)$ and
	\[
	B_0 =
	\begin{bmatrix}
		1 & 0.25 & 0.25 & 0.75 & 0.75 & 0.25 \\
		0.25 & 1 & 0.25 & 0.75 & 0.25 & 0.75 \\
		0.25 & 0.25 & 1 & 0.25 & 0.75 & 0.75 
	\end{bmatrix}.
	\]
	Community labels $c_{1}$ and $c_{2}$ are drawn from multinomial distributions
	$\mathcal{M}(n_{1},\Pi_{1})$ and $\mathcal{M}(n_{2},\Pi_{2})$, with balanced proportions 
	$\Pi_{1}=(1/3,1/3,1/3)$, $\Pi_{2}=(1/6,\dots,1/6)$ and imbalanced proportions 
	$\Pi_{1}=(1/2,1/3,1/6)$, $\Pi_{2}=(1/4,1/4,1/6,1/6,1/12,1/12)$. 
	We again use $r \in \{0.1,0.2\}$.
	
	\begin{table}[htbp]
		\centering\renewcommand\arraystretch{1}{\scriptsize
			\caption{Simulation result under polynomial growth.}\label{simulation:polynomial_growth}
			\setlength{\tabcolsep}{5pt}{
				\begin{tabular}{ccccccc}
					\hline
					\cline{1-7}
					
					$(K_1,K_2)$ & Balance & $r$ & $n_1$ & BCV & Projection & Bimodularity\\
					\hline
					(3,3) & Balanced & 0.1 & 100 & (0.46,0.45) &(0.99,0) &(0.05,0.05) \\
					& & & 200 &(1.00,1.00) &(1.00,0.46) &(1.00,1.00) \\
					& & & 300 &(1.00,1.00) &(1.00,0.97) &(1.00,1.00) \\
					& & & 400 &(1.00,1.00) &(1.00,0.98) &(1.00,1.00) \\
					& & 0.2 & 100 &(1.00,1.00) &(1.00,0.77) &(1.00,1.00) \\
					& & & 200 &(1.00,1.00) &(1.00,1.00) &(1.00,1.00) \\
					& & & 300 &(1.00,1.00) &(1.00,1.00) &(1.00,1.00) \\
					& & & 400 &(1.00,1.00) &(1.00,1.00) &(1.00,1.00) \\
					& Unbalanced & 0.1 & 200 &(0.88,0.16) &(0.57,0.31) &(0.32,0.32) \\
					& & & 400 &(0.98,0.86) &(0.50,0.69) &(0.99,0.99) \\
					& & & 600 &(1.00,1.00) &(0.50,0.57) &(0.99,0.99) \\
					& & & 800 &(1.00,1.00) &(0.49,0.53) &(0.99,0.99) \\
					& & 0.2 & 200 &(0.95,0.79) &(0.55,0.60) &(0.99,0.99) \\
					& & & 400 &(1.00,1.00) &(0.51,0.53) &(0.99,0.99) \\
					& & & 600 &(1.00,1.00) &(0.50,0.51) &(0.99,0.99) \\
					& & & 800 &(1.00,1.00) &(0.49,0.49) &(0.99,0.99) \\
					\hline
					(3,6) & Balanced & 0.1 & 100 &(0,0) &(1.00,0) &(0,0) \\
					& & & 200 &(1.00,0) &(1.00,0) &(0.97,0) \\
					& & & 300 &(1.00,0.53) &(1.00,0) &(1.00,0) \\
					& & & 400 &(1.00,1.00) &(1.00,0) &(1.00,0) \\
					& & 0.2 & 100 &(1.00,0.02) &(1.00,0) &(0.97,0) \\
					& & & 200 &(1.00,0.99) &(1.00,0) &(1.00,0) \\
					& & & 300 &(1.00,1.00) &(1.00,0) &(1.00,0) \\
					& & & 400 &(1.00,1.00) &(1.00,0) &(1.00,0) \\
					& Unbalanced & 0.1 & 200 &(0.85,0) &(0.18,0) & (0.09,0)\\
					& & & 400 &(1.00,0.03) &(0.16,0) &(1.00,0) \\
					& & & 600 &(1.00,0.44) &(0.08,0) &(1.00,0) \\
					& & & 800 &(1.00,1.00) &(0.05,0) &(1.00,0) \\
					& & 0.2 & 200 &(0.99,0.05) &(0.18,0) &(1.00,0) \\
					& & & 400 &(1.00,1.00) &(0.17,0) &(1.00,0) \\
					& & & 600 &(1.00,1.00) &(0.08,0) &(0.99,0) \\
					& & & 800 &(1.00,1.00) &(0.06,0) &(0.97,0) \\
					\hline
					\cline{1-7}
				\end{tabular}
		}}
	\end{table}
	
	Table~\ref{simulation:polynomial_growth} shows that BCV remains robust under polynomially imbalanced. In the baseline configuration $(K_{1},K_{2})=(3,3)$, BCV achieves accurate recovery under both balanced and imbalanced community proportions, with performance improving steadily as $n_1$ increases. 
	In the more challenging setting $(K_{1},K_{2})=(3,6)$, BCV continues to improve with increasing sample size, even under imbalance.
	By contrast, the Projection method degrades markedly in the imbalanced regime and is sensitive to sparsity. 
	The Bimodularity method performs reasonably well only when both sides have equal community sizes, but deteriorates substantially in the heterogeneous case.
	
	These findings align with our theoretical analysis: imbalance weakens effective signal strength, yet BCV remains stable when sparsity is properly controlled.
	
	\section{Real Data Analysis}\label{sec:realdata}
	In this section, we apply BCV to the ``Southern women'' network \citep{davis1941deep} and the U.S. senate cosponsorship network \citep{fowler2007cosponsor,lo2025senate} to assess empirical performance.

	\subsection{``Southern Women'' Network}
	The ``Southern women'' dataset is a standard benchmark for bipartite community detection \citep{Alzahrani2016}. It consists of 18 women and 14 social events, with edges indicating attendance at a particular event.
	Various clustering results have been reported in the literature. For example, using modularity-based approaches, \citet{barber2007bimod} and \citet{liu2010community} identify four modules, while \citet{Guimera2007module} finds two. From a Bayesian perspective, \citet{larremore2020bayes} suggests a single community per side.

	We apply BCV with tuning parameters selected as in Section~\ref{subsec:balance}. Figure~\ref{fig:south_women_mse} reports the penalized CV loss (also referred to MSE) over candidate models. Specifically, panel (a) shows the full two-dimensional search, and panel (b) fixes two communities on the women side. The penalized loss heatmap indicates a clear preference for two clusters among women, whereas the event side attains its minimum at 3, but in a less pronounced fashion.
	
	Figure \ref{fig:south_women}(a) visualizes the detected structure from spectral clustering with the numbers of clusters selected by our method, where triangles denote women, circles represent events, and colors indicate community membership on each side. For comparison, Figure \ref{fig:south_women}(b) displays the result of \citet{barber2007bimod}, in which black nodes denote events and white nodes denote women, and shapes indicate different modules.

	\begin{figure}[htbp]
		\centering
		\subfloat[Heatmap of penalized CV loss.]{
			\includegraphics[width =7.7cm]{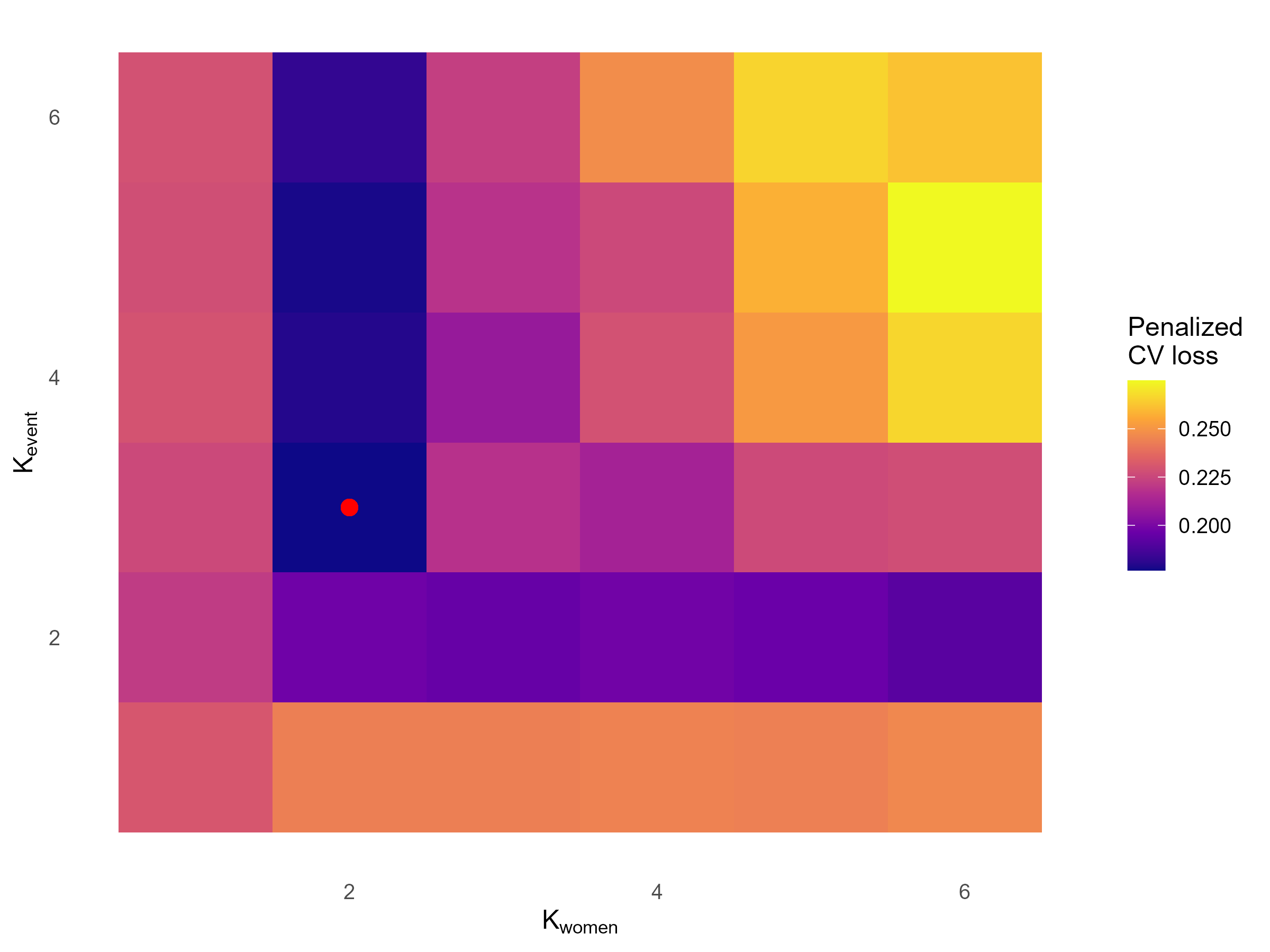}}
		\subfloat[Penalized CV loss on the events side conditioned on $K_{\mathrm{women}}=2$.]{
			\includegraphics[width =7.3cm]{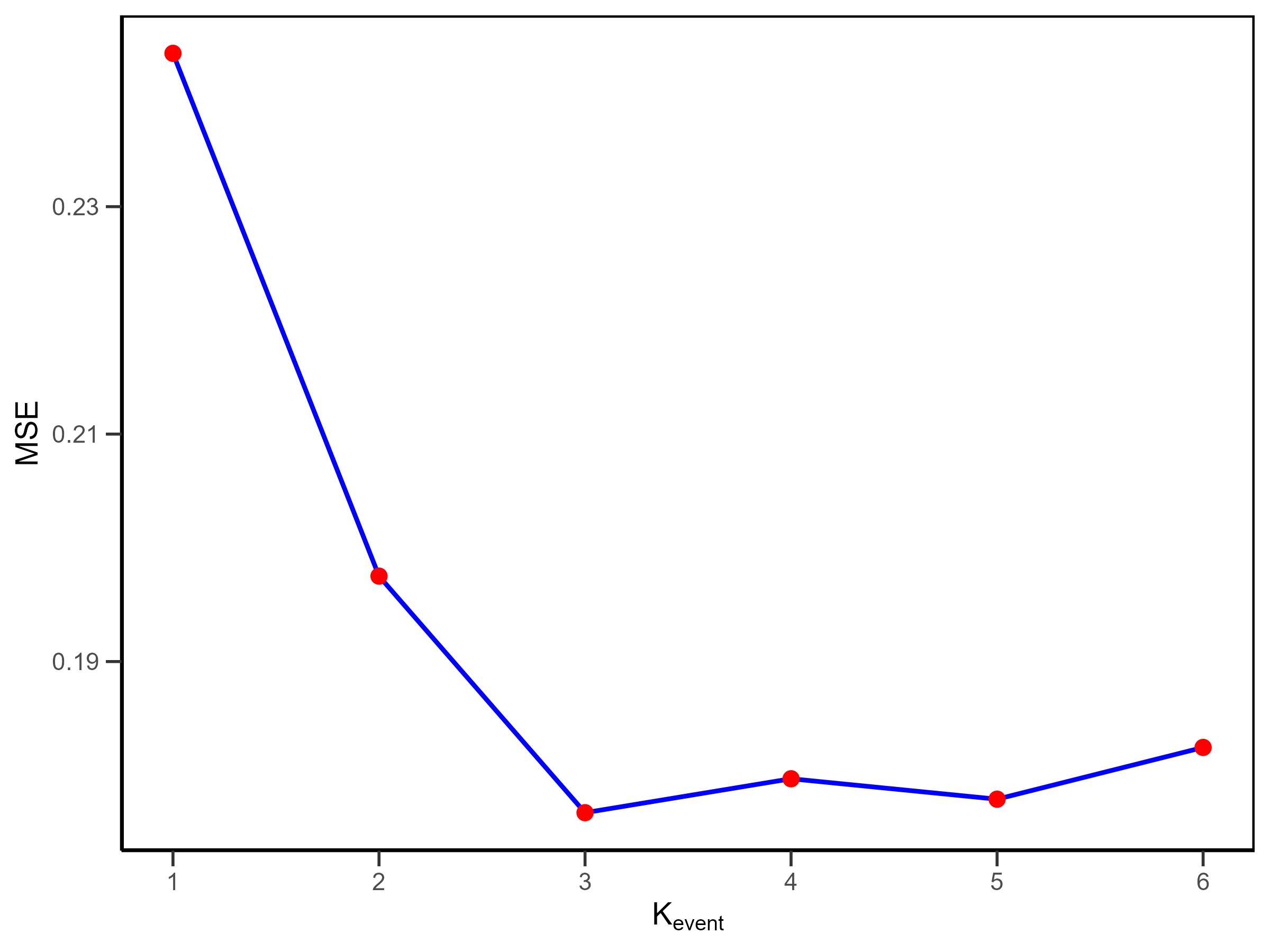}}\\
		\centering
		\caption{Results of Southern Women network by 10-fold BCV algorithm.}\label{fig:south_women_mse}
	\end{figure}

	\begin{figure}[htbp]
		\centering
		\subfloat[Clustering result of BCV.]{
			\includegraphics[width =7.5cm]{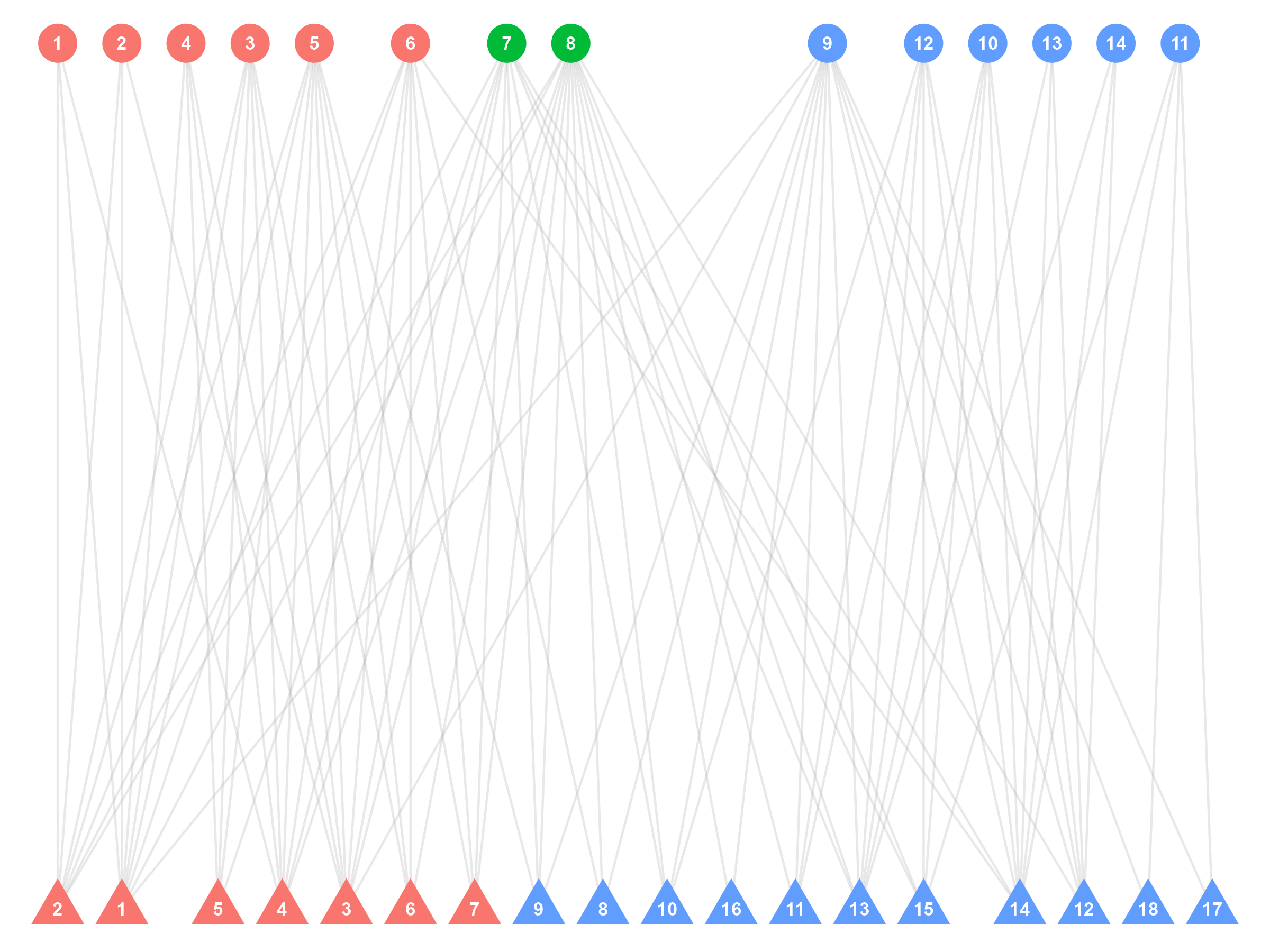}}\label{fig:south_women_1}
		\subfloat[Clustering result of bimodularity method.]{
			\includegraphics[width =7.5cm]{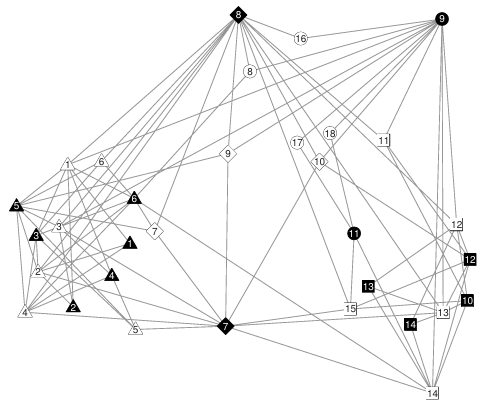}}\label{fig:south_women_2}\\
		\centering
		\caption{Estimated community structure for Southern women network.}\label{fig:south_women}
	\end{figure}
	
	In Figure~\ref{fig:south_women}(a), each community exhibits distinct structure. Most red circles (events) connect primarily to red triangles (women), and most blue events to blue women, while two green events link both groups of women. 
	The two-cluster division among women closely matches the ethnographic findings of \citet{davis1941deep}, and the three-way split among events—where one group bridges the two women communities—is consistent with the observation in \citet{DOREIAN200429}. 
	These ``bridging events'' are sociologically meaningful, as they connect otherwise separate circles. Unlike modularity-based methods (e.g., Figure~\ref{fig:south_women}(b)), which tend to absorb such events into a major module, our bipartite treatment highlights them as a distinct group, revealing their integrative role within the social structure.

	\subsection{Cosponsorship Network}
	We study the cosponsorship network between legislators and bills in the 107th U.S. Senate. The data were originally compiled by \citet{fowler2007cosponsor} and later processed and analyzed by \citet{lo2025senate}. This widely used dataset is publicly available on the Harvard Dataverse and CodeOcean. We use the CodeOcean version, which contains both dyadic sponsorship data and committee-level bill annotations.
	After removing the single senator with no (co)sponsorship activity, the resulting bipartite network consists of $99$ senators and $2631$ bills, with edges indicating sponsorship or cosponsorship. Party affiliation (Democrat or Republican) is available for all senators and serves as a natural external benchmark on the legislator side.

	Applying BCV with the same tuning procedure as in previous experiments, we jointly select the numbers of communities on both sides of the bipartite network. The resulting penalized loss surface over the grid of $(K_{\text{sen}}, K_{\text{bill}})$ is visualized in Figure~\ref{fig:cosponsorship_mse}(a) as a heatmap.

	On the senator side, the loss surface exhibits a relatively clear minimum at $K_{\text{sen}} = 2$. In particular, when compared across different values of $K_{\text{sen}}$, the row corresponding to $K_{\text{sen}} = 2$ consistently attains lower loss values than neighboring choices such as $K_{\text{sen}} = 3$, indicating a well-separated choice.
	In contrast, conditional on $K_{\text{sen}} = 2$, the loss surface along the bill dimension is relatively flat for $K_{\text{bill}} \ge 10$. 
	Although the minimum occurs at $K_{\text{bill}} = 13$, nearby values yield rather similar losses, suggesting a less decisive choice. 
	To further illustrate this behavior, we additionally plot the loss curve given $K_{\text{sen}} = 2$, which confirms that the improvement beyond $K_{\text{bill}} \approx 10$ is gradual rather than abrupt. This weaker separation aligns with prior findings that factors such as ideology, seniority, region, and gender shape cosponsorship behavior \citep{lo2025senate}, leading to a more complex network structure.
	\begin{figure}[htbp]
		\centering
		\subfloat[Heatmap of penalized CV loss.]{
			\includegraphics[width =7.7cm]{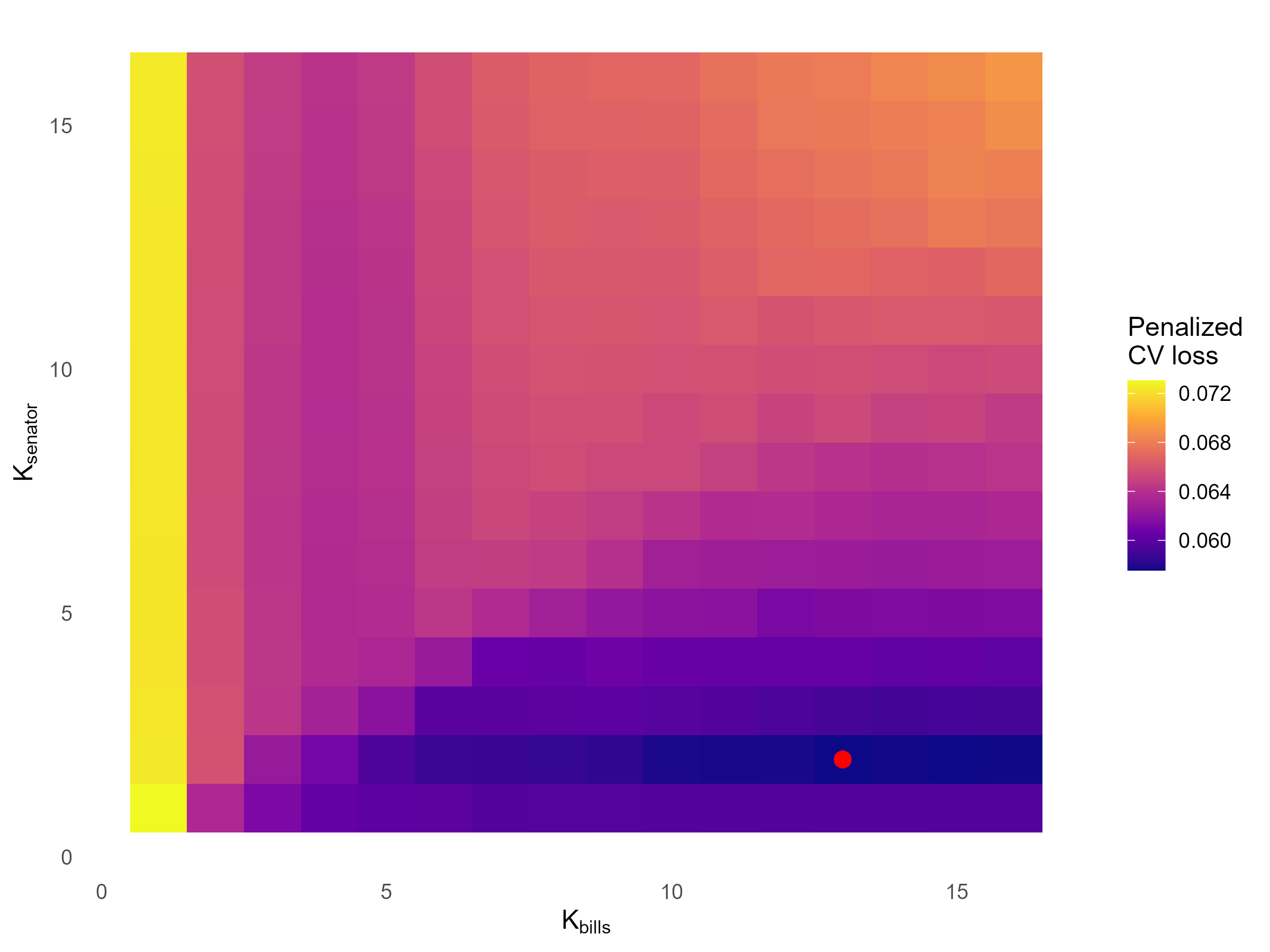}}
		\subfloat[Penalized CV loss on the bills side conditioned on $K_{\mathrm{senator}}=2$.]{
			\includegraphics[width =7.3cm]{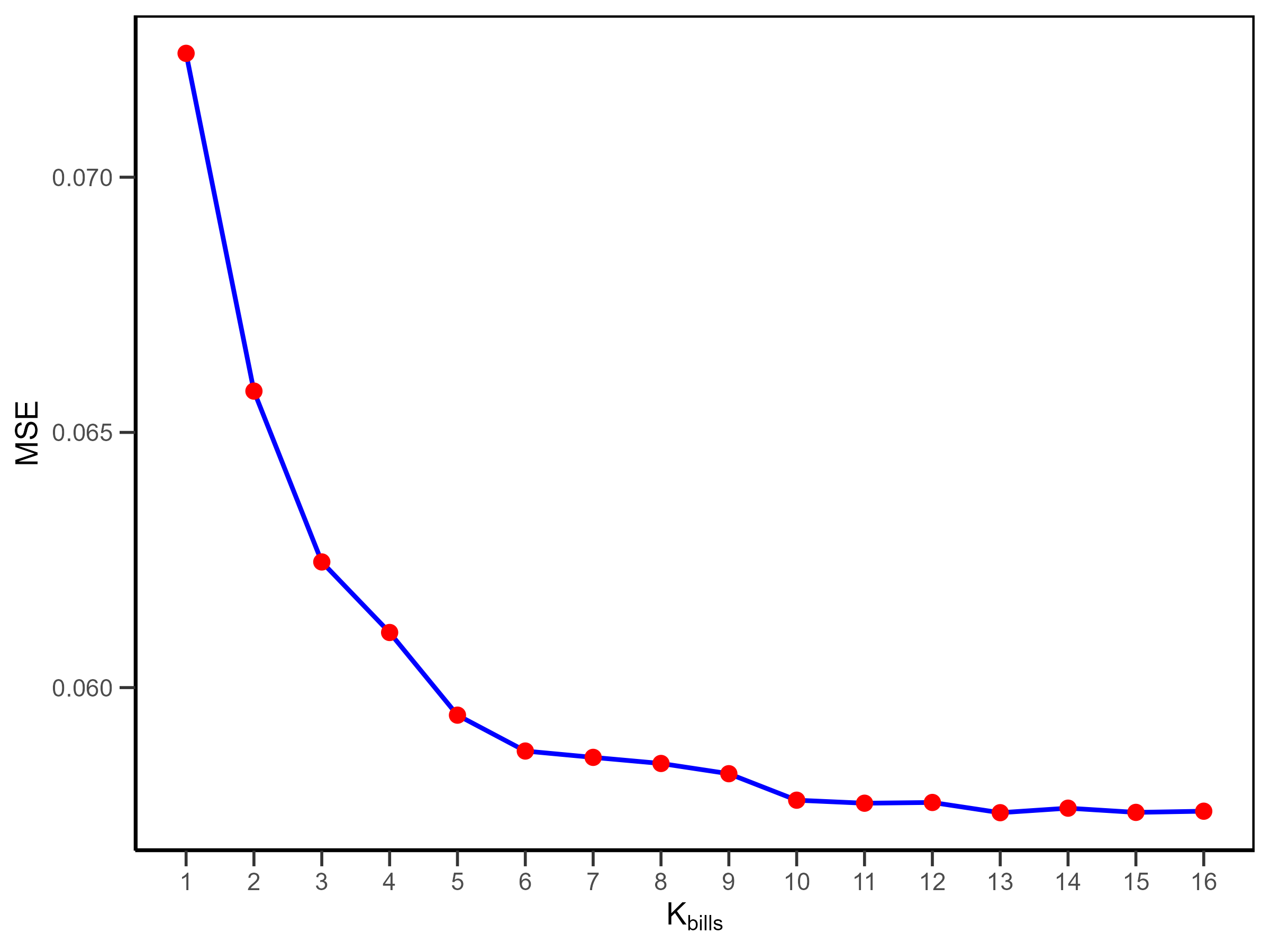}}\\
		\centering
		\caption{Results of cosponsorship network by 10-fold BCV algorithm.}\label{fig:cosponsorship_mse}
	\end{figure}
	
	For the community structure, the estimated two-way clustering among senators aligns well with party affiliation: 89 of 99 senators are assigned consistently, and the adjusted Rand index (ARI) is $0.633$. This confirms that partisan alignment is the strongest organizing principle in the network, while the small discrepancies can plausibly be attributed to other political covariates, as mentioned earlier.
	
	On the bill side, our method identifies 13 communities. Following the logic of \cite{Peel2017truth}, we interpret this community-level structure by comparing the empirical distribution of node metadata within each community to its overall distribution. 
	Specifically, for each community, we compute the relative proportion of each committee, indicating how strongly it is overrepresented within the group. Figure~\ref{fig:cosponsor} displays, for 6 representatives bill communities, the two most enriched committees, linking the detected clusters to substantive legislative themes. 
	The results show clear differentiation across communities, suggesting that our identified 13 clusters capture meaningful heterogeneity in legislative focus.
	
	\begin{figure}[htbp]
		\centering
		\includegraphics[width=0.6\linewidth]{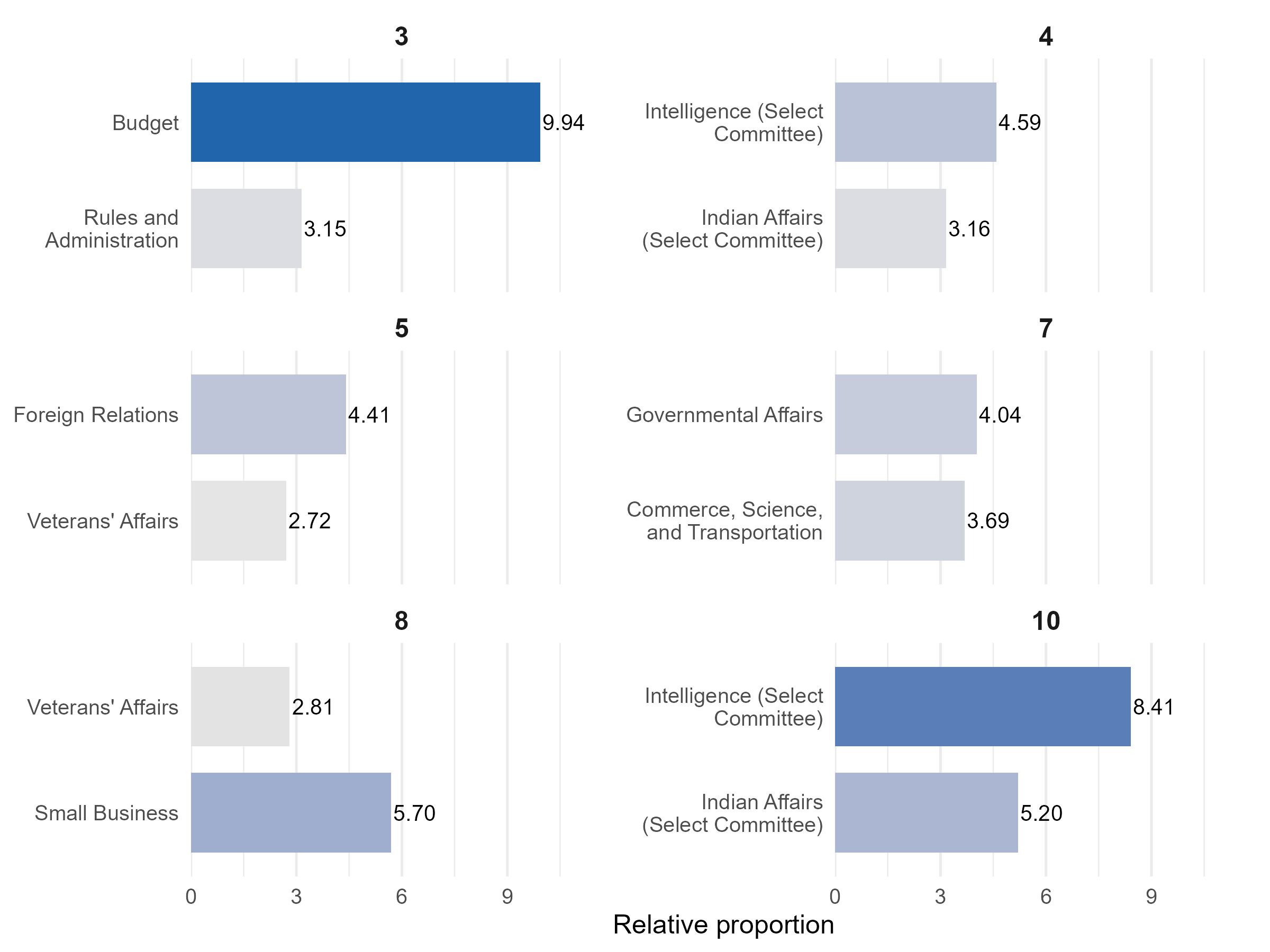}
		\caption{Top enriched committees in 6 representative bill communities. Each panel shows the two most overrepresented committees in the corresponding estimated cluster. Note that the average relative proportion of committee involved in the bills in each community is 0.727.}
		\label{fig:cosponsor}
	\end{figure}
	
	\section{Discussion}\label{sec:discuss}
	In this paper, we establish the first model selection consistency framework for bipartite stochastic block models, several important directions remain open.
	
	Firstly, our theoretical results characterize consistency under a growing-community regime and provide explicit conditions on the admissible growth rates of $(K_1,K_2)$. It would be interesting to further sharpen these results, for example by investigating optimal growth rates or minimax limits, as well as understanding whether similar phenomena arise in other latent structure models beyond the bipartite setting.
	
	Secondly, as in the unipartite network literature, nodes on both sides of a bipartite network are likely to exhibit degree heterogeneity. Incorporating such heterogeneity and developing corresponding model selection procedures under these more general settings represents an important and promising direction for future research.
	
	
	Moreover, since our approach requires the simultaneous selection of $(K_1,K_2)$, a grid search is to be conducted. While this is manageable for small sample sizes or a modest number of communities, the computational cost grows quadratically as the number of communities increases, in contrast to the unipartite case. Developing more computationally efficient strategies for large-scale bipartite networks thus represents another important avenue for future work.
	
	\bibliographystyle{asa}
	\bibliography{ref}

@article{Le2017concentration,
  title={Concentration and Regularization of Random Graphs},
  author={Le, Cam M. and Levina, Elizaveta and Vershynin, Roman},
  journal={Random Structures \& Algorithms},
  volume={51},
  number={3},
  pages={538--561},
  year={2017},
  publisher={Wiley}
}

@article{Zhou2019AnalysisOS,
  title={Analysis of spectral clustering algorithms for community detection: the general bipartite setting},
  author={Zhou, Zhixin and Amini, Arash A.},
  journal={Journal of Machine Learning Research},
  volume={20},
  number={47},
  pages={1--47},
  year={2019}
}

@article{lei2015consistency,
  title={Consistency of spectral clustering in stochastic block models},
  author={Lei, Jing and Rinaldo, Alessandro},
  journal={The Annals of Statistics},
  volume={43},
  number={1},
  pages={215--237},
  year={2015}
}

@article{lei2016goodness,
  title={A goodness-of-fit test for stochastic block models},
  author={Lei, Jing},
  journal={The Annals of Statistics},
  volume={44},
  number={1},
  pages={401},
  year={2016},
  publisher={Institute of Mathematical Statistics}
}

@article{hu2021using,
  title={Using maximum entry-wise deviation to test the goodness of fit for stochastic block models},
  author={Hu, Jianwei and Zhang, Jingfei and Qin, Hong and Yan, Ting and Zhu, Ji},
  journal={Journal of the American Statistical Association},
  volume={116},
  number={535},
  pages={1373--1382},
  year={2021},
  publisher={Taylor \& Francis}
}

@article{wang2017likelihood,
  title={Likelihood-based model selection for stochastic block models},
  author={Wang, YX Rachel and Bickel, Peter J},
  journal={The Annals of Statistics},
  volume={45},
  number={2},
  pages={500--528},
  year={2017}
}

@article{li2020network,
  title={Network cross-validation by edge sampling},
  author={Li, Tianxi and Levina, Elizaveta and Zhu, Ji},
  journal={Biometrika},
  volume={107},
  number={2},
  pages={257--276},
  year={2020},
  publisher={Oxford University Press}
}

@article{chen2018network,
  title={Network cross-validation for determining the number of communities in network data},
  author={Chen, Kehui and Lei, Jing},
  journal={Journal of the American Statistical Association},
  volume={113},
  number={521},
  pages={241--251},
  year={2018},
  publisher={Taylor \& Francis}
}

@article{le2022estimating,
  title={Estimating the number of communities by spectral methods},
  author={Le, Can M and Levina, Elizaveta},
  journal={Electronic Journal of Statistics},
  volume={16},
  number={1},
  pages={3315--3342},
  year={2022},
  publisher={The Institute of Mathematical Statistics and the Bernoulli Society}
}

@article{jin2023optimal,
  title={Optimal estimation of the number of network communities},
  author={Jin, Jiashun and Ke, Zheng Tracy and Luo, Shengming and Wang, Minzhe},
  journal={Journal of the American Statistical Association},
  volume={118},
  number={543},
  pages={2101--2116},
  year={2023},
  publisher={Taylor \& Francis}
}

@article{wu2024eigengap,
  title={An Eigengap Ratio Test for Determining the Number of Communities in Network Data},
  author={Wu, Yujia and Zhang, Jingfei and Lan, Wei and Tsai, Chih-Ling},
  journal={arXiv preprint arXiv:2409.05276},
  year={2024}
}

@article{yang2007consistency,
  title={Consistency of cross validation for comparing regression procedures},
  author={Yang, Yuhong},
  journal={Annals of Statistics},
  volume={35},
  number={6},
  pages={2450--2473},
  year={2007},
  publisher={Institute of Mathematical Statistics}
}

@article{barber2007bimod,
  title = {Modularity and community detection in bipartite networks},
  author = {Barber, Michael J.},
  journal = {Phys. Rev. E},
  volume = {76},
  issue = {6},
  pages = {066102},
  numpages = {9},
  year = {2007},
  month = {Dec},
  publisher = {American Physical Society}
}

@InCollection{Alzahrani2016,
  author    = {Taher Alzahrani and K. J. Horadam},
  editor    = {Jinhu Lü and Xinghuo Yu and Guanrong Chen and Wenwu Yu},
  title     = {Community Detection in Bipartite Networks: Algorithms and Case Studies},
  booktitle = {Complex Systems and Networks: Dynamics, Controls and Applications},
  year      = {2016},
  publisher = {Springer Berlin Heidelberg},
  pages     = {25--50}
}

@article{larremore2020bayes,
  title = {Community detection in bipartite networks with stochastic block models},
  author = {Yen, Tzu-Chi and Larremore, Daniel B.},
  journal = {Phys. Rev. E},
  volume = {102},
  issue = {3},
  pages = {032309},
  numpages = {16},
  year = {2020},
  month = {Sep},
  publisher = {American Physical Society}
}

@book{davis1941deep,
  title={Deep South: A Social Anthropological Study of Caste and Class},
  author={Davis, Allison and Gardner, Burleigh B. and Gardner, Mary R.},
  year={1941},
  publisher={University of Chicago Press},
  address={Chicago}
}

@article{liu2010community,
  title={Community detection in large-scale bipartite networks},
  author={Liu, Xin and Murata, Tsuyoshi},
  journal={Transactions of the Japanese Society for Artificial Intelligence},
  volume={25},
  number={1},
  pages={16--24},
  year={2010},
  publisher={The Japanese Society for Artificial Intelligence}
}

@article{Guimera2007module,
  title = {Module identification in bipartite and directed networks},
  author = {Guimer\`a, Roger and Sales-Pardo, Marta and Amaral, Lu\'{\i}s A. Nunes},
  journal = {Phys. Rev. E},
  volume = {76},
  issue = {3},
  pages = {036102},
  numpages = {8},
  year = {2007},
  month = {Sep},
  publisher = {American Physical Society}
}

@article{DOREIAN200429,
title = {Generalized blockmodeling of two-mode network data},
journal = {Social Networks},
volume = {26},
number = {1},
pages = {29-53},
year = {2004},
author = {Patrick Doreian and Vladimir Batagelj and Anuška Ferligoj}
}

@misc{fowler2007cosponsor,
  author    = {James H. Fowler},
  title     = {Replication data for: Legislative Cosponsorship Networks in the U.S. House and Senate},
  year      = {2007},
  publisher = {Harvard Dataverse},
  version   = {V1},
  doi       = {10.7910/DVN/O22JMY},
  url       = {https://doi.org/10.7910/DVN/O22JMY}
}

@article{lo2025senate, 
title={A Statistical Model of Bipartite Networks: Application to Cosponsorship in the United States Senate}, 
journal={Political Analysis}, 
author={Lo, Adeline and Olivella, Santiago and Imai, Kosuke}, 
year={2025},
pages={1–20}}

@article{Peel2017truth,
author = {Leto Peel and Daniel B. Larremore and Aaron Clauset },
title = {The ground truth about metadata and community detection in networks},
journal = {Science Advances},
volume = {3},
number = {5},
pages = {e1602548},
year = {2017}}

@article{yang2025pnncv,
    author = {Bokai Yang and Yuanxing Chen and Yuhong Yang},
    title = {Network Cross-Validation for Nested Models by Edge-Sampling},
    journal = {arXiv preprint arXiv: 2506.14244},
    year = {2025}
}

@article{Larremore2014infer,
  title = {Efficiently inferring community structure in bipartite networks},
  author = {Larremore, Daniel B. and Clauset, Aaron and Jacobs, Abigail Z.},
  journal = {Phys. Rev. E},
  volume = {90},
  issue = {1},
  pages = {012805},
  numpages = {12},
  year = {2014},
  month = {Jul},
  publisher = {American Physical Society}
}

@article{ZHOU2018novel,
title = {A novel community detection method in bipartite networks},
journal = {Physica A: Statistical Mechanics and its Applications},
volume = {492},
pages = {1679-1693},
year = {2018},
author = {Cangqi Zhou and Liang Feng and Qianchuan Zhao}
}

@article{newman2001structure,
  title={The structure of scientific collaboration networks},
  author={Newman, Mark EJ},
  journal={Proceedings of the National Academy of Sciences},
  volume={98},
  number={2},
  pages={404--409},
  year={2001},
  publisher={National Acad Sciences}
}

@article{zhang2011tagaware,
  title={Tag-aware recommender systems: A state-of-the-art survey},
  author={Zhang, Zi-Ke and Zhou, Tao and Zhang, Yi-Cheng},
  journal={Journal of Computer Science and Technology},
  volume={26},
  number={5},
  pages={767--777},
  year={2011},
  publisher={Springer}
}

@article{pesantez2017efficient,
  title={Efficient detection of communities in biological bipartite networks},
  author={Pes{\'a}ntez-Cabrera, Paola and Kalyanaraman, Ananth},
  journal={IEEE/ACM transactions on computational biology and bioinformatics},
  volume={16},
  number={1},
  pages={258--271},
  year={2017},
  publisher={IEEE}
}

@article{tamarit2020hierarchical,
  title={Hierarchical clustering of bipartite data sets based on the statistical significance of coincidences},
  author={Tamarit, Ignacio and Pereda, Mar{\'\i}a and Cuesta, Jos{\'e} A},
  journal={Physical Review E},
  volume={102},
  number={4},
  pages={042304},
  year={2020},
  publisher={APS}
}

@article{braun2024strong,
  title={Strong consistency guarantees for clustering high-dimensional bipartite graphs with the spectral method},
  author={Braun, Guillaume},
  journal={Electronic Journal of Statistics},
  volume={18},
  number={2},
  pages={2798--2823},
  year={2024},
  publisher={The Institute of Mathematical Statistics and the Bernoulli Society}
}

@article{zhou2020optimal,
  title={Optimal bipartite network clustering},
  author={Zhou, Zhixin and Amini, Arash A},
  journal={Journal of Machine Learning Research},
  volume={21},
  number={40},
  pages={1--68},
  year={2020}
}

@article{stone1974cv,
    author = {Stone, Mervyn},
    title = {Cross-validatory choice and assessment of statistical predictions},
    journal = {Journal of the Royal Statistical Society: Series B (Methodological)},
    volume    = {36},
    number    = {2},
    pages     = {111--147},
    year      = {1974}
}

@article{shao1993cv,
    author = {Shao, Jun},
    title = {Linear model selection by cross-validation},
    journal = {Journal of the American Statistical Association},
    volume  = {88},
    number = {422},
    pages = {486--494},
    year =  {1993}
}

@article{arlot2010survey,
  author    = {Sylvain Arlot and Alain Celisse},
  title     = {A survey of cross-validation procedures for model selection},
  journal   = {Statistics Surveys},
  volume    = {4},
  pages     = {40--79},
  year      = {2010}
}

@article{lei2025moderncv,
    author = {Lei, Jing},
    title = {A Modern Theory of Cross-Validation through the Lens of Stability},
    journal = {Foundations and Trends in Statistics},
    year={2025},
    volume={1},
    number={3--4},
    pages={391--540}
}

@article{geisser1975pred,
    author = {Geisser, Seymour},
    title = {The Predictive Sample Reuse Method with Applications},
    journal={Journal of the American Statistical association},
    volume={70},
    number={350},
    pages={320--328},
    year={1975},
    publisher={Taylor \& Francis}
}

@misc{van2003unified,
  title={Unified cross-validation methodology for selection among estimators and a general cross-validated adaptive epsilon-net estimator: Finite sample oracle inequalities and examples},
  author={Van Der Laan, Mark J and Dudoit, Sandrine},
  year={2003},
  publisher={bepress}
}

@article{Chakrabarty2025subsampling,
	title={Network Cross-Validation and Model Selection via Subsampling},
	author={Chakrabarty, Sayan and Sengupta, Srijan and Chen, Yuguo},
	journal={arXiv:2504.06903},
	year={2025}
}

@article{Bhadra2025unified,
	title={A Unified Framework for Community Detection and Model Selection in Blockmodels},
	ISSN={1537-2715},
	DOI={10.1080/10618600.2025.2590073},
	journal={Journal of Computational and Graphical Statistics},
	publisher={Informa UK Limited},
	author={Bhadra, Subhankar and Tang, Minh and Sengupta, Srijan},
	year={2025},
	pages={1–27} }
	
	\newpage
	\appendix
	
	\begin{center}
		\Large \textbf{Supplementary Materials for ``Cross-Validation in Bipartite Networks''}
	\end{center}
	
	In this supplementary file, we provide the proofs of the theoretical results given in the main article. In Section A.1, we introduce additional notations and present the whole scaling conditions addressing diminishing $w_{n_1,n_2}$ and $d_{n_1,n_2}$, which is the scenario we considered in the subsequent proofs. In Section A.2 we address the community recovery results on two sides under our framework. In Section A.3 we give the proof of Theorem 1, dividing the whole candidate space into five subspaces. In Section A.4 and A.5, we give the proofs of the two corollaries.
	
	\section{Proofs}
	\subsection{Additional Notations and Full Assumptions}
	We focus on a single split in the penalized edge split cross-validation procedure, since if a single split has the consistency result, then multiple splits hold automatically.
	
	We start with additional notations. For any matrix $M=[M_{ij}]$, we denote $\|M\|_F$ as its Forbenius norm, and let $\|M\|_{\max}:=\max_{i,j}|M_{ij}|$ denote the maximum magnitude of the entries in the matrix. For random variables $\{a_n\}$ and $\{b_n\}$ with $b_n>0$, we write $a_n=O_{\mathrm{P}}(b_n)$ if $a_n/b_n$ is bounded in probability and write $a_n=\omega_{\mathrm{P}}(b_n)$ if $a_n/b_n\overset{p}{\to}\infty$. For two symmetric matrices $A$ and $B$, we write $A\succeq B$ if $A-B$ is positive definite. Finally, we write $\mathbb{O}^{r\times r}$ for all r-dimensional orthogonal matrices.
	
	For notation simplicity, we write $B^{(n_1,n_2)}$ as $B$, $B_0^{(n_1,n_2)}$ as $B_0$, $\rho_{n_1,n_2}$ as $\rho$, $\beta_{n_1,n_2}$ as $\beta$, $d_{n_1,n_2}$ as $d$ and denote $n=\max\{n_1,n_2\}$, $K=\min\{K_1,K_2\}$. Denote $\mathcal{G}_k$ as the $k$-th community on the first side, and denote $\mathcal{H}_l$ as the $l$-th community on the second side. Without loss of generalization, we will assume that $K_1=K\leq K_2$ throughout the proof.
	
	\subsection{Label Recovery Results}
	The next lemma is a direct result of Theorem 1.1 in \cite{Le2017concentration} and the remark there, by padding $A$ with rows (or columns) of zeros to get a square matrix.
	\begin{lemmaA}\label{lemma_con1}
		Assume that $A \in \mathbb{R}^{n_1\times n_2}$ is generated from the SBM model. Assume that $\rho\geq C_0\log n/n$ for some constant $C_0$, then for any $r\geq1$, there exists a constant $C=C(r,C_0)$, such that
		$$\|A-P\|_2\leq C\sqrt{n\rho_n}$$
		with probability at least $1-n^{-r}$.
	\end{lemmaA}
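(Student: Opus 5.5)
The plan is to reduce Lemma~\ref{lemma_con1} to the square, symmetric-type setting of Theorem 1.1 in \cite{Le2017concentration}. If $n_1<n_2$, append $n_2-n_1$ zero rows to $A$ and to $P$. If $n_1>n_2$, append zero columns instead. Write $\tilde A$ and $\tilde P$ for the resulting $n\times n$ matrices, where $n=\max\{n_1,n_2\}$.

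Padding with zeros does not change the spectral norm: $\|\tilde A-\tilde P\|_2=\|A-P\|_2$, since the extra rows or columns contribute nothing to $\|(A-P)x\|$ or to $\|(A-P)^\top y\|$. The padded matrix $\tilde A$ has independent Bernoulli entries, with $\mathbb{E}\tilde A=\tilde P$ and $\max_{ij}\tilde P_{ij}\le \rho\max B_0=\rho$. The added entries are deterministic zeros, i.e.\ degenerate Bernoulli$(0)$ variables. This is exactly the setting of the remark after Theorem 1.1 in \cite{Le2017concentration}, which covers non-symmetric matrices with all entries independent. Alternatively, one can pass to the symmetric $(n_1+n_2)\times(n_1+n_2)$ dilation $\begin{pmatrix}0&A\\A^\top&0\end{pmatrix}$, whose spectral norm equals $\|A\|_2$. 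This dilation has independent entries above the diagonal and size at most $2n$.

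Next we check the degree condition. Their theorem gives $\|\tilde A-\tilde P\|_2\le C r^{3/2}\sqrt{d}$ with probability $1-n^{-r}$, provided that $d:=n\max_{ij}\tilde P_{ij}\ge \log n$. Their general bound also involves a regularization term coming from high-degree rows, which can be dropped once $d\gtrsim\log n$.

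Here $d\le n\rho$. We take $d=\max\{n\rho,\log n\}$, which is a valid upper bound on the expected degrees. Under the hypothesis $\rho\ge C_0\log n/n$ we have $d\le \max\{1,1/C_0\}\,n\rho$. Hence $\sqrt d\le C'(C_0)\sqrt{n\rho}$, and we get $\|A-P\|_2\le C(r,C_0)\sqrt{n\rho}$ with probability at least $1-n^{-r}$.

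The only delicate point is confirming that the rectangular/zero-padded case is genuinely covered by the cited remark, and how $C_0<1$ is absorbed into the constant. Both are handled by the inflation of $d$ just described.
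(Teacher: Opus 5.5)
Your proposal is correct and takes essentially the paper's route: the paper also gets the lemma by zero-padding $A$ to an $n\times n$ matrix with independent entries and invoking Theorem 1.1 of \cite{Le2017concentration} together with its remark on the non-symmetric case. Your additions are valid refinements of that argument: you absorb a possibly small $C_0$ into the constant by controlling the maximal degree at scale $\max\{n\rho,\log n\}\le\max\{1,1/C_0\}\,n\rho$, and you offer the symmetric dilation as an alternative.
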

	
	The next lemma is a direct result of Lemma 2 in \cite{li2020network}, by padding $Z$ with rows (or columns) of zeros to get a square matrix, and correspondingly enlarge $G$ to be an adjacency matrix of an $n\times n$ Erd\H{o}s-Renyi graph with the probability of edge $w$.
	
	\begin{lemmaA}\label{lemma_con2}
		Let $G$ be an adjacency matrix of an $n_1\times n_2$ Erd\H{o}s-Renyi graph with the probability of edge $w \geq C_1 \log n / n$ for a constant $C_1$. That is, for $i\in [n_1]$ and $j\in[n_2]$, $G_{ij}\sim \text{Bernoulli}(w)$ independently. Then for any $\gamma > 0$, with probability at least $1 - n^{-\gamma}$, the following relationship holds for any $Z \in \mathbb{R}^{n_1 \times n_2}$ with $\mathrm{rank}(Z) \leq K$
		\[
		\left\| \frac{1}{w} Z \circ G - Z \right\|_2 \leq 2C \sqrt{\frac{nK}{w}} \, \|Z\|_{\max}
		\]
		where $C = C(\gamma, C_1)$ is the constant from Lemma 1 that only depends on $\gamma$ and $C_1$.
	\end{lemmaA}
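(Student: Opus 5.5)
The approach is to reduce the statement to a spectral-norm bound on the centered mask $G-w\mathbf{1}\mathbf{1}^\top$ alone, so that the high-probability event does not depend on $Z$. That is what makes the bound hold simultaneously for all $Z$ of rank at most $K$. Write $E:=G/w-\mathbf{1}_{n_1}\mathbf{1}_{n_2}^\top$, so that $\frac1w Z\circ G-Z=Z\circ E$. The proof then has two ingredients: a deterministic Schur-product inequality controlling $\|Z\circ E\|_2$ by $\|E\|_2$ times a factorization norm of $Z$, and a random bound on $\|E\|_2$.

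For the deterministic part, I would use the factorization (or $\gamma_2$) norm
$$\gamma_2(Z)=\min_{Z=XY^\top}\max_i\|X_{i\cdot}\|_2\,\max_j\|Y_{j\cdot}\|_2.$$
\begin{enumerate}
\item The classical Schur-multiplier inequality gives $\|Z\circ E\|_2\le\gamma_2(Z)\|E\|_2$. To see it, write $u^\top(Z\circ E)v=\sum_k (u\circ X_{\cdot k})^\top E\,(v\circ Y_{\cdot k})$ and apply Cauchy--Schwarz over $k$. This uses $\sum_k\|u\circ X_{\cdot k}\|_2^2=\sum_i u_i^2\|X_{i\cdot}\|_2^2$, and likewise for $v$ and $Y$.
\item I would then invoke the known bound $\gamma_2(Z)\le\sqrt{\mathrm{rank}(Z)}\,\|Z\|_{\max}$, which is the step needing the most care. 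One route is through John's ellipsoid applied to the column space of $Z$: choose a basis in which the rows $Z_{i\cdot}$, viewed as functionals on a $K$-dimensional space, are balanced. A second route is to take $Z=U\Sigma V^\top$ and rescale $X=U\Sigma^{1/2}D$, $Y=V\Sigma^{1/2}D^{-1}$ for a suitable invertible $D$. I expect this to be the main obstacle to state cleanly; I would cite it as a standard fact (Linial--Shraibman type) rather than reprove it.
\item Combining the two gives $\|Z\circ E\|_2\le\sqrt K\,\|Z\|_{\max}\|E\|_2$ for every $Z$ with $\mathrm{rank}(Z)\le K$.
\end{enumerate}

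For the random part, $G$ is an $n_1\times n_2$ bipartite Erd\H{o}s--R\'enyi matrix with mean $w\mathbf 1\mathbf 1^\top$. This is a biSBM with $K_1=K_2=1$ and $\rho=w\ge C_1\log n/n$. Lemma \ref{lemma_con1} therefore gives, with probability at least $1-n^{-\gamma}$,
$$\|G-w\mathbf 1\mathbf 1^\top\|_2\le C\sqrt{nw},$$
with $C=C(\gamma,C_1)$; the same zero-padding to an $n\times n$ square used there covers the rectangular case. Hence $\|E\|_2\le C\sqrt{n/w}$ on this event.

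On that single event, which does not depend on $Z$, we obtain
$$\Bigl\|\tfrac1w Z\circ G-Z\Bigr\|_2\le C\sqrt{\frac{nK}{w}}\,\|Z\|_{\max}\le 2C\sqrt{\frac{nK}{w}}\,\|Z\|_{\max}$$
uniformly over all $Z$ of rank at most $K$. The factor $2$ leaves slack, for example to absorb the loss if one instead embeds into the symmetric $n\times n$ setting of Lemma 2 in \cite{li2020network}. In that embedding, $Z$ and $G$ are placed as off-diagonal blocks of a symmetric matrix, whose rank is at most $2K$, and the resulting constant is adjusted accordingly.
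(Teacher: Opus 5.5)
Your argument is correct, but it takes a different route from the paper. The paper's proof is a one-line reduction. It zero-pads $Z$ to a square matrix, enlarges $G$ to an $n\times n$ Erd\H{o}s--R\'enyi matrix, and invokes Lemma~2 of \cite{li2020network} as a black box, inheriting the constant $2C$.

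You instead prove the bound directly in the rectangular setting, in two steps:
\begin{enumerate}
\item a deterministic Schur-multiplier step, $\|Z\circ E\|_2\le\gamma_2(Z)\|E\|_2$ combined with $\gamma_2(Z)\le\sqrt{K}\|Z\|_{\max}$ from John's theorem;
\item a single random step, Lemma~\ref{lemma_con1} applied to $G$ as a one-block model with $\rho=w$ (take $r=\max\{1,\gamma\}$ there, since that lemma needs $r\ge1$).
\end{enumerate}

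Your route buys three things. First, it shows plainly why one event, depending only on $G$, covers every $Z$ of rank at most $K$ at once. Second, it gives the sharper constant $C$. Third, it avoids a point the paper's padding passes over: $G$ has independent, non-symmetric entries (even when $n_1=n_2$), so it is not a sub-block of an undirected Erd\H{o}s--R\'enyi adjacency matrix. The paper's reduction therefore tacitly needs either a directed version of the cited lemma or an off-diagonal embedding on $n_1+n_2$ nodes, with a constant-factor loss. The only padding you need is the harmless one inside Lemma~\ref{lemma_con1}, where the added entries are deterministic zeros and independence is preserved.

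The paper's route buys brevity, at the cost of hiding the key ingredient, the $\gamma_2$--rank inequality, inside the citation. Your plan to cite that inequality explicitly (Linial--Shraibman type) is the right call.

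One small correction to your closing remark. The symmetric off-diagonal embedding (rank $2K$, up to $2n$ nodes) costs an extra constant factor beyond the cited $2C$, so the factor $2$ in the statement does not by itself absorb it. Since your direct argument already yields $C$, nothing in your proof depends on that remark.
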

	
	The next lemma is a direct result of Lemma 3 in \cite{li2020network}, by padding $X$ with rows (or columns) of zeros to get a square matrix.
	
	\begin{lemmaA}\label{lemma_con3}
		Let $X$ be an $n_1 \times n_2$ matrix with each entry $X_{ij}$ being independent and bounded random variables, such that $\max_{i,j} |X_{ij}| \leq \sigma$ with probability 1. Then for any $\gamma > 0$, with probability at least $1 - n^{-\gamma}$,
		\[
		\|X\|_2 \leq C' \max\left( \sigma_1, \sigma_2, \sqrt{\log n} \right)
		\]
		in which $C' = C'(\sigma, \gamma)$ is a constant that only depends on $\gamma$ and $\sigma$,
		\[
		\sigma_1 = \max_i \sqrt{ E \sum_j X_{ij}^2 }, \quad \sigma_2 = \max_j \sqrt{ E \sum_i X_{ij}^2 }.
		\]
	\end{lemmaA}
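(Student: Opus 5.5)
Lemma \ref{lemma_con3} is the bounded-entry spectral norm bound of Lemma 3 in \cite{li2020network}, stated there for square (symmetric) matrices. I will reduce the rectangular case to the square one by zero-padding (and, if needed, symmetric dilation), and check that every quantity in the bound is unchanged or only changes by constants.

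\textbf{Step 1: padding.} Let $X$ be $n_1\times n_2$ with independent entries bounded by $\sigma$, and set $n=\max\{n_1,n_2\}$. Define $\tilde X\in\mathbb{R}^{n\times n}$ by putting $X$ in the upper-left block and zeros elsewhere. The appended entries are deterministic zeros, so they are independent of everything and bounded by $\sigma$. Hence $\tilde X$ still has independent, almost surely bounded entries. Padding with zeros does not change singular values, so $\|\tilde X\|_2=\|X\|_2$.

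\textbf{Step 2: matching the variance proxies.} For the padded matrix, the row sums $\sum_j E\tilde X_{ij}^2$ agree with those of $X$ for $i\le n_1$ and vanish for $i>n_1$. So the maximal row proxy is still $\sigma_1$, and by the same argument the maximal column proxy is still $\sigma_2$.

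\textbf{Step 3: applying the square result.} If the cited lemma is stated for symmetric matrices with independent upper-triangular entries, use the Hermitian dilation
\[
\begin{pmatrix}0&\tilde X\\ \tilde X^\top&0\end{pmatrix}\in\mathbb{R}^{2n\times 2n}.
\]
Its upper-triangular entries are independent and bounded by $\sigma$. Its spectral norm equals $\|X\|_2$. Its maximal row proxy is $\max(\sigma_1,\sigma_2)$. Applying Lemma 3 of \cite{li2020network} with dimension $2n$ gives, with probability at least $1-(2n)^{-\gamma}\ge 1-n^{-\gamma}$,
\[
\|X\|_2\le C'\max(\sigma_1,\sigma_2,\sqrt{\log 2n}).
\]
Since $\log 2n\le 2\log n$ for $n\ge2$, the factor is absorbed into $C'$.

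\textbf{Main obstacle.} The only delicate point is checking the hypotheses of the cited lemma. The zero entries may have variance zero, and the dimension changes from $n$ to $2n$. Neither matters: the original statement needs only independence and boundedness, and constants may depend on $\sigma,\gamma$.
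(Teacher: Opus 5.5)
Your argument is essentially the paper's: the paper justifies this lemma in one line by zero-padding $X$ to an $n\times n$ matrix and invoking Lemma 3 of \cite{li2020network}, and your checks that $\|X\|_2$, $\sigma_1$, $\sigma_2$ and the probability level survive the padding (with the Hermitian dilation as a harmless safeguard in case the cited result is stated for symmetric matrices) are exactly what that line leaves implicit. One caveat, inherited from the statement itself: the bound needs centered entries, $E X_{ij}=0$ (for example, $X_{ij}\equiv 1$ with $n_1=n_2$ gives $\|X\|_2=n_1$, far above $C'\sqrt{n_1}$), so your remark that only independence and boundedness are needed should include this hypothesis, which padding and dilation preserve and which holds for the centered noise matrix the lemma is used on in Proposition \ref{prop_con}.
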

	
	Now we can get a concentration result for the partially observed matrix in the bipartite setting.
	\begin{propositionA}\label{prop_con}
		Assume that $\rho\geq C_0\log n/n$ and the training proportion $w\geq C_1\log n/n$. Then the adjusted partially observed matrix satisfies for any $\gamma>0$, with probability at least $1 - 3n^{-\gamma}$,
		$$
		\| Y/w - P \|_2 \leq \tilde{C} \max \left\{ \left( \frac{K n\rho^2}{w} \right)^{1/2}, \left( \frac{n\rho}{w} \right)^{1/2}, \left( \frac{\log n}{w} \right)^{1/2} \right\},
		$$
		where $\tilde{C} = \tilde{C}(\gamma, C_0, C_1)$ is a constant that depends only on $C_0$, $C_1$ and $\gamma$.
	\end{propositionA}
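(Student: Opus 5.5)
The plan is to split the centered training matrix into a sampling part and a noise part, and bound each with the lemmas above. Let $G\in\{0,1\}^{n_1\times n_2}$ be the edge-splitting mask, so that $G_{ij}=\text{I}\{(i,j)\in\mathcal E\}\sim\text{Bernoulli}(w)$ independently of $A$, and $Y=A\circ G$. Then
\begin{equation*}
Y/w-P=\Bigl(\tfrac1w P\circ G-P\Bigr)+\tfrac1w (A-P)\circ G=:T_1+T_2 .
\end{equation*}
By the triangle inequality it suffices to bound $\|T_1\|_2$ and $\|T_2\|_2$, each on an event of probability at least $1-n^{-\gamma}$ or $1-2n^{-\gamma}$. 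A union bound then gives the total probability $1-3n^{-\gamma}$.

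\textbf{Term $T_1$.} $P=Z_1BZ_2^\top$ has rank at most $K$ and $\|P\|_{\max}\le\rho$. Lemma \ref{lemma_con2} applies because $w\ge C_1\log n/n$, and it gives
\begin{equation*}
\|T_1\|_2\le 2C\sqrt{nK/w}\,\rho=2C\Bigl(\tfrac{Kn\rho^2}{w}\Bigr)^{1/2}
\end{equation*}
with probability $1-n^{-\gamma}$. This is the first term of the claimed maximum.

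\textbf{Term $T_2$.} Put $X=(A-P)\circ G$. Conditionally on nothing, its entries are independent, mean zero and bounded by $1$. Their variances are $\mathbb E X_{ij}^2=wP_{ij}(1-P_{ij})\le w\rho$, so the row and column parameters of Lemma \ref{lemma_con3} satisfy $\sigma_1,\sigma_2\le\sqrt{wn\rho}$.

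A naive application of Lemma \ref{lemma_con3} yields $\|X\|_2\lesssim\max\{\sqrt{wn\rho},\sqrt{\log n}\}$. After dividing by $w$, the first piece is $\sqrt{n\rho/w}$, as desired. The second piece, however, is $\sqrt{\log n}/w$, which exceeds the target $(\log n/w)^{1/2}$ by a factor $w^{-1/2}$. This mismatch is the main obstacle.

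\textbf{Handling the obstacle.} The loss comes from using the crude entrywise bound $1$. The remedy I would try first is to condition on $G$. Given $G$, $X$ is supported on the random set $\mathcal E$, which has about $wn$ entries per row and column by Chernoff, using $w\ge C_1\log n/n$. I would then apply Lemma \ref{lemma_con1} (Le--Levina--Vershynin) to the Bernoulli matrix $A\circ G$. Its mean is $wP$ and its entries are at most $w\rho$, so it is an inhomogeneous random graph of density $w\rho$. This gives $\|A\circ G-wP\|_2\lesssim\sqrt{nw\rho}+\sqrt{\log n}$, with the $\sqrt{\log n}$ term replaced by $\sqrt{w\log n}$ through the degree-regularized form. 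I would obtain that form from the refined matrix-Bernstein/Bandeira--van Handel version of Lemma \ref{lemma_con3}, in which the logarithmic term is $\sigma_*\sqrt{\log n}$ with $\sigma_*$ the effective entry scale. The assumptions $\rho\ge C_0\log n/n$ and $w\ge C_1\log n/n$ are what keep every row and column sum of order at least $\log n$, so that the high-degree correction is not needed.

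\textbf{Conclusion.} Dividing by $w$ gives $\|T_2\|_2\le C'\max\{(n\rho/w)^{1/2},(\log n/w)^{1/2}\}$. Combining this with the bound on $T_1$ yields the proposition with $\tilde C$ depending only on $(\gamma,C_0,C_1)$.

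I am not certain the refined logarithmic term holds uniformly down to $w\asymp\log n/n$. If it does not, the argument still closes in the regime $wn\rho\gtrsim\log n$, where $\sqrt{\log n}/w\le\sqrt{n\rho/w}$ holds automatically. That is the step where I would check the constants most carefully.
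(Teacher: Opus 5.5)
Your decomposition $Y/w-P=T_1+T_2$, the bound on $T_1$ from Lemma~\ref{lemma_con2}, and the naive bound $\|T_2\|_2\lesssim\sqrt{n\rho/w}+\sqrt{\log n}/w$ from Lemma~\ref{lemma_con3} are exactly what the paper's one-line proof (``follow Theorem 1 of Li et al.\ with the lemmas substituted'') amounts to. Your diagnosis is correct: that route gives $\sqrt{\log n}/w$, not $(\log n/w)^{1/2}$, and the paper never addresses the discrepancy.

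The gap is your remedy, which does not work.
\begin{itemize}
\item In the Bandeira--van Handel bound the logarithmic term is $\sigma_*\sqrt{\log n}$ with $\sigma_*=\max_{ij}\|X_{ij}\|_\infty$. For $X=(A-P)\circ G$ an observed edge gives $X_{ij}=1-P_{ij}$, so $\sigma_*\approx 1$ and there is no $\sqrt w$ gain.
\item Applying Lemma~\ref{lemma_con1} to $A\circ G$ requires the density condition for $w\rho$, i.e.\ $nw\rho\gtrsim\log n$. The hypotheses give $n\rho\gtrsim\log n$ and $nw\gtrsim\log n$ only separately. So your claim that all row and column sums are of order $\log n$ is false: their means are at most $nw\rho$, which can be $\asymp(\log n)^2/n$.
\item The degree-regularized form of Le--Levina--Vershynin bounds a modified matrix, not $Y$ itself, and $Y$ is what BCV uses.
\end{itemize}

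In fact no argument can close this step, because the inequality as written fails for small $w$. Take $n_1=n_2=n$, $K_1=K_2=1$, $P=\rho\mathbf 1\mathbf 1^\top$ with $\rho=C_0\log n/n$ and $w=C_1\log n/n$.
\begin{itemize}
\item $Y$ has i.i.d.\ Bernoulli$(\rho w)$ entries with $n^2\rho w=C_0C_1(\log n)^2$ expected ones, so $Y\neq 0$ with probability at least $1-n^{-C_0C_1\log n}$.
\item On that event $\|Y/w-P\|_2\ge 1/w-n\rho=n/(C_1\log n)-C_0\log n$.
\item The right-hand side is only of order $\sqrt n$.
\end{itemize}
Note also that $(\log n/w)^{1/2}\le C_0^{-1/2}(n\rho/w)^{1/2}$ under the hypothesis on $\rho$, so the third term as written is vacuous; it should read $\sqrt{\log n}/w$.

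What your argument actually proves, with probability at least $1-2n^{-\gamma}$, is the bound with third term $\sqrt{\log n}/w$. This reduces to the stated form exactly in your fallback regime $nw\rho\gtrsim\log n$. That regime covers $w$ bounded away from zero (as in Assumption 4), and it is also what the later simplification \eqref{adjusted_concen} needs. Present that corrected statement, or add the hypothesis $nw\rho\gtrsim\log n$, rather than hedging.
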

	
	\begin{proof}[Proof of Proposition \ref{prop_con}]
		Just follow the lines of the proof of Theorem 1 in \cite{li2020network}, substituting the lemmas there by Lemma 2--4 here.
	\end{proof}
	
	In our setting, the rank $K=\min\{K_1,K_2\}$ may increase as $n$ grows, so unlike the traditional settings the bound in Proposition 1 shall be written more elaborated as 
	\begin{equation}
		\| Y/w - P \|_2 \leq \tilde{C}\sqrt{\frac{n\rho}{w}}\maxsqrtrho. \label{adjusted_concen}
	\end{equation}
	Without generality, we assume that $K_1=K=\min\{K_1,K_2\}$. Now we analyze the consistency of the spectral clustering when $k$ coincides with the true $K_1$. 
	
	Recall we denote $N_r=\text{diag}(n_{r1},\ldots,n_{rK_r})$, where $n_{rj}$ is the size of the $j$-th cluster of side $r$, and denote 
	$$\bar{B}:=N_1^{1/2}BN_2^{1/2}\qquad \text{ and }\qquad\bar{Z}_r := Z_r N_r^{-1/2}\quad\text{for }r\in\{1,2\}.$$ 
	The following lemma reveals key spectral properties for bipartite SBMs:
	
	\begin{lemmaA}[Lemma 1 of \cite{Zhou2019AnalysisOS}]\label{lemma_svd}
		Assume that $\bar{B} = U \Sigma V^{\top}$ is the reduced SVD of $\bar{B}$, where $U\in \mathbb{O}^{K_1 \times K}$, $V\in \mathbb{O}^{K_2 \times K}$, $\Sigma = \mathrm{diag}(\sigma_1, \dots, \sigma_K)>0$, and $K = \min\{K_1, K_2\}$. Then,
		\[
		P = (\bar{Z}_1 U) \, \Sigma \, (\bar{Z}_2 V)^T
		\]
		is the reduced SVD of $P$ where $\bar{Z}_r := Z_r N_r^{-1/2}$ is itself a column orthogonal matrix.
	\end{lemmaA}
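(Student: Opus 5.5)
The plan is a direct algebraic verification: factor $P$ through the normalized membership matrices, substitute the SVD of $\bar B$, and check that the resulting factors are orthonormal and the middle factor is positive diagonal. No probabilistic input is needed.

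First I check that the normalization is well defined. By Assumption \ref{asm1} every community is nonempty, since $n_{rk}\ge \pi_0 n_r/K_r>0$. Hence $N_r$ is an invertible diagonal matrix and $N_r^{\pm 1/2}$ exist.

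Next I rewrite $P$ by inserting $N_1^{-1/2}N_1^{1/2}$ and $N_2^{1/2}N_2^{-1/2}$:
\[
P=Z_1BZ_2^{\top}=(Z_1N_1^{-1/2})(N_1^{1/2}BN_2^{1/2})(N_2^{-1/2}Z_2^{\top})=\bar Z_1\bar B\bar Z_2^{\top}.
\]
Here I use that $N_2^{-1/2}$ is symmetric, so $N_2^{-1/2}Z_2^{\top}=\bar Z_2^{\top}$. Substituting the reduced SVD $\bar B=U\Sigma V^{\top}$ then gives $P=(\bar Z_1U)\Sigma(\bar Z_2V)^{\top}$.

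It remains to verify the properties of a reduced SVD.

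\textbf{Column orthonormality of $\bar Z_r$.} Each row of $Z_r$ has exactly one $1$, so $(Z_r^{\top}Z_r)_{kl}=\sum_i (Z_r)_{ik}(Z_r)_{il}=n_{rk}\mathrm{I}\{k=l\}$. Thus $Z_r^{\top}Z_r=N_r$, and hence $\bar Z_r^{\top}\bar Z_r=N_r^{-1/2}N_rN_r^{-1/2}=I_{K_r}$.

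\textbf{Column orthonormality of the singular vector matrices.} It follows that $(\bar Z_1U)^{\top}(\bar Z_1U)=U^{\top}\bar Z_1^{\top}\bar Z_1U=U^{\top}U=I_K$, and likewise $(\bar Z_2V)^{\top}(\bar Z_2V)=I_K$. So $\bar Z_1U\in\mathbb{R}^{n_1\times K}$ and $\bar Z_2V\in\mathbb{R}^{n_2\times K}$ have orthonormal columns.

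\textbf{Positive diagonal middle factor.} $\Sigma=\mathrm{diag}(\sigma_1,\dots,\sigma_K)$ is diagonal with entries in nonincreasing order. For these entries to be strictly positive, $\bar B$ must have full rank $K$. Since $N_1^{1/2}$ and $N_2^{1/2}$ are invertible, $\mathrm{rank}(\bar B)=\mathrm{rank}(B)=\mathrm{rank}(B_0)$. Assumption \ref{asm3}(ii) gives $\sigma_K(B_0)\ge C_\sigma\sqrt{K_2}>0$, so $\mathrm{rank}(B_0)=K$ and therefore $\Sigma>0$.

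By uniqueness of singular values, the displayed factorization is the reduced SVD of $P$. In particular, the nonzero singular values of $P$ coincide with those of $\bar B$.

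The only point requiring care is this rank/positivity condition, which is exactly what Assumption \ref{asm3}(ii) supplies; everything else is routine.
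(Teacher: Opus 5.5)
Your verification is correct, and it is the standard argument. The paper gives no proof of its own and simply imports Lemma 1 of \cite{Zhou2019AnalysisOS}, which rests on the same two facts you check: $P=\bar Z_1\bar B\bar Z_2^{\top}$, and $\bar Z_r^{\top}\bar Z_r=I_{K_r}$ (from $Z_r^{\top}Z_r=N_r$). One small redundancy: $\Sigma>0$ is already a hypothesis of the lemma, so your appeal to Assumption~\ref{asm3}(ii) only confirms that this hypothesis holds in the paper's setting; the proof itself does not need it.
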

	
	\begin{lemmaA}\label{lemma_svdcons}
		Assume the rank $K$-truncated SVD of $Y/w$ is $\hat{U}\hat{\Sigma}\hat{V}^{\top}$, then for any $\gamma>0$, with probability at least $1-3n^{-\gamma}$,
		$$\|\hat{U}-\bar{Z}_1UQ\|_F,\,\|\hat{V}-\bar{Z}_2VQ\|_F\leq \frac{C_2}{\sigma_K}\sqrt{\frac{Kn\rho}{w}}\maxsqrtrho$$
		for some $Q\in \mathbb{O}^{K\times K}$, where $C_2=C_2(\tilde{C})$ is a constant that depends only on $\tilde{C}$ in Proposition \ref{prop_con}, and $\sigma_K$ is the smallest nonzero singular value in Lemma 1.
	\end{lemmaA}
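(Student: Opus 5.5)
The plan is a Davis--Kahan (Wedin) $\sin\Theta$ argument for rectangular matrices, using the concentration bound \eqref{adjusted_concen} from Proposition \ref{prop_con} as the perturbation size. By Lemma \ref{lemma_svd}, $P=(\bar Z_1U)\Sigma(\bar Z_2V)^{\top}$ is an exact reduced SVD of rank $K$. Its top-$K$ left and right singular subspaces are spanned by the orthonormal matrices $\bar Z_1U\in\mathbb{R}^{n_1\times K}$ and $\bar Z_2V\in\mathbb{R}^{n_2\times K}$. Its $(K+1)$-th singular value is zero and its $K$-th is $\sigma_K$. We write $E:=Y/w-P$, and work on the event of Proposition \ref{prop_con}, which has probability at least $1-3n^{-\gamma}$. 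On this event $\|E\|_2\le \tilde C\sqrt{n\rho/w}\,\maxsqrtrho$.

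\textbf{Step 1.} Apply the Frobenius-norm variant of Wedin's theorem in the form of Yu, Wang and Samworth (2015, Theorem 4). It yields
$$\max\{\|\sin\Theta(\hat U,\bar Z_1U)\|_F,\ \|\sin\Theta(\hat V,\bar Z_2V)\|_F\}\le \frac{2(2\sigma_1+\|E\|_2)\min\{\sqrt K\|E\|_2,\|E\|_F\}}{\sigma_K^2}.$$
The cruder form without the $\sigma_1$ factor is more convenient:
$$\|\sin\Theta\|_F\le \sqrt{K}\,\|\sin\Theta\|_2\le \frac{2\sqrt K\,\|E\|_2}{\sigma_K}.$$
This comes from the $\sigma_K(P)-\sigma_{K+1}(P)=\sigma_K$ gap. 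Here we use the standard reduction: if $\|E\|_2\ge \sigma_K/2$, the right-hand side exceeds $\sqrt{4K}\ge\sqrt{2K}\ge\|\hat U-\bar Z_1UQ\|_F$ trivially for any orthogonal $Q$, so the bound holds with a suitable constant. Otherwise, Weyl's inequality gives $\sigma_K(Y/w)\ge\sigma_K/2$ and $\sigma_{K+1}(Y/w)\le\|E\|_2$, and Wedin applies with gap of order $\sigma_K$.

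\textbf{Step 2.} Pass from $\sin\Theta$ to an orthogonal alignment. The standard fact is that for $Q=W_1W_2^{\top}$, where $W_1DW_2^{\top}$ is the SVD of $(\bar Z_1U)^{\top}\hat U$, one has $\|\hat U-\bar Z_1UQ\|_F\le\sqrt2\|\sin\Theta\|_F$ (up to transposing $Q$). The delicate point is that the statement asks for a \emph{common} $Q$ for both sides. To get it, I would apply the symmetric dilation trick: embed $Y/w$ and $P$ into the $(n_1+n_2)$-square Hermitian matrices
$$\begin{pmatrix}0&M\\M^{\top}&0\end{pmatrix}.$$
The eigenvectors of the dilation stack $(\hat U;\hat V)/\sqrt2$, and the spectral norm of the perturbation is still $\|E\|_2$. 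Then Davis--Kahan on the top-$K$ eigenspace, whose eigengap is $\sigma_K$ because the negative eigenvalues $-\sigma_i$ are separated, gives a single $Q$ with $\|(\hat U;\hat V)-(\bar Z_1U;\bar Z_2V)Q\|_F\lesssim \sqrt K\|E\|_2/\sigma_K$. This bounds both blocks simultaneously.

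\textbf{Step 3.} Substitute the bound on $\|E\|_2$ into Step 2 to get
$$\frac{C_2}{\sigma_K}\sqrt{\frac{Kn\rho}{w}}\maxsqrtrho$$
with $C_2$ a multiple of $\tilde C$. The main obstacle I expect is Step 2's common $Q$. Either the dilation must be handled carefully, since the top-$K$ eigenvalues of the perturbed dilation must be positive, which is guaranteed when $\|E\|_2<\sigma_K/2$, with the complementary case handled trivially as in Step 1. Alternatively, one may note that the paper's subsequent use of this lemma only needs possibly different orthogonal matrices $Q_1,Q_2$, in which case Step 1 plus the standard alignment suffices.
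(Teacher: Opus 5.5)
Your proposal is correct and matches the paper's proof. Following Lemma 3 of Zhou and Amini, the paper stacks $\hat W_1=(\hat U;\hat V)/\sqrt2$ and $\bar W_1=(\bar Z_1U;\bar Z_2V)/\sqrt2$ as top-$K$ eigenvectors of the Hermitian dilations. It then applies Davis--Kahan to get $\min_Q\|\hat W_1-\bar W_1Q\|_F\le\sqrt{2K}\|Y/w-P\|_2/\sigma_K$, splits $2\|\hat W_1-\bar W_1Q\|_F^2$ into the two blocks to get a common $Q$, and plugs in \eqref{adjusted_concen}; this is exactly your Step 2.

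Your Step 1 Wedin detour is unnecessary. Its trivial-case constants are also slightly off: for instance, $\|\hat U-\bar Z_1UQ\|_F\le\sqrt{2K}$ holds only for the Procrustes-optimal $Q$, not for every orthogonal $Q$. This does not affect the conclusion.
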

	
	\begin{proof}[Proof of Lemma \ref{lemma_svdcons}]
		Following the lines of the proof of Lemma 3 in \cite{Zhou2019AnalysisOS}, substituting $A_{\mathrm{re}}$ there by $Y/w$, denote
		$$\bar{W}_1=\frac{1}{\sqrt{2}}\begin{pmatrix}
			\bar{Z}_1U\\
			\bar{Z}_2V
		\end{pmatrix},\ \hat{W}_1=\frac{1}{\sqrt{2}}\begin{pmatrix}
			\hat{U}\\
			\hat{V}
		\end{pmatrix},$$
		then 
		$$\min_{Q\in\mathbb{O}^{K\times K}}\|\hat{W}_1-\bar{W}_1Q\|_F\leq\frac{\sqrt{2K}}{\sigma_K}\|Y/w-P\|_2.$$
		Since 
		$$2\|\hat{W}_1-\bar{W}_1Q\|_F^2=\|\hat{U}-\bar{Z}_1UQ\|_F^2+\|\hat{V}-\bar{Z}_2VQ\|_F^2,$$
		we obtain the desired result by \eqref{adjusted_concen}.
	\end{proof}
	\begin{lemmaA}[Lemma 5.3 of \cite{lei2015consistency}]\label{lemma_labelcons}
		Let $\hat{U}$ and $U$ be two $n \times K$ matrices such that $U$ contains $K$ distinct rows. Denote $n_k$ as the amount of appearance of the $k$-th distinct row of $U$. Let $\delta$ be the minimum distance between two distinct rows of $U$, and $g$ be the membership vector given by clustering the rows of $U$. Let $\hat{g}$ be the output of a $k$-means clustering algorithm on $\hat{U}$, with objective value no larger than a constant factor of the global optimum. Assume that $\|\hat{U} - U\|_F^2 \leq c n_k\delta^2$ for some small enough constant $c$ and for all $k\in[K]$. Then $\hat{g}$ agrees with $g$ on all but $c^{-1} \|\widehat{U} - U\|^2_F \delta^{-2}$ nodes, after an appropriate label permutation.
	\end{lemmaA}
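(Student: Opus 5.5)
The plan is the standard ``good nodes versus bad nodes'' argument for approximate $k$-means. Let $\bar{U}$ be the $n\times K$ matrix whose $i$-th row is the centroid that the $k$-means output assigns to node $i$. The $k$-means objective value of $\hat g$ is $\|\bar{U}-\hat{U}\|_F^2$. By assumption, this value is at most $(1+\varepsilon)$ times the global optimum, for a constant $\varepsilon$. The true matrix $U$ has only $K$ distinct rows, so it is a feasible candidate for the $k$-means problem on $\hat U$, and the global optimum is at most $\|U-\hat U\|_F^2$. Combining this with the triangle inequality gives the first key estimate:
\[
\|\bar{U}-U\|_F^2 \le 2\|\bar U-\hat U\|_F^2+2\|\hat U-U\|_F^2 \le (4+2\varepsilon)\|\hat{U}-U\|_F^2 .
\]

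Next I will isolate the bad nodes. For each true cluster $k$ (the nodes whose row of $U$ equals the $k$-th distinct row), let $S_k$ be the set of nodes $i$ in that cluster with $\|\bar U_{i\cdot}-U_{i\cdot}\|\ge \delta/2$. A Markov-type counting bound then gives
\[
\sum_k |S_k| \le \frac{(4+2\varepsilon)\|\hat U-U\|_F^2}{\delta^2/4} = \frac{16(2+\varepsilon)\,\|\hat U-U\|_F^2}{\delta^2}.
\]
Now take $c$ small enough, for instance $c<1/(16(2+\varepsilon))$. The hypothesis $\|\hat U-U\|_F^2\le c\,n_k\delta^2$ then forces $|S_k|<n_k$ for every $k$. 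So each true cluster $k$ contains at least one good node, that is, a node in the cluster but outside $S_k$.

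The combinatorial step is the one that needs care, though it is short. Good nodes from distinct true clusters $k\neq l$ must receive distinct centroids: a common centroid would lie within $\delta/2$ of two rows of $U$ at distance at least $\delta$, which the strict inequality rules out. Hence the good nodes use at least $K$ distinct centroids, one for each true cluster. Since $\hat g$ has at most $K$ centroids, this defines a bijection $\pi$ between true clusters and estimated clusters.

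It remains to show that every good node in true cluster $k$ receives exactly the centroid $\pi(k)$. Suppose a good node of cluster $k$ had some other centroid. That centroid is $\pi(l)$ for some $l\neq k$, and it lies within $\delta/2$ of both $U$-row $k$ and $U$-row $l$, which is again a contradiction. Therefore, after the label permutation $\pi$, misclassified nodes can only lie in $\bigcup_k S_k$. Their number is at most $16(2+\varepsilon)\|\hat U-U\|_F^2\delta^{-2}$, which is at most $c^{-1}\|\hat U-U\|_F^2\delta^{-2}$ by the choice of $c$.

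The only delicate point I anticipate is the constant bookkeeping. The same $c$ has to serve both as the smallness condition and in the final bound $c^{-1}$. This works once $c$ is chosen depending only on the approximation factor $1+\varepsilon$, as above.
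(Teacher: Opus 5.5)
Your argument is correct, and it is essentially the proof of Lemma 5.3 in \cite{lei2015consistency}, which the paper cites without reproducing: the $(4+2\varepsilon)$ bound on $\|\bar U-U\|_F^2$ from the feasibility of $U$, the Markov-type count of the sets $S_k$, nonemptiness of each cluster's good set, and the pigeonhole step that forces a bijection between true clusters and estimated centroids. The only blemish is arithmetic: $4(4+2\varepsilon)=8(2+\varepsilon)$, not $16(2+\varepsilon)$, so your displayed equality should read as an inequality; this only loosens the bound and leaves your choice of $c$ valid.
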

	
	From now on, for notation simplicity we may refer $C$ as a constant, and different $C$ may refer to different values.
	\begin{propositionA}\label{prop_labelcons}
		Assume $\nmin\rho w\gg K^2$ and $\min\{n_1,n_2\}w\gg K^3$. The estimated label $\hat{c}^{(K_1,K_2)}_1$ obtained by applying procedures 1--3 in step 1 of Algorithm 1 with $(K_1',K_2')$ coincides with $(K_1,K_2)$ satisfies that for all $\gamma>0$, there exists a constant $C=C(\tilde{C})$ depending on $\tilde{C}$ in Proposition \ref{prop_con} such that it agrees with $c_1$ on all but $\leq C(nK\maxrho/{n_2\rho w})$ nodes with probability larger than $1-n^{-\gamma}$, after an appropriate label permutation.
	\end{propositionA}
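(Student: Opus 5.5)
The plan is to combine the singular-subspace perturbation bound of Lemma \ref{lemma_svdcons} with the $k$-means misclustering bound of Lemma \ref{lemma_labelcons}, applied to side 1 with $k=K=K_1$. We take $U_{\mathrm{pop}}:=\bar Z_1 U Q$, with $Q$ the orthogonal matrix from Lemma \ref{lemma_svdcons}, as the population embedding. By Lemma \ref{lemma_svd}, $\bar Z_1 U$ has exactly $K_1$ distinct rows: row $i$ equals $n_{1,c_{1,i}}^{-1/2}U_{c_{1,i}\cdot}$. Because $K_1=K$, the matrix $U\in\mathbb{O}^{K\times K}$ is square orthogonal, so its rows are orthonormal. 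For $k\neq l$ the distance between the corresponding distinct rows is then
\[
\sqrt{n_{1k}^{-1}+n_{1l}^{-1}}\ \ge\ \sqrt{2/n_1}=:\delta ,
\]
since $n_{1k}\le n_1$. Right multiplication by $Q$ preserves these distances. So, in contrast to side 2, no incoherence condition is needed here.

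The second step bounds $\|\hat U-U_{\mathrm{pop}}\|_F^2$. Lemma \ref{lemma_svdcons} gives, with probability at least $1-3n^{-\gamma}$,
\[
\|\hat U-U_{\mathrm{pop}}\|_F^2\le \frac{C K n\rho}{\sigma_K^2 w}\max\{1,K\rho\}.
\]
We then lower-bound $\sigma_K=\sigma_K(\bar B)$. Since $\bar B=\rho N_1^{1/2}B_0N_2^{1/2}$, Assumption \ref{asm1} and Assumption \ref{asm3}(ii) give
\[
\sigma_K\ \ge\ \rho\,\sigma_{\min}(N_1^{1/2})\,\sigma_{\min}(N_2^{1/2})\,\sigma_K(B_0)\ \ge\ \rho\sqrt{\pi_0 n_1/K_1}\sqrt{\pi_0 n_2/K_2}\,C_\sigma\sqrt{K_2}\ \asymp\ \rho\sqrt{n_1n_2/K}.
\]
This yields
\[
\|\hat U-U_{\mathrm{pop}}\|_F^2\le \frac{C K^2 n\max\{1,K\rho\}}{n_1n_2\rho w}.
\]

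The third step checks the hypothesis of Lemma \ref{lemma_labelcons}, namely $\|\hat U-U_{\mathrm{pop}}\|_F^2\le c\,n_{1k}\delta^2$ for all $k$. Since $n_{1k}\delta^2\ge 2\pi_0/K$, it suffices that
\[
\frac{K^3 n\max\{1,K\rho\}}{n_1n_2\rho w}\to 0 .
\]
Using $n/(n_1n_2)=1/\min\{n_1,n_2\}$, this splits into two cases. When $K\rho\le1$, the condition is $\min\{n_1,n_2\}\rho w\gg K^3$. When $K\rho>1$, it is $\min\{n_1,n_2\}w\gg K^4$. Matching these to the stated hypotheses $\min\{n_1,n_2\}\rho w\gg K^2$ and $\min\{n_1,n_2\}w\gg K^3$ is the main obstacle. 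The powers of $K$ here are one higher than the hypotheses provide, because the crude bound $\delta^2\ge 2/n_1$ was combined with $n_{1k}\asymp n_1/K$.

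To close this gap I would refine the computation in two ways. First, use $\delta^2\ge n_{1k}^{-1}+n_{1l}^{-1}$ directly, so that $n_{1k}\delta^2\ge1$ instead of only $\gtrsim 1/K$. Second, apply Lemma \ref{lemma_labelcons} in the form where the requirement reads $\|\hat U-U_{\mathrm{pop}}\|_F^2\le c\min_k n_{1k}\delta_k^2$ with row-specific separation. Under this refinement the requirement becomes $K^2 n\max\{1,K\rho\}/(n_1n_2\rho w)\to0$. This is exactly the pair of hypotheses: $\min\{n_1,n_2\}\rho w\gg K^2$ when $K\rho\le1$, and $\min\{n_1,n_2\}w\gg K^3$ when $K\rho>1$. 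I would also verify that $k$-means is run with a constant-factor approximation, as Lemma \ref{lemma_labelcons} requires.

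Finally, Lemma \ref{lemma_labelcons} shows that the number of misclustered nodes is at most
\[
c^{-1}\|\hat U-U_{\mathrm{pop}}\|_F^2\,\delta^{-2}\ \lesssim\ \frac{K^2 n\max\{1,K\rho\}}{n_1n_2\rho w}\cdot\frac{n_1}{K}\ =\ C\,\frac{nK\max\{1,K\rho\}}{n_2\rho w}.
\]
This is the claimed rate, and it holds with probability at least $1-3n^{-\gamma}$. Since $\gamma$ is arbitrary, this gives probability $1-n^{-\gamma}$ after renaming $\gamma$.
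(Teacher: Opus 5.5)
Your argument follows the paper's proof. Lemma \ref{lemma_svdcons} together with $\sigma_K\ge \pi_0C_\sigma\rho\sqrt{n_1n_2/K}$ gives $\|\hat U-\bar Z_1UQ\|_F^2\lesssim K^2n\max\{1,K\rho\}/(n_1n_2\rho w)$, and Lemma \ref{lemma_labelcons} turns this into a misclustering count. Your row-specific refinement of the hypothesis check is legitimate: it is the form in which Lemma 5.3 of \cite{lei2015consistency} is actually stated. There one takes $\delta_k\le\min_{l\neq k}\|X_{l\cdot}-X_{k\cdot}\|$, requires $\|\hat U-U\|_F^2\le c\,n_k\delta_k^2$ for every $k$, and concludes $\sum_k|S_k|\delta_k^2\le C\|\hat U-U\|_F^2$. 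This refinement is more careful than the paper, which simply asserts that the minimum row separation is of order $\sqrt{K/n_1}$.

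The gap is in your last display, where you use $\delta^{-2}\lesssim n_1/K$. Nothing you established gives this. Your global $\delta=\sqrt{2/n_1}$ yields $\delta^{-2}=n_1/2$. In the row-specific form, $\delta_k^2=n_{1k}^{-1}+(\max_{l\neq k}n_{1l})^{-1}$, so $\delta_k^{-2}\le n'_{\max}$, the second-largest community size on side 1. The lemma's conclusion therefore gives only $\sum_k|S_k|\le C\,n'_{\max}\|\hat U-\bar Z_1UQ\|_F^2$.

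Assumption \ref{asm1} bounds community sizes from below only. With $K\to\infty$, two communities may each hold a constant fraction of the nodes: for example, two of size $(1-\pi_0)n_1/2$ and $K-2$ of size $\pi_0n_1/(K-2)$. Then $n'_{\max}\asymp n_1$, and you are off by a factor of $K$ from the claimed $C\,nK\max\{1,K\rho\}/(n_2\rho w)$. To finish, you need an upper-balance condition $\max_k n_{1k}\lesssim n_1/K$, which is automatic for fixed $K$. This is exactly what the paper tacitly uses when it says $\delta\asymp\sqrt{K/n_1}$. Either add this condition explicitly, or state the count with $n'_{\max}$ in place of $n_1/K$.
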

	
	\begin{proof}[Proof of Proposition \ref{prop_labelcons}]
		We apply Lemma \ref{lemma_labelcons}, with $U$ substituted by $\bar{Z}_1UQ$. Notice that $U,Q\in\mathbb{O}^{K\times K}$, therefore the minimum distance between two distinct rows of $\bar{Z}_1UQ$ equals to the same distance for $\bar{Z}_1$, which is of order $\sqrt{K}/\sqrt{n_1}$. Also the $K$ different rows of $\bar{Z}_1UQ$ corresponds to the $K$ different rows of $\bar{Z}_1$, thus the result of clustering rows of $\bar{Z}_1UQ$ is just $c_1$.
		
		Notice $\sigma_K$ is the minimal nonzero singular value of $\bar{B}=\rho N_1^{1/2}B_0N_2^{1/2}$, thus it is the square root of the minimal eigenvalue of $$\bar{B}\bar{B}^T=\rho^2N_1^{1/2}B_0N_2B_0^TN_1^{1/2}\succeq\rho^2\pi_0\frac{n_2}{K_2}N_1^{1/2}B_0B_0^TN_1^{1/2}\succeq\rho^2\pi_0^2\frac{n_1n_2}{KK_2}B_0B_0^T.$$
		Denote the smallest nonzero singular value of $B_0$ as $\sigma_0$, then from the above formula we see that $$\sigma_K\geq\rho\pi_0\sqrt{\frac{n_1n_2}{KK_2}}\sigma_0\geq C\rho\sqrt{\frac{n_1n_2}{K}},$$
		by our assumption for $B_0$.
		
		Thus by Lemma \ref{lemma_svdcons}, we have
		$$\|\hat{U}-\bar{Z}_1UQ\|_F^2\leq \tilde{C}\frac{nK^2}{n_1n_2\rho w}\maxrho,$$
		for some constant $\tilde{C}$. Then as long as $\nmin\rho w\gg K^2$ and $\min\{n_1,n_2\}w\gg K^3$, the condition of Lemma \ref{lemma_labelcons} is satisfied, and the result follows from the order of $\delta$. 
	\end{proof}
	
	Now we prove the consistency result of the estimated label on the more clusters side.
	
	\begin{propositionA}\label{prop_labelcons2}
		Assume $\nmin\rho w\beta\gg K^2$ and $\min\{n_1,n_2\}w\beta\gg K^3$. The estimated label $c^{(K_1,K_2)}_2$ obtained by applying procedures 1--3 in step 1 of Algorithm 1 with $(K_1',K_2')$ coincides with $(K_1,K_2)$ satisfies that for all $\gamma>0$, there exists a constant $C=C(\tilde{C})$ depending on $\tilde{C}$ in Proposition \ref{prop_con} such that it agrees with $c_1$ on all but $\leq C(nK^2\maxrho/(n_1\rho w\beta K_2))$ nodes with probability larger than $1-n^{-\gamma}$, after an appropriate label permutation.
	\end{propositionA}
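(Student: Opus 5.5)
The argument follows the proof of Proposition \ref{prop_labelcons}. The only change is that the geometry of the target matrix on side 2 is governed by the incoherence parameter $\beta$ rather than by orthogonality. We apply Lemma \ref{lemma_labelcons} with $\hat{U}$ replaced by $\hat{V}$, the top-$K$ right singular vectors of $Y/w$, and with $U$ replaced by $\bar{Z}_2VQ$, where $Q\in\mathbb{O}^{K\times K}$ is the rotation from Lemma \ref{lemma_svdcons}. By Lemma \ref{lemma_svd}, the rows of $\bar{Z}_2V$ take exactly $K_2$ values, one per community. Provided these values are distinct, which follows from the separation bound below, clustering the rows of $\bar{Z}_2VQ$ returns $c_2$.

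\emph{Step 1: perturbation bound.} Lemma \ref{lemma_svdcons} and the lower bound $\sigma_K\geq C\rho\sqrt{n_1n_2/K}$ derived in the proof of Proposition \ref{prop_labelcons} are symmetric in the two sides, so the same bound holds for $\hat{V}$. With probability at least $1-3n^{-\gamma}$,
$$\|\hat{V}-\bar{Z}_2VQ\|_F^2\leq C\frac{nK^2}{n_1n_2\rho w}\maxrho .$$

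\emph{Step 2: separation of distinct rows.} This is the main obstacle. For nodes in communities $l\neq l'$, the row difference is $(e_l/\sqrt{n_{2l}}-e_{l'}/\sqrt{n_{2l'}})^\top VQ$, and its squared norm equals $x^\top VV^\top x$ with $x=e_l/\sqrt{n_{2l}}-e_{l'}/\sqrt{n_{2l'}}$. Since $x$ is supported on $\mathcal{I}=\{l,l'\}$,
$$x^\top VV^\top x=\|x\|^2+x^\top(VV^\top-I_{K_2})_{\mathcal{I}}x\geq \|x\|^2-(1-\beta)\|x\|^2=\beta\|x\|^2 ,$$
where the inequality uses Assumption \ref{asm2}. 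By Assumption \ref{asm1}, $n_{2l}\leq n_2$ for all $l$, hence $\|x\|^2=1/n_{2l}+1/n_{2l'}\geq 2/n_2$ and $\delta^2\geq 2\beta/n_2$. Using Assumption \ref{asm1} on the community sizes instead, with $\|x\|^2\gtrsim K_2/n_2$, gives the sharper bound $\delta^2\gtrsim\beta K_2/n_2$. I expect to use the latter, since it produces the $\beta K_2$ factor in the stated error count. Assumption \ref{asm1} also gives $n_{2k}\geq\pi_0n_2/K_2$, so $n_{2k}\delta^2\gtrsim\beta$.

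\emph{Step 3: applying Lemma \ref{lemma_labelcons}.} The lemma requires $\|\hat V-\bar Z_2VQ\|_F^2\leq c\,n_{2k}\delta^2$, which by Steps 1 and 2 reduces to $nK^2\max\{1,K\rho\}/(n_1n_2\rho w)\ll\beta$. This follows from the hypotheses $\nmin\rho w\beta\gg K^2$ and $\nmin w\beta\gg K^3$, handling the $\max\{1,K\rho\}$ cases separately as in Proposition \ref{prop_labelcons}. The lemma then bounds the number of misclustered nodes by
$$c^{-1}\|\hat V-\bar Z_2VQ\|_F^2\,\delta^{-2}\leq C\frac{nK^2}{n_1n_2\rho w}\maxrho\cdot\frac{n_2}{\beta K_2}=C\frac{nK^2\maxrho}{n_1\rho w\beta K_2},$$
which is the claimed bound, and it holds with the stated probability. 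The delicate point is the lower bound on $\delta$ in Step 2. In particular, the justification for $\|x\|^2\gtrsim K_2/n_2$ must come from the community sizes, and if only an upper-balance bound is available the rate may lose a factor $K_2$.
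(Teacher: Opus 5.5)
Your route is the paper's: Lemma~\ref{lemma_labelcons} applied to $\hat V$ and $\bar Z_2VQ$, the Frobenius bound of Lemma~\ref{lemma_svdcons}, and a separation bound $\delta^2\gtrsim\beta K_2/n_2$. Your computation $x^\top VV^\top x\ge\beta\|x\|^2$ from Assumption~\ref{asm2} is correct; the paper does not derive it and instead cites Zhou and Amini.

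\textbf{The gap.} It lies in the next step. Assumption~\ref{asm1} is a \emph{lower} bound $n_{2l}\ge\pi_0n_2/K_2$, so it bounds $\|x\|^2=1/n_{2l}+1/n_{2l'}$ only from \emph{above}, by $2K_2/(\pi_0 n_2)$. A lower bound $\|x\|^2\gtrsim K_2/n_2$ needs $\max_l n_{2l}\lesssim n_2/K_2$, and Assumption~\ref{asm1} does not give this once $K_2\to\infty$. Two communities of size $\asymp n_2$ are allowed. For that pair, since $VV^\top$ is a projection, $\delta^2\le\|x\|^2\asymp1/n_2$. This holds even when $\beta=1$, e.g.\ $K_1=K_2$ with $B_0$ invertible, where $VV^\top=I$. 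With only the bound you actually proved, $\delta^2\ge2\beta/n_2$, Step 3 breaks in two places:
\begin{itemize}
\item the hypothesis $\|\hat V-\bar Z_2VQ\|_F^2\le c\,n_{2k}\delta^2$ can then only be verified if $\|\hat V-\bar Z_2VQ\|_F^2\ll\beta/K_2$, which is a factor $K_2$ beyond the stated assumptions;
\item the misclustering count becomes $\lesssim\|\hat V-\bar Z_2VQ\|_F^2\,n_2/\beta$, which is a factor $K_2$ above the claim.
\end{itemize}
Your closing caveat senses this, but Step 3 still uses the unproved bound.

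\textbf{Partial repair.} The first failure can be fixed with Lei and Rinaldo's per-community form of the $k$-means lemma. Let $\delta_k$ be the distance from the $k$-th distinct row of $\bar Z_2V$ to the nearest other distinct row. Your computation gives $\delta_k^2\ge\beta/n_{2k}$, so $n_{2k}\delta_k^2\ge\beta$, and the stated hypotheses suffice. The count, however, is then $\sum_k|S_k|\lesssim\|\hat V-\bar Z_2VQ\|_F^2\max_k n_{2k}/\beta$. This matches the claimed rate only when $\max_k n_{2k}\lesssim n_2/K_2$.

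\textbf{What is needed.} The statement as written requires the two-sided balance condition $\max_l n_{2l}\le Cn_2/K_2$. The paper's one-line proof makes the same leap when it asserts $\delta\ge\sqrt{\beta K_2/n_2}$. So does Proposition~\ref{prop_labelcons}, which takes the row separation of $\bar Z_1$ to be of order $\sqrt{K/n_1}$. This is therefore an implicit assumption of the paper rather than something only you missed. State it explicitly and your Steps 1--3 go through; for bounded $K_2$ it holds automatically.
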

	
	\begin{proof}[Proof of Proposition \ref{prop_labelcons2}]
		The only difference here from Proposition \ref{prop_labelcons} is that the minimum distance between two distinct rows of $\bar{Z}_2VQ$ is no longer simply $1/\sqrt{n_2}$. In fact, follow the lines in the proof of Theorem 7 in \cite{Zhou2019AnalysisOS}, one can prove that the minimum distance $\delta$ here satisfies
		$$\delta\geq\sqrt{\frac{\beta K_2}{n_2}}.$$
		Therefore, by Lemma \ref{lemma_svdcons}, since $$\|\hat{V}-\bar{Z}_2VQ\|_F^2\leq C\frac{nK^2}{n_1n_2\rho w}\maxrho,$$ as long as $\nmin\rho w\beta\gg K^2$ and $\min\{n_1,n_2\}w\beta\gg K^3$, we have $\|\hat{V}-\bar{Z}_2VQ\|_F^2\ll n_2\delta^2/K_2=\beta$, and thus by Lemma \ref{lemma_labelcons} the conclusion follows.
	\end{proof}
	
	\subsection{Proof of Theorem 1}
	We first give the upper bound of the loss function $\ell$ when $(K_1',K_2')$ coincides with $(K_1,K_2)$, where we denote
	$$\ell_{K_1',K_2'}(A,\mathcal{E}^c)=\sum_{(i,j)\in\mathcal{E}^c}(A_{ij}-\hat{P}^{(K_1',K_2')}_{ij})^2,$$
	$$\ell_{0}(A,\mathcal{E}^c)=\sum_{(i,j)\in\mathcal{E}^c}(A_{ij}-P_{ij})^2.$$
	
	\begin{propositionA}\label{prop_sbmupper}
		As long as $n\rho K=\Omega(K_2\log n)$, $\nmin\rho w\beta\gg K^2$ and $\min\{n_1,n_2\}w\beta\gg K^3$, we have with probability larger than $1-3K^2K_2^2n^{-\gamma}$,
		$$\ell_{K_1,K_2}(A,\mathcal{E}^c)-\ell_0(A,\mathcal{E}^c)\leq C(\gamma)\left(\frac{nK^2\rho(1-w)}{w\beta}\maxrho\right).$$
	\end{propositionA}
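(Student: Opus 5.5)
We expand the squared loss on the evaluation set around $P$:
$$\ell_{K_1,K_2}(A,\mathcal{E}^c)-\ell_0(A,\mathcal{E}^c)=\sum_{(i,j)\in\mathcal{E}^c}(\hat P_{ij}-P_{ij})^2-2\sum_{(i,j)\in\mathcal{E}^c}(A_{ij}-P_{ij})(\hat P_{ij}-P_{ij}),$$
with $\hat P=\hat P^{(K_1,K_2)}$. Both terms will be controlled on the intersection of high-probability events supplied by Propositions \ref{prop_labelcons} and \ref{prop_labelcons2}. These propositions say that, after permutation, $\hat c_1$ misclassifies at most $m_1:=C nK\max\{1,K\rho\}/(n_2\rho w)$ nodes and $\hat c_2$ at most $m_2:=CnK^2\max\{1,K\rho\}/(n_1\rho w\beta K_2)$ nodes. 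We also need Bernstein/Hoeffding bounds for the block estimates $\hat B_{k_1k_2}$, one for each of the $K_1K_2\le K^2K_2^2$ block pairs (after a union bound over label configurations counted appropriately). This union bound is where the failure probability $3K^2K_2^2n^{-\gamma}$ comes from.

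\textbf{Step 1: deviation of $\hat B$.} On the good event, each estimated block $\{\hat c_{1,i}=k_1,\hat c_{2,j}=k_2\}\cap\mathcal E$ contains about $w n_{1k_1}n_{2k_2}\gtrsim w n_1n_2/(KK_2)$ training pairs, by Assumption \ref{asm1}. The misclassified rows and columns contaminate at most a fraction $m_1K/n_1+m_2K_2/n_2$ of these pairs. Since $P$ entries lie in $[0,\rho]$, the bias satisfies $|\mathbb E\hat B_{k_1k_2}-B_{k_1k_2}|\lesssim\rho(m_1K/n_1+m_2K_2/n_2)$. For the stochastic part, Bernstein gives a deviation of order $\sqrt{\rho KK_2\log n/(wn_1n_2)}$; the condition $n\rho K=\Omega(K_2\log n)$ keeps this term subdominant.

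The difficulty is that the labels depend on the training data. We handle this either by conditioning on the training split via independence of the edge sample from $A$, or, more simply, by taking a uniform bound over all labelings within Hamming distance $m_1,m_2$ of the truth. I would try conditioning first. Since $\hat c$ depends only on $Y$ (training entries), the evaluation entries $A_{ij}$, $(i,j)\in\mathcal{E}^c$, are independent of $\hat P$ given $\mathcal E$. This makes the cross term easy.

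\textbf{Step 2: the quadratic term.} Split the pairs in $\mathcal E^c$ into correctly labeled pairs and pairs touching a misclassified node. On the correctly labeled pairs, $(\hat P_{ij}-P_{ij})^2$ equals the squared block deviation from Step 1, so their contribution is at most $(1-w)n_1n_2$ times that squared deviation. On pairs with $i$ or $j$ misclassified, $|\hat P_{ij}-P_{ij}|\le\rho+o(\rho)$, so their contribution is at most $\rho^2(1-w)(m_1n_2+m_2n_1)$, and concentration of the count of such pairs in $\mathcal E^c$ lets us treat that count as roughly $(1-w)$ times its expectation. Substituting $m_1,m_2$ gives
$$\rho^2(1-w)\Big(\frac{nK\max\{1,K\rho\}}{\rho w}+\frac{nK^2\max\{1,K\rho\}}{\rho w\beta K_2}\Big)\lesssim\frac{nK^2\rho(1-w)}{w\beta}\max\{1,K\rho\},$$
which is the claimed rate, since $\beta\le1$ and $K\le K_2$. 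The correctly-labeled contribution carries an additional factor that is smaller, and we check this using the assumed scaling.

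\textbf{Step 3: the cross term.} Conditional on training, the cross term is a sum of independent centered variables with variance at most $\rho\sum_{\mathcal{E}^c}(\hat P-P)^2$. Bernstein therefore bounds it by $\sqrt{\rho\,Q\log n}+\rho\log n$, where $Q$ denotes the quadratic term. This is dominated by $Q$ plus the target rate.

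The main obstacle is Step 1: getting the block-estimate bias bound cleanly, given that misclassified nodes on the $K_2$ side have only the weaker $\beta$-dependent guarantee, and handling the dependence between the labels and the training entries used to compute $\hat B$. I would first try to absorb this dependence with a crude worst-case bound, $|\hat B-B|\le\rho$ on contaminated pairs, and keep only the misclassification counts.
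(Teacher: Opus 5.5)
Your expansion around $P$ is sound and is a genuinely different route from the paper's. The one place where it is not yet a proof is Step 1.

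\textbf{Comparison with the paper.} The paper never separates a quadratic part from a linear part. It splits the test pairs into correctly labeled ones ($\mathcal I$) and misclassified ones ($\mathcal{II}$). It then bounds each summand $(A_{ij}-\hat B_{k_1k_2})^2-(A_{ij}-B_{l_1l_2})^2$ in absolute value via $|a^2-b^2|\le|a-b|\,|a+b|$. Its leading term is therefore the first-order piece $2|\hat B_{k_1k_2}-B_{k_1k_2}|\sum_{\mathcal E^c}(A_{ij}+B_{k_1k_2})\approx 2|\hat B-B|(1-w)n_1n_2\rho$. Write $r_n:=nK^2\max\{1,K\rho\}/(n_1n_2w\beta)$ for the paper's bound on $|\hat B_{kl}-B_{kl}|$; the target is then exactly $(1-w)n_1n_2\rho\,r_n$.

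Your observation is correct: $\hat P$ is a function of $\mathcal E$ and the training entries, so the test entries are conditionally independent Bernoulli$(P_{ij})$. This makes the linear term cost only $O(\sqrt{\rho Q\log n}+\log n)$. As a result, Step 1 only has to deliver $\max_{k,l}|\hat B_{kl}-B_{kl}|\lesssim\sqrt{\rho r_n}$, rather than $\lesssim r_n$. In particular, you never need the Bernstein fluctuation $\sqrt{\rho KK_2\log n/(n_1n_2w)}$ of $\hat B$ to be dominated by the misclassification bias $r_n$. The paper's first-order term does require this. That is an upper restriction on $\rho$ (for fixed $K_1,K_2$, $w$ and $n_1\asymp n_2$ it reads $\rho\log n=O(1)$), and it is not among the stated hypotheses.

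\textbf{Step 1.} The two devices you say you would try first do not close it.

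Conditioning on $\mathcal E$ does nothing here, because $\hat c$ and $\hat B$ are built from the same training entries.

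The crude bound is not available as stated. On contaminated training pairs the realized $A_{ij}\in\{0,1\}$ are not bounded by $\rho$. Bounding them by $1$ turns a contamination bias of order $r_n$ into one of order $r_n/\rho=K^2\max\{1,K\rho\}/(\min\{n_1,n_2\}\rho w\beta)$. This is $o(1)$ but not $o(\rho)$ in sparse regimes; for $n_1\asymp n_2$, fixed $K_1,K_2$ and $\rho\asymp\log n/n$ it is of order $1/\log n$. So neither $|\hat P_{ij}-P_{ij}|\lesssim\rho$ on misclassified pairs nor $|\hat B-B|\lesssim\sqrt{\rho r_n}$ follows. (The paper splits $\hat B_{kl}$ into an average over the true-block training set plus a correction over the symmetric difference, and applies Hoeffding to the latter as if that set were fixed.)

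Your weaker requirement lets you bypass the dependence with a bound that is uniform over all node sets. For $S=\hat{\mathcal G}_k$ and $T=\hat{\mathcal H}_l$, write $\hat B_{kl}=(\mathbf 1_S^\top Y\mathbf 1_T)/(\mathbf 1_S^\top G\mathbf 1_T)$, with $G$ the training mask and $J$ the all-ones matrix. Then combine:
\begin{itemize}
\item $|\mathbf 1_S^\top(Y-wP)\mathbf 1_T|\le\|Y-wP\|\sqrt{|S||T|}$ together with Proposition~\ref{prop_con};
\item $|\mathbf 1_S^\top(G-wJ)\mathbf 1_T|\le\|G-wJ\|\sqrt{|S||T|}$ together with $\|G-wJ\|\lesssim\sqrt{wn}$;
\item $|S|\gtrsim n_1/K$ and $|T|\gtrsim n_2/K_2$, which hold since $m_1K/n_1,\,m_2K_2/n_2\lesssim r_n/\rho\ll1$.
\end{itemize}
Let $\bar P_{ST}$ be the average of $P$ over $S\times T$. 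These give $|\hat B_{kl}-\bar P_{ST}|\lesssim\sqrt{n\rho\max\{1,K\rho\}KK_2/(wn_1n_2)}$ simultaneously for all labelings. Your misclassification count gives $|\bar P_{ST}-B_{kl}|\lesssim r_n$.

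Assumption~\ref{asm2} forces $\beta\le K/K_2$: the rank-$K$ projection $VV^\top$ has trace $K$, so some diagonal entry is at most $K/K_2$. Hence the fluctuation bound is at most of order $\sqrt{\rho r_n}$. Also, the denominator's relative error $\sqrt{KK_2/(w\min\{n_1,n_2\})}$ is $o(1)$ by $\min\{n_1,n_2\}w\beta\gg K^3$.

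Together with $r_n\ll\rho$, this is all that Steps 2 and 3 use. For the number of test pairs touching misclassified nodes, use the row and column sums of the test mask, which concentrate uniformly. The resulting failure probability is $O(n^{-\gamma})$.
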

	
	\begin{proof}[Proof of Proposition \ref{prop_sbmupper}]
		For notation simplicity, we will simply denote $\hat{c}^{(K_1',K_2')}_1$ as $\hat{c}_1$ and $\hat{c}^{(K_1',K_2')}_2$ as $\hat{c}_2$ when there is no notation confusion. We define several set of entries:
		$$\mathcal{Q}_{k_1,k_2,l_1,l_2}=\{(i,j):\hat{c}_{1,i}=k_1,c_{1,i}=l_1,\hat{c}_{2,j}=k_2,c_{2,j}=l_2\},$$
		$$\mathcal{U}_{k_1,k_2,l_1,l_2}=\{(i,j)\in\mathcal{E}:\hat{c}_{1,i}=k_1,c_{1,i}=l_1,\hat{c}_{2,j}=k_2,c_{2,j}=l_2\},$$
		$$\mathcal{T}_{k_1,k_2,l_1,l_2}=\{(i,j)\in\mathcal{E}^c:\hat{c}_{1,i}=k_1,c_{1,i}=l_1,\hat{c}_{2,j}=k_2,c_{2,j}=l_2\}.$$
		Let $\mathcal{T}_{\cdot,\cdot,l_1,l_2}:=\cup_{k_1,k_2}\mathcal{T}_{k_1,k_2,l_1,l_2}$ be the union taken over the first two entries, and similarly define $\mathcal{T}_{k_1,k_2,\cdot,\cdot}$, $\mathcal{U}_{\cdot,\cdot,l_1,l_2}$, $\mathcal{U}_{k_1,k_2,\cdot,\cdot}$, $\mathcal{Q}_{\cdot,\cdot,l_1,l_2}$ and $\mathcal{Q}_{k_1,k_2,\cdot,\cdot}$. By Proposition \ref{prop_labelcons} and Proposition \ref{prop_labelcons2}, we have with probability larger than $1-n^{-\gamma}$,
		$$|\cup_k(\hat{\mathcal{G}}_k\Delta\mathcal{G}_k)|\leq C\left(\frac{nK}{n_2\rho w}\maxrho\right)$$ and
		$$|\cup_k(\hat{\mathcal{H}}_k\Delta\mathcal{H}_k)|\leq C\left(\frac{nK^2}{n_1\rho w\beta K_2}\maxrho\right).$$ Thus notice
		$$\mathcal{Q}_{k_1,k_2,l_1,l_2}= (\hat{\mathcal{G}}_{k_1}\cap \mathcal{G}_{l_1})\times(\hat{\mathcal{G}}_{k_2}\cap \mathcal{G}_{l_2}),$$
		we have
		$$|\mathcal{Q}_{k_1,k_2,\cdot,\cdot}\backslash\mathcal{Q}_{\cdot,\cdot,k_1,k_2}|\leq|\hat{\mathcal{G}}_{k_1}\backslash \mathcal{G}_{k_1}|\times|\hat{\mathcal{H}}_{k_2}|+|\hat{\mathcal{G}}_{k_1}\cap\mathcal{G}_{k_1}|\times|\hat{\mathcal{H}}_{k_2}\backslash\mathcal{H}_{k_2}|,$$
		which results in with probability larger than $1-n^{-\gamma}$,
		\begin{align*}
			|\mathcal{Q}_{k_1,k_2,\cdot,\cdot}\Delta\mathcal{Q}_{\cdot,\cdot,k_1,k_2}|&\leq|\hat{\mathcal{G}}_{k_1}\Delta \mathcal{G}_{k_1}|\times|\hat{\mathcal{H}}_{k_2}\cup\mathcal{H}_{k_2}|+|\hat{\mathcal{G}}_{k_1}\cap\mathcal{G}_{k_1}|\times|\hat{\mathcal{H}}_{k_2}\Delta\mathcal{H}_{k_2}|\\
			&\leq C\frac{nK}{\rho w\beta K_2}\maxrho,
		\end{align*}
		and similarly
		$$\left|\cup_{k_1,k_2}(\mathcal{Q}_{k_1,k_2,\cdot,\cdot}\Delta\mathcal{Q}_{\cdot,\cdot,k_1,k_2})\right|=O_{\mathrm{P}}\left(\frac{nK}{\rho w}\maxbeta\maxrho\right).$$
		Consequently, since with probability larger than $1-n^{-\gamma}$,
		$$|\mathcal{U}_{\cdot,\cdot,k_1,k_2}|\geq c\frac{n_1n_2}{K_1K_2}w,$$
		as long as $\nmin\rho w\beta\gg K^2$ and $\min\{n_1,n_2\}w\beta\gg K^3$, we have for probability larger than $1-K_1K_2n^{-\gamma}$,
		$$|\mathcal{U}_{k_1,k_2,\cdot,\cdot}|\geq c\frac{n_1n_2}{K_1K_2}w,$$
		$$\left|\mathcal{U}_{k_1,k_2,\cdot,\cdot}\Delta\mathcal{U}_{\cdot,\cdot,k_1,k_2}\right|\leq C\left(\frac{nK}{\rho\beta K_2}\maxrho\right),$$
		and
		$$\left|\cup_{k_1,k_2}(\mathcal{T}_{k_1,k_2,\cdot,\cdot}\Delta\mathcal{T}_{\cdot,\cdot,k_1,k_2})\right|\leq C\left(\frac{nK(1-w)}{\rho w}\maxbeta\maxrho\right).$$
		Denote $\hat{B}^{(K_1,K_2)}$ simply as $\hat{B}$. By Bernstein's inequality, we have
		$$\mathbb{P}\left(\left|\frac{\sum_{U_{\cdot,\cdot,k,l}}A_{ij}}{|U_{\cdot,\cdot,k,l}|}-B_{kl}\right|\geq C\sqrt{\frac{\rho\log n}{|U_{\cdot,\cdot,k,l}|}}\right)\leq2n^{-\gamma}$$
		for some constant $C$ depending on $\gamma$. Moreover, by Hoeffding's inequality, we have for probability larger than $1-n^{-\gamma}$,
		$$\left|\frac{\sum_{\mathcal{U}_{\cdot,\cdot,k,l}\Delta \mathcal{U}_{k,l,\cdot,\cdot}}A_{ij}}{|\mathcal{U}_{\cdot,\cdot,k,l}\Delta \mathcal{U}_{k,l,\cdot,\cdot}|}-\frac{\sum_{\mathcal{U}_{\cdot,\cdot,k,l}\Delta \mathcal{U}_{k,l,\cdot,\cdot}}B_{ij}}{|\mathcal{U}_{\cdot,\cdot,k,l}\Delta \mathcal{U}_{k,l,\cdot,\cdot}|}\right|\leq C\sqrt{\frac{\log n}{|\mathcal{U}_{\cdot,\cdot,k,l}\Delta \mathcal{U}_{k,l,\cdot,\cdot}|}}.$$
		Therefore, we have for probability larger than $1-3n^{-\gamma}$,
		\begin{align}\label{Bbern2}
			|\hat{B}_{kl}-B_{kl}|=&|\frac{\sum_{\mathcal{U}_{k,l,\cdot,\cdot}}A_{ij}}{|\mathcal{U}_{k,l,\cdot,\cdot}|}-B_{kl}|\nonumber\\
			\leq&\frac{|\mathcal{U}_{\cdot,\cdot,k,l}|}{|\mathcal{U}_{k,l,\cdot,\cdot}|}\left|\frac{\sum_{\mathcal{U}_{\cdot,\cdot,k,l}}A_{ij}}{|\mathcal{U}_{\cdot,\cdot,k,l}|}-B_{kl}\right|+\left|1-\frac{|\mathcal{U}_{\cdot,\cdot,k,l}|}{|\mathcal{U}_{k,l,\cdot,\cdot}|}\right|B_{kl}\nonumber\\
			&+\frac{|\mathcal{U}_{\cdot,\cdot,k,l}\Delta \mathcal{U}_{k,l,\cdot,\cdot}|}{|\mathcal{U}_{k,l,\cdot,\cdot}|}\frac{\sum_{\mathcal{U}_{\cdot,\cdot,k,l}\Delta \mathcal{U}_{k,l,\cdot,\cdot}}A_{ij}}{|\mathcal{U}_{\cdot,\cdot,k,l}\Delta \mathcal{U}_{k,l,\cdot,\cdot}|}\nonumber\\
			\leq&C\left(\sqrt{\frac{\rho KK_2\log n}{n_1n_2w}}+\frac{nK^2\maxrho}{n_1n_2\rho w\beta}\rho+\frac{\sqrt{\log n|\mathcal{U}_{\cdot,\cdot,k,l}\Delta \mathcal{U}_{k,l,\cdot,\cdot}|}}{|\mathcal{U}_{k,l,\cdot,\cdot}|}\right)\nonumber\\
			\leq&C\left(\frac{nK^2\maxrho}{n_1n_2w\beta}\right),
		\end{align}
		as long as $K_2\nmin w\rho\beta\ll K^3n$ and $n\rho K=\Omega(K_2\log n)$. Here the first condition is directly followed by $\nmin\rho w\beta\gg K^2$.
		
		Now comparing $\ell_{K_1,K_2}$ and $\ell_0$, we have
		\begin{align*}
			\ell_{K_1,K_2}(A, \mathcal{E}^c) - \ell_0(A, \mathcal{E}^c) 
			=& \sum_{k_1, k_2, l_1, l_2} \sum_{(i,j) \in \mathcal{T}_{k_1, k_2, l_1, l_2}} 
			\left[ (A_{ij}-\hat{B}_{k_1 k_2})^2 -(A_{ij}-B_{l_1 l_2})^2\right] \\
			=& \sum_{k_1, k_2} \sum_{(i,j) \in \mathcal{T}_{k_1, k_2, k_1, k_2}} 
			\left[ (A_{ij}-\hat{B}_{k_1 k_2})^2 -(A_{ij}-B_{k_1 k_2})^2 \right] \\
			&+ \sum_{(k_1, k_2) \ne (l_1, l_2)} \sum_{(i,j) \in \mathcal{T}_{k_1, k_2, l_1, l_2}} 
			\left[(A_{ij}-\hat{B}_{k_1 k_2})^2 -(A_{ij}-B_{l_1 l_2})^2\right] \\
			:=& \mathcal{I} + \mathcal{II}.
		\end{align*}
		Thus we have with probability larger than $1-3K_1^2K_2^2n^{-\gamma}$,
		\begin{align*}
			|\mathcal{I}|\leq& \sum_{k_1, k_2} \sum_{(i,j) \in \mathcal{T}_{k_1, k_2, k_1, k_2}} 
			2|\hat{B}_{k_1 k_2} - B_{k_1 k_2}| \left(A_{ij} + B_{k_1 k_2} + |\hat{B}_{k_1 k_2} - B_{k_1 k_2}| \right) \notag \\
			\leq& C\left( \frac{nK^2\maxrho}{n_1n_2w\beta} n_1n_2(1-w) \rho+ \frac{nK^2\maxrho}{n_1n_2w\beta} n_1n_2(1-w) \rho \right.\\
			&+\left. \left(\frac{nK^2\maxrho}{n_1n_2w\beta} \right)^2 n_1n_2(1-w) \right)\\
			\leq& C\left(\frac{nK^2\rho(1-w)}{w\beta}\maxrho+\frac{n^2K^4(1-w)}{n_1n_2w^2\beta^2}\maxrho^2\right);
		\end{align*}
		\begin{align*}
			|\mathcal{II}| 
			\leq& \sum_{(k_1, k_2) \ne (l_1, l_2)} \sum_{(i,j) \in \mathcal{T}_{k_1, k_2, l_1, l_2}} 
			(2A_{ij} + \hat{B}_{k_1 k_2} + B_{l_1 l_2})(\hat{B}_{k_1 k_2} + B_{l_1 l_2})\\
			=& \sum_{(k_1, k_2) \ne (l_1, l_2)} \sum_{(i,j) \in \mathcal{T}_{k_1, k_2, l_1, l_2}} 
			(2A_{ij} + B_{k_1 k_2} + B_{l_1 l_2} + (\hat{B}_{k_1 k_2} - B_{k_1 k_2}))(B_{k_1 k_2} + B_{l_1 l_2} \\
			&+ (\hat{B}_{k_1 k_2} - B_{k_1 k_2}))\\
			\leq& C \left( \frac{nK(1-w)}{\rho w}\maxbeta\maxrho \rho^2 \right) \\
			=& C\left(\frac{nK\rho(1-w)}{w}\maxbeta\maxrho\right).
		\end{align*}
		Combining the above two formulas, we have with probability larger than $1-3K_1^2K_2^2n^{-\gamma}$,
		$$\ell_{K_1,K_2}(A, \mathcal{E}^c) - \ell_0(A, \mathcal{E}^c)\leq C\left(\frac{nK^2\rho(1-w)}{w\beta}\maxrho\right),$$
		as long as $\nmin\rho w\beta\gg K^2$ and $\min\{n_1,n_2\}w\beta\gg K^3$.
	\end{proof}
	
	We want to prove that 
	$$\sum_{(K_1',K_2')\neq(K_1,K_2)}\mathrm{Pr}\left[L_{K_1',K_2'}(A,\mathcal{E}^c)<L_{K_1,K_2}(A,\mathcal{E}^c)\right]\to0.$$
	We will split the space $\{(K_1',K_2')\neq(K_1,K_2)\}$ into several subspaces and control the sum of probability on each subspace.
	
	\subsubsection{Case $K_1'<K_1$}\label{proof:region1}
	\begin{lemmaA}\label{lemma_underfitindex1}
		Assume $KK_2^2\log K\ll\nmin(1-w)$. Suppose we cluster the first side into $K_1'< K_1$ communities and cluster the second side into $K_2'\leq2K_1K_2$ communities. Define $\mathcal I_{k_1 k_2} = (\mathcal G_{k_1} \times \mathcal H_{k_2})\, \cap\, \mathcal{E}^c$ and $\hat{\mathcal I}_{k_1 k_2} = (\hat{\mathcal G}_{k_1} \times \hat{\mathcal H}_{k_2})\, \cap\, \mathcal{E}^c$. Then, for some $\alpha>0$ depending on $\pi_0$, with probability larger than $1-\exp(-\alpha n_1n_2(1-w)^2/K_1^3K_2^2)$, there must exist $l_1, l_1'\in [K_1]$ and $k_1 \in [K_1']$, with $l_2\in[K_2]$ and $k_2\in[K_2']$ and a constant $c$ that only depends on $\pi_0$, such that
		\begin{enumerate}
			\setlength{\itemsep}{0.05cm}
			\item $|\hat{\mathcal I}_{k_1 k_2} \cap \mathcal I_{l_1l_2}| \geq cn_1n_2(1-w)/K_1^3K_2^2$
			\item $|\hat{\mathcal I}_{k_1k_2} \cap \mathcal I_{l_1'l_2}| \geq cn_1n_2(1-w)/K_1^3K_2^2$.
			\item $|B_{0,l_1l_2}-B_{0,l_1'l_2}|\geq d$.
		\end{enumerate}
	\end{lemmaA}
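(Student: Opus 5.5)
The plan is to separate the combinatorial pigeonhole part, which holds deterministically for the full blocks $\mathcal G\times\mathcal H$, from the concentration part, which accounts for intersecting with the random evaluation set $\mathcal E^c$. Throughout, the estimated partitions $\hat{\mathcal G}_{k_1}$ and $\hat{\mathcal H}_{k_2}$ are treated as fixed. They are computed from the training entries only, and the split indicators of $\mathcal E^c$ are independent of these entries. Hence, conditional on the training data, each pair $(i,j)$ belongs to $\mathcal E^c$ independently with probability $1-w$, and the cardinality of any fixed rectangle intersected with $\mathcal E^c$ is a sum of independent Bernoulli$(1-w)$ variables.

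\textbf{Step 1: deterministic pigeonhole on side 1.} By Assumption~\ref{asm1}, $|\mathcal G_l|\ge \pi_0 n_1/K_1$ for every $l$. Since there are only $K_1'<K_1$ estimated groups, for each $l$ pick $k(l)$ with $|\hat{\mathcal G}_{k(l)}\cap\mathcal G_l|\ge \pi_0 n_1/K_1^2$. The map $l\mapsto k(l)$ sends $[K_1]$ into $[K_1']$, so it cannot be injective, and there exist $l_1\neq l_1'$ with a common value $k_1=k(l_1)=k(l_1')$.

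\textbf{Step 2: choosing the side-2 indices.} By Assumption~\ref{asm3}(iii), rows $l_1$ and $l_1'$ of $B_0$ differ by at least $d$ in some coordinate $l_2$; this gives item 3. Next, $|\mathcal H_{l_2}|\ge \pi_0 n_2/K_2$, and there are at most $K_2'\le 2K_1K_2$ estimated side-2 groups. By pigeonhole there is $k_2$ with $|\hat{\mathcal H}_{k_2}\cap\mathcal H_{l_2}|\ge \pi_0 n_2/(2K_1K_2^2)$. Therefore both rectangles $(\hat{\mathcal G}_{k_1}\cap\mathcal G_{l_1})\times(\hat{\mathcal H}_{k_2}\cap\mathcal H_{l_2})$ and the analogous one with $l_1'$ have size at least $\pi_0^2 n_1n_2/(2K_1^3K_2^2)$, before intersecting with $\mathcal E^c$.

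\textbf{Step 3: concentration over $\mathcal E^c$ and union bound.} For a fixed rectangle $R$ of size $m\ge c_0 n_1n_2/(K_1^3K_2^2)$, Hoeffding's inequality gives
\[
\Pr\bigl(|R\cap\mathcal E^c|<m(1-w)/2\bigr)\le \exp\bigl(-m(1-w)^2/2\bigr).
\]
The main obstacle is that the indices $(k_1,k_2,l_1,l_1',l_2)$ are selected using the data, so the bound must hold uniformly over all candidate rectangles. The first route is to take a union over the at most $K_1'K_1\cdot K_2'K_2\le 2K_1^3K_2^2$ rectangles of the form $(\hat{\mathcal G}_{k}\cap\mathcal G_l)\times(\hat{\mathcal H}_{k'}\cap\mathcal H_{l'})$, which are fixed given the training data. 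This multiplies the failure probability by a polynomial factor in $K_1,K_2$. That factor is absorbed into $\exp(-\alpha n_1n_2(1-w)^2/K_1^3K_2^2)$ with a slightly smaller $\alpha$, because the hypothesis $KK_2^2\log K\ll \min\{n_1,n_2\}(1-w)$ implies that the exponent dominates $\log(K_1K_2)$; note that $n_1n_2(1-w)^2/(K_1^3K_2^2)$ is much larger than $\log K$.

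If the problem is instead read unconditionally, so that the estimated partitions depend on $\mathcal E$ through the fitting step, the fallback is a union over all label assignments. That is too costly, since it contributes $K_1'^{n_1}$ terms. The plan therefore rests on the conditioning argument of Step 3, which uses the fact that $\hat{\mathcal G}$ and $\hat{\mathcal H}$ are functions of the training entries only. Setting $c=\pi_0^2/4$ then yields items 1 and 2.
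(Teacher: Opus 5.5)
\textbf{Gap in Step 3: the conditioning claim is false.} Your Steps 1 and 2 match the paper's pigeonhole argument: the same thresholds $\pi_0 n_1/K_1^2$ and $\pi_0 n_2/(2K_1K_2^2)$, the same use of the $\ell_\infty$ row separation, and the same $c=\pi_0^2/4$. Step 3, on which your plan rests, does not hold.

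The labels $\hat{c}_1,\hat{c}_2$ are computed from $Y$, which equals $A$ on $\mathcal E$ and $0$ on $\mathcal E^c$. The training entries are selected by the split itself, so the labels are not independent of $\mathcal E^c$.
\begin{itemize}
\item If ``the training data'' includes the index set $\mathcal E$, conditioning on it fixes $\mathcal E^c$ and leaves no randomness.
\item If it means only the matrix $Y$, then every pair with $Y_{ij}=1$ lies in $\mathcal E$ with probability one, while $\Pr\{(i,j)\in\mathcal E^c\mid Y_{ij}=0\}=(1-w)/(1-wP_{ij})$.
\end{itemize}
Either way, given $\hat{c}_1,\hat{c}_2$ the indicators of $\mathcal E^c$ are not i.i.d.\ Bernoulli$(1-w)$. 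So Hoeffding cannot be applied to the data-chosen rectangles $(\hat{\mathcal G}_k\cap\mathcal G_l)\times(\hat{\mathcal H}_{k'}\cap\mathcal H_{l'})$, and a union over $2K_1^3K_2^2$ of them proves nothing. A repair conditional on $Y$ would also need a uniform bound on the number of training edges inside large rectangles, which you do not supply.

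\textbf{The missing idea: uniformity over deterministic sets.} The paper makes the concentration statement uniform over \emph{deterministic} sets. It shows that, with probability at least $1-\exp(-\alpha n_1n_2(1-w)^2/(K_1^3K_2^2))$, the bound $|(S\times T)\cap\mathcal E^c|\ge c\,n_1n_2(1-w)/(K_1^3K_2^2)$ holds simultaneously for all $i\in[K_1]$, $j\in[K_2]$ and all $S\subset\mathcal G_i$, $T\subset\mathcal H_j$ with $|S|\ge\pi_0n_1/K_1^2$ and $|T|\ge\pi_0n_2/(2K_1K_2^2)$. The proof is Hoeffding for each fixed $(S,T)$ plus a union bound over all such pairs. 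Your pigeonhole rectangles then fall inside this event, however $\hat{c}_1,\hat{c}_2$ depend on $\mathcal E$.

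You rejected this route as too costly, but it is a union over subsets of the true communities, not over label assignments. Because the count is monotone under inclusion, it suffices to take $S,T$ of exactly the threshold sizes $m_1,m_2$. Stirling then gives
\[
\log\binom{n_1}{m_1}+\log\binom{n_2}{m_2}=O\bigl(n_1\log K_1/K_1^2+n_2\log(K_1K_2)/(K_1K_2^2)\bigr).
\]
This is dominated by the Hoeffding exponent, of order $n_1n_2(1-w)^2/(K_1^3K_2^2)$, precisely when $K_1K_2^2\log K_1\ll n_2(1-w)^2$ and $K_1^2\log(K_1K_2)\ll n_1(1-w)^2$. For $1-w$ bounded away from zero, that is exactly what the hypothesis $KK_2^2\log K\ll\min\{n_1,n_2\}(1-w)$ provides. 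In your argument the hypothesis only absorbs a polynomial factor, which signals that the intended mechanism is different.

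Even the union over all label assignments is not hopeless. Its log-cardinality $n_1\log K_1'+n_2\log K_2'$ is linear in $n_1,n_2$, while the exponent scales with $n_1n_2$. It only needs a condition roughly a factor $K^2$ stronger than the lemma's.
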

	\begin{proof}[Proof of Lemma \ref{lemma_underfitindex1}]
		We first prove a uniform bound on the test sample size as follows. Consider two subsets $S_i \subset \mathcal{G}_i$ and $T_j \subset \mathcal{H}_j$ with $|S_i| = n_{S_i} \geq \frac{\pi_0n_1}{K_1^2}$ and $|T_j| = n_{T_j} \geq \frac{\pi_0 n_2}{2K_1K_2^2}$, where $\pi_0$ is the constant in Assumption 1. We know that the cardinality of the test set within $S_i \times T_j$, given by $|(S_i \times T_j) \cap \mathcal{E}^c|$, is Binomial$(n_{S_i} n_{T_j}, 1 - w)$.
		
		Thus by Hoeffding's inequality, we have
		\begin{align*}
			\mathrm{Pr}\left( |(S_i \times T_j) \cap \mathcal{E}^c| \geq \frac{\pi_0^2 n_1n_2 (1 - w)}{4K_1^3K_2^2} \right) 
			&\geq \mathrm{Pr}\left( |(S_i \times T_j) \cap \mathcal{E}^c| \geq \frac{n_{S_i} n_{T_j} (1 - w)}{2} \right) \\
			&\geq 1 - 2 \exp\left( - \frac{1}{2} n_{S_i} n_{T_j} (1 - w)^2 \right) \\
			&\geq 1 - 2 \exp\left( - \frac{ \pi_0^2 n_1n_2 (1 - w)^2}{4K_1^3K_2^2} \right)
		\end{align*}
		Taking $c = \frac{\pi_0^2}{4}$ gives
		\[
		\mathrm{Pr}\left( |(S_i \times T_j) \cap \mathcal{E}^c| \geq c \frac{n_1n_2(1-w)}{K_1^3K_2^2} \right) \geq 1 - 2 \exp\left(-c\frac{n_1n_2(1-w)^2}{K_1^3K_2^2}\right)
		\]
		
		To obtain a uniform bound for all $i\in [K],j\in[K_2]$ and all subsets satisfying $|S_i| = n_{S_i} \geq \frac{\pi_0 n_1}{K_1^2}$ and $|T_j| = n_{T_j} \geq \frac{\pi_0 n_2}{2K_1K_2^2}$, we apply the union bound across all such sets. The number of terms in the sum is bounded above by
		\[
		K_1K_2 \sum_{m_1\in [\frac{\pi_0 n_1}{K_1^2}, (1 - \pi_0)n_1]} \sum_{m_2\in [\frac{\pi_0 n_2}{2K_1K_2^2}, (1 -\pi_0)n_2]}\binom{n_1}{m_1} \binom{n_2}{m_2}.
		\]
		
		By Sterling's approximation, $\log\binom{n_1}{m_1} = \Theta(\ell_{dev}(m_1/n_1)n_1)$ where $\ell_{dev}$ is the binomial deviance. Notice that $\ell_{dev}(\pi_0/K_1^2)\asymp\log K_1/K_1^2$, thus the number of terms can be bounded by 
		$$K_1K_2 n_1n_2 \exp\left(c'\left(\frac{n_1\log K_1}{K_1^2}+\frac{n_2\log K_2}{K_1K_2^2}\right)\right),$$
		where $c'$ is a constant depending on $\pi_0$. Thus by the union bound, as long as $KK_2^2\log K\ll\nmin(1-w)$, with probability at least
		\begin{align*}
			&1 - 2K_1K_2 n_1n_2 \exp\left(c'\left(\frac{n_1\log K_1}{K_1^2}+\frac{n_2\log K_2}{K_1K_2^2}\right)-c\frac{n_1n_2(1-w)}{K_1^3K_2^2}\right)\\
			\leq&1-\exp\left(-\gamma\frac{n_1n_2(1-w)}{K_1^3K_2^2}\right),
		\end{align*}
		for any $i\in [K],j\in[K_2]$ and any $S_i\subset\mathcal{G}_i$ and $T_j\subset\mathcal{H}_j$ with $|S_i| = n_{S_i} \geq \frac{\pi_0 n_1}{K_1^2}$ and $|T_j| = n_{T_j} \geq \frac{\pi_0 n_2}{2K_1K_2^2}$, we have
		$$|(S_i \times T_j) \cap \mathcal{E}^c|\geq c\frac{n_1n_2(1-w)}{K_1^3K_2^2}.$$
		Now, for each $l\in[K_1]$, there must exist $\hat{l}\in[K_1']$ such that
		$$|\hat{\mathcal{G}}_{\hat{l}}\,\cap\,\mathcal{G}_l|\geq\frac{|\mathcal{G}_l|}{K_1'}\geq\frac{\pi_0n_1}{K_1^2}.$$
		By the pigeonhole principle, there must exist $l_1$ and $l_1'$ such that $\hat{l}_1=\hat{l}_1':=k_1$. Since each row of $B_0$ are distinct and the $\ell_{\infty}$ norm is lower bounded by $d$, therefore there exists $l_2\in[K_2]$ such that $|B_{0,l_1l_2}-B_{0,l_1'l_2}|\geq d$. Therefore, there exists $k_2\in[K_2']$ such that
		$$|\hat{\mathcal{H}}_{k_2}\,\cap\,\mathcal{H}_{l_2}|\geq\frac{|\mathcal{H}_{l_2}|}{K_2'}\geq\frac{\pi_0n_2}{2K_1K_2^2}.$$
		Then such $k_1,k_2,l_1,l_1'$ and $l_2$ satisfies the assertion.
	\end{proof}
	
	\begin{propositionA}\label{prop_underfit1}
		For all $K_1'<K_1$ and $K_2'\leq2K_1K_2$, we have with probability larger than $1-n^{-\gamma}-\exp(-\alpha n_1n_2(1-w)^2/K_1^3K_2^2)-2K_1^3K_2^2n^{-6}$, $$\ell_{K_1',K_2'}(A,\mathcal{E}^c)-\ell_{0}(A,\mathcal{E}^c)\geq C\left(\frac{n_1n_2d^2\rho^2(1-w)}{K_1^3K_2^2}\right),$$
		as long as $n_1n_2d^2\rho^2(1-w)\gg K^6K_2^4\log n$ and $n_1n_2d^4\rho(1-w)\gg K^7K_2^4$.
	\end{propositionA}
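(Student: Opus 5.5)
The plan is a standard bias--variance decomposition of the test loss, with the bias forced by the merged block found in Lemma \ref{lemma_underfitindex1}. Write $\hat P=\hat P^{(K_1',K_2')}$ and expand
$$\ell_{K_1',K_2'}(A,\mathcal E^c)-\ell_0(A,\mathcal E^c)=\sum_{(i,j)\in\mathcal E^c}(P_{ij}-\hat P_{ij})^2-2\sum_{(i,j)\in\mathcal E^c}(A_{ij}-P_{ij})(\hat P_{ij}-P_{ij}).$$
The first term is the bias. The second is a cross term. Conditionally on the training data, $\hat P$ is fixed and the test entries $A_{ij}$, $(i,j)\in\mathcal E^c$, are independent of it. This is the key structural fact, because $\hat c_1,\hat c_2,\hat B$ are built from $Y$ only.

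\textbf{Bias lower bound.} Work on the event of Lemma \ref{lemma_underfitindex1}, which holds with probability at least $1-\exp(-\alpha n_1n_2(1-w)^2/K_1^3K_2^2)$. That lemma gives an estimated block $(k_1,k_2)$ containing two test sets $\hat{\mathcal I}_{k_1k_2}\cap\mathcal I_{l_1l_2}$ and $\hat{\mathcal I}_{k_1k_2}\cap\mathcal I_{l_1'l_2}$. Each has size at least $cn_1n_2(1-w)/K_1^3K_2^2$, and on them $P$ takes the values $\rho B_{0,l_1l_2}$ and $\rho B_{0,l_1'l_2}$, which differ by at least $\rho d$. The estimator $\hat P$ is constant ($=\hat B_{k_1k_2}$) on the whole block. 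Hence at least one of the two values lies at distance $\ge\rho d/2$ from $\hat B_{k_1k_2}$. This gives
$$\sum_{\mathcal E^c}(P_{ij}-\hat P_{ij})^2\ \ge\ \frac{c}{4}\,\frac{n_1n_2d^2\rho^2(1-w)}{K_1^3K_2^2}.$$
The only constraint used is that $\hat P$ is blockwise constant on $\hat{\mathcal G}_{k_1}\times\hat{\mathcal H}_{k_2}$. The argument therefore does not depend on how good the clustering is, which is why it works uniformly in $K_2'\le 2K_1K_2$.

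\textbf{Cross term.} Split $\hat P-P$ over the at most $K_1'K_2'K_1K_2\lesssim K_1^3K_2^2$ cells $\mathcal T_{k_1,k_2,l_1,l_2}$. On each cell, $\hat P-P$ equals the constant $\hat B_{k_1k_2}-B_{l_1l_2}$, so
$$\sum_{\mathcal E^c}(A_{ij}-P_{ij})(\hat P_{ij}-P_{ij})=\sum_{\text{cells}}(\hat B_{k_1k_2}-B_{l_1l_2})\sum_{\mathcal T_{k_1,k_2,l_1,l_2}}(A_{ij}-P_{ij}).$$
Apply Bernstein conditionally on the training split and labels to each cell sum, which has variance at most $\rho|\mathcal T|$. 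With probability $1-2n^{-6}$ per cell, its absolute value is $\lesssim\sqrt{\rho|\mathcal T|\log n}+\log n$. A union bound over cells gives the $2K_1^3K_2^2n^{-6}$ term. The $n^{-\gamma}$ term covers the event that $\hat B$ is entrywise $O(\rho)$, which follows from the same Bernstein estimate for the training blocks used in Proposition \ref{prop_sbmupper}.

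Cauchy--Schwarz then bounds the cross term by a multiple of
$$\Big(\sum_{\text{cells}}(\hat B-B)^2|\mathcal T|\Big)^{1/2}\Big(\#\text{cells}\cdot\rho\log n\Big)^{1/2}=\sqrt{\text{bias}}\cdot\sqrt{K_1^3K_2^2\rho\log n},$$
plus an additive $\rho\,K_1^3K_2^2\log n$ from the $\log n$ remainders. The cross term is negligible relative to the bias once
$$\text{bias}\gg K_1^3K_2^2\rho\log n\cdot\max\{1,\rho^{-1}\}.$$
This reduces to the two stated conditions, $n_1n_2d^2\rho^2(1-w)\gg K^6K_2^4\log n$ and $n_1n_2d^4\rho(1-w)\gg K^7K_2^4$. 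The extra $d^2$ and $K$ factors in the second condition come from keeping the $\log n$-free remainder and the factor $K_1'$ in the cell count separately.

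\textbf{Main obstacle.} The hard part is that $\hat c_1,\hat c_2$ are arbitrary, misspecified partitions, so no label-recovery result is available. Everything has to be conditioned on the training data so that the test entries stay independent and the cell partition is deterministic. The variance must then be controlled uniformly over cells, with counts of only polynomial size in $K$, rather than over all possible partitions.
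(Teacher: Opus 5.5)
Your proof is correct, but it takes a different route from the paper's. The paper never works with the trained $\hat B$. On each estimated block it lower-bounds the test loss by the loss of the best constant fitted to the test entries themselves: the average $\hat p$ over $\mathcal T_{1,1,1,1}\cup\mathcal T_{1,1,2,1}$ for the merged pair, and the cell averages $\hat p_{k_1,k_2,l_1,l_2}$ elsewhere. The bias then comes from $h(t\hat p_1+(1-t)\hat p_2)-h(\hat p_1)=|\mathcal T_{1,1,1,1}|(1-t)^2(\hat p_1-\hat p_2)^2$, together with a Bernstein bound giving $|\hat p_1-\hat p_2|\gtrsim d\rho$. The remaining cells cost at most $3K_1K_2K_1'K_2'\log n$ by per-cell Hoeffding.

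You instead keep the actual estimator and use the exact identity $\ell-\ell_0=\sum_{\mathcal E^c}(\hat P-P)^2-2\sum_{\mathcal E^c}(A-P)(\hat P-P)$. You get the bias deterministically on the event of Lemma~\ref{lemma_underfitindex1}, since a single constant cannot lie within $\rho d/2$ of both $B_{l_1l_2}$ and $B_{l_1'l_2}$. You then control the cross term by conditional Bernstein and Cauchy--Schwarz. The paper's oracle replacement buys independence from the values of $\hat B$. Your route buys a bias step that needs no concentration of $\hat p_1,\hat p_2$ and no balance between the two merged cells; the paper asserts without proof that $t$ and $1-t$ are of constant order. It also makes explicit the conditioning on the training data, which the paper uses only tacitly when it applies Hoeffding to data-dependent cells.

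Two corrections. First, your claim that $\hat B$ is entrywise $O(\rho)$ with probability $1-n^{-\gamma}$, via the estimate of Proposition~\ref{prop_sbmupper}, is unsupported. That estimate rests on label recovery at the true $(K_1,K_2)$, whereas here the estimated blocks are arbitrary, data-dependent and possibly tiny. You do not need it. Since $\hat B_{k_1k_2}\in[0,1]$ and $B_{l_1l_2}\in[0,\rho]$, you have $|\hat B-B|\le 1$, so the $\log n$ remainders total at most a constant times $K_1^3K_2^2\log n$. That is exactly what your factor $\max\{1,\rho^{-1}\}$ encodes.

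Second, with this bound your argument needs only $\text{bias}\gg K_1^3K_2^2\log n$, i.e.\ the first condition $n_1n_2d^2\rho^2(1-w)\gg K^6K_2^4\log n$. Your explanation of where the second condition comes from is therefore not right. In the paper, $n_1n_2d^4\rho(1-w)\gg K^7K_2^4$ arises from comparing the bias with its bound $|h(\hat p_1)-h(p_1)|=O_{\mathrm P}(\sqrt{n_1n_2(1-w)K}\,\rho^{3/2})$. Your decomposition never produces that term, so you in fact prove the statement under a weaker hypothesis.
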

	\begin{proof}[Proof of Proposition \ref{prop_underfit1}]
		Without loss of generalization, we assume that in Lemma \ref{lemma_underfitindex1}, $l_1=1,l_1'=2,l_2=1$ and $k_1=1$, $k_2=1$. We define several set of entries:
		$$\mathcal{Q}_{k_1,k_2,l_1,l_2}=\{(i,j):\hat{c}_{1,i}=k_1,c_{1,i}=l_1,\hat{c}_{2,j}=k_2,c_{2,j}=l_2\},$$
		$$\mathcal{U}_{k_1,k_2,l_1,l_2}=\{(i,j)\in\mathcal{E}:\hat{c}_{1,i}=k_1,c_{1,i}=l_1,\hat{c}_{2,j}=k_2,c_{2,j}=l_2\},$$
		$$\mathcal{T}_{k_1,k_2,l_1,l_2}=\{(i,j)\in\mathcal{E}^c:\hat{c}_{1,i}=k_1,c_{1,i}=l_1,\hat{c}_{2,j}=k_2,c_{2,j}=l_2\}.$$
		Then we have  
		\begin{align*}
			&\ell_{K_1',K_2'}(A,\mathcal{E}^c)-\ell_{0}(A,\mathcal{E}^c) \\
			\geq& \sum_{(i,j) \in \mathcal{T}_{1,1,1,1}} \left[(A_{ij}-\hat{p})^2- (A_{ij}-B_{11})^2 \right]+\sum_{(i,j) \in \mathcal{T}_{1,1,2,1}} \left[ (A_{ij}-\hat{p})^2-(A_{ij}-B_{21})^2 \right]\\
			&+ \sum_{{(k_1, k_2, l_1, l_2) \notin \{(1,1,1,1), (1,1,2,1)\}}} 
			\sum_{(i,j) \in \mathcal{T}_{k_1, k_2, l_1, l_2}} 
			\left[(A_{ij}-\hat{p}_{k_1,k_2,l_1,l_2})^2-(A_{ij}-B_{l_1 l_2})^2\right]\\
			:=& \mathcal{I} + \mathcal{II} + \mathcal{III},
		\end{align*}
		where $\hat{p}$ is the average of $A_{ij}$ over $\mathcal{T}_{1,1,1,1} \cup \mathcal{T}_{1,1,2,1}$ and $\hat{p}_{k_1,k_2,l_1,l_2}$ is the average of $A_{ij}$ over $\mathcal{T}_{k_1,k_2,l_1,l_2}$. Here $\hat{p}=t\hat{p}_1+(1-t)\hat{p}_2$, where $\hat{p}_1=\hat{p}_{1,1,1,1}$ and $\hat{p}_2=\hat{p}_{1,1,2,1}$, with $t=\frac{|\mathcal{T}_{1,1,1,1}|}{|\mathcal{T}_{1,1,1,1}|+|\mathcal{T}_{1,1,2,1}|}.$ We have $|\mathcal{T}_{1,1,1,1}|\sim|\mathcal{T}_{1,1,2,1}|$, so $t$ and $1-t$ are of constant order. Denote correspondingly $p_1=B_{11}$, $p_2=B_{21}$, and denote $h(p):=\sum_{(i,j)\in\mathcal{T}_{1,1,1,1}}(A_{ij}-p)^2$. We have
		$$\mathcal{I}=h(t\hat{p}_1+(1-t)\hat{p}_2)-h(p_1)\geq|\mathcal{T}_{1,1,1,1}|(1-t)^2|\hat{p}_1-\hat{p}_2|^2+h(\hat{p}_1)-h(p_1).$$
		By Lemma \ref{lemma_underfitindex1}, $|\mathcal{T}_{1,1,1,1}|,|\mathcal{T}_{1,1,2,1}|\geq cn_1n_2(1-w)/K_1^2K_2$, so by Bernstein inequality, with probability larger than $1-n^{-\gamma}$, 
		$$|\hat{p}_1-\hat{p}_2|\geq|p_1-p_2|-|\hat{p}_1-p_1|-|\hat{p}_2-p_2|\geq|p_1-p_2|-2C\left(\sqrt{\frac{\rho K_1^3K_2^2\log n}{n_1n_2(1-w)}}\right)\geq Cd\rho$$
		for some constant $C$ depending on $\gamma$ as long as $n_1n_2d^2\rho(1-w)\gg K^3K_2^2\log n$. Similarly as the proof of Proposition \ref{prop_sbmupper} but with bound of $|\hat{p}_1-p_1|$ becomes $O_{\mathrm{P}}\left(\sqrt{\frac{\rho K^3K_2^2\log n}{n_1n_2(1-w)}}\right)$, also notice that $|\mathcal{T}_{1,1,1,1}|\leq cn_1n_2(1-w)/KK_2$, we get
		$$|h(\hat{p}_1)-h(p_1)|=O_{\mathrm{P}}\left(\sqrt{{n_1n_2(1-w)K}}\rho^{3/2}\right).$$
		Thus combining the above formulas, we have as long as $n_1n_2d^4\rho(1-w)\gg K^7K_2^4$, with probability larger than $1-n^{-\gamma}-\exp(-\alpha n_1n_2(1-w)^2/K_1^3K_2^2)$,
		$$\mathcal{I}\geq C\left(\frac{n_1n_2d^2\rho^2(1-w)}{K_1^3K_2^2}\right).$$
		Similarly with probability larger than $1-n^{-\gamma}-\exp(-\alpha n_1n_2(1-w)^2/K_1^3K_2^2)$, $\mathcal{II}$ has the same lower bound. Moreover,
		\begin{align*}
			\mathcal{III}=&-\sum_{{(k_1, k_2, l_1, l_2) \notin \{(1,1,1,1), (1,1,2,1)\}}}|\mathcal{T}_{k_1, k_2, l_1, l_2}|(\hat{p}_{k_1, k_2, l_1, l_2}-B_{l_1l_2})^2
		\end{align*}
		Here $K_1,K_2$ may diverge to $n$, so we need a more careful control of the probability since this is an infinite sum. We have
		\begin{align*}
			&\mathrm{Pr}\left[\mathcal{III}\leq-3K_1K_2K_1'K_2'\log n\right]\\
			\leq&\mathrm{Pr}\left[\sum_{k_1,k_2,l_1,l_2}|\mathcal{T}_{k_1,k_2,l_1,l_2}|\left(\hat{p}_{k_1,k_2,l_1,l_2}-B_{l_1l_2}\right)^2\geq 3K_1K_2K_1'K_2'\log n\right]
		\end{align*}
		Now we shall estimate the bound of $\sum_{k_1,k_2,l_1,l_2}|\mathcal{T}_{k_1,k_2,l_1,l_2}|\left(\hat{p}_{k_1,k_2,l_1,l_2}-B_{l_1l_2}\right)^2$. For each $\{k_1,k_2,l_1,l_2\}$, by Hoeffding's inequality, we have
		$$\mathrm{Pr}\left[|\hat{p}_{k_1,k_2,l_1,l_2}-B_{l_1l_2}|\geq \sqrt{\frac{3\log n}{|\mathcal{T}_{k_1,k_2,l_1,l_2}|}}\right]
		\leq 2\exp\left(-\frac{6\log n|\mathcal{T}_{k_1,k_2,l_1,l_2}|}{|\mathcal{T}_{k_1,k_2,l_1,l_2}|}\right)\leq 2n^{-6},$$
		i.e.,
		$$\mathrm{Pr}\left[|\mathcal{T}_{k_1,k_2,l_1,l_2}|\left(\hat{p}_{k_1,k_2,l_1,l_2}-B_{l_1l_2}\right)^2\geq 3\log n\right]\leq 2n^{-6}$$
		Therefore, by pigeonhole principle, if $\sum_{k_1,k_2,l_1,l_2}|\mathcal{T}_{k_1,k_2,l_1,l_2}|\left(\hat{p}_{k_1,k_2,l_1,l_2}-B_{l_1l_2}\right)^2\geq 3K_1K_2K_1'K_2'\log n$, then there must exists some $\{k_1,k_2,l_1,l_2\}$, such that $$|\mathcal{T}_{k_1,k_2,l_1,l_2}|\left(\hat{p}_{k_1,k_2,l_1,l_2}-B_{l_1l_2}\right)^2\geq 2\log n.$$ Thus, 
		\begin{align*}
			&\mathrm{Pr}\left[\sum_{k_1,k_2,l_1,l_2}|\mathcal{T}_{k_1,k_2,l_1,l_2}|\left(\hat{p}_{k_1,k_2,l_1,l_2}-B_{l_1l_2}\right)^2\geq3K_1K_2K_1'K_2'\log n\right]\\
			\le & \sum_{k_1,k_2,l_1,l_2}\mathrm{Pr}\left[|\mathcal{T}_{k_1,k_2,l_1,l_2}|\left(\hat{p}_{k_1,k_2,l_1,l_2}-B_{l_1l_2}\right)^2\geq3\log n\right]\\
			\leq & 2K_1K_2K_1'K_2'n^{-6}\leq 2K_1^3K_2^2n^{-6}.
		\end{align*}
		Combining all the results, we have with probability larger than $1-n^{-\gamma}-\exp(-\alpha n_1n_2(1-w)^2/K_1^3K_2^2)-2K_1^3K_2^2n^{-6}$, $$\ell_{K_1',K_2'}(A,\mathcal{E}^c)-\ell_{0}(A,\mathcal{E}^c)\geq C\left(\frac{n_1n_2d^2\rho^2(1-w)}{K_1^3K_2^2}\right),$$
		as long as $n_1n_2d^2\rho^2(1-w)\gg K^6K_2^4\log n$.
	\end{proof}
	\begin{remark}
		Here we do not use Bernstein's inequality to obtain a lower bound for $|\hat{p}_{k_1,k_2,l_1,l_2}-B_{l_1,l_2}|$ similarly as in the traditional unipartite and fixed $K$ case, since that we cannot ensure that $\rho|\mathcal{T}_{k_1,k_2,l_1,l_2}|=\Omega(1)$ for any $k_1,k_2,l_1,l_2$.
	\end{remark}
	\begin{propositionA}\label{prop_overfit1}
		When $K_1'<K_1$ and $K_2'>2K_1K_2$, we have $$\mathrm{Pr}\left[\ell_{K_1',K_2'}(A,\mathcal{E}^c)-\ell_{0}(A,\mathcal{E}^c)\leq-3K_1K_2K_1'K_2'\log n \right]\leq\frac{2K_1^2K_2}{n^5}.$$
	\end{propositionA}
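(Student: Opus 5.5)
The plan is to reuse the decomposition of the $\mathcal{III}$ term in the proof of Proposition \ref{prop_underfit1}, but now applied to every cell, since no underfitting signal is needed here: we only need that the loss of the candidate cannot drop far below $\ell_0$. Work with the refined cells $\mathcal{T}_{k_1,k_2,l_1,l_2}$ ($k_1\in[K_1']$, $k_2\in[K_2']$, $l_1\in[K_1]$, $l_2\in[K_2]$), and write $\hat p_{k_1,k_2,l_1,l_2}$ for the average of $A_{ij}$ over $\mathcal{T}_{k_1,k_2,l_1,l_2}$. Since the fitted value $\hat P_{ij}$ is constant (equal to $\hat B_{k_1k_2}$) on each cell, and for each cell the constant minimizing $\sum_{(i,j)\in\mathcal{T}}(A_{ij}-p)^2$ is its sample mean, we get
\begin{align*}
\ell_{K_1',K_2'}(A,\mathcal{E}^c)-\ell_0(A,\mathcal{E}^c)
\geq -\sum_{k_1,k_2,l_1,l_2}|\mathcal{T}_{k_1,k_2,l_1,l_2}|\left(\hat p_{k_1,k_2,l_1,l_2}-B_{l_1l_2}\right)^2 ,
\end{align*}
where the sum runs over nonempty cells. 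This uses only $\sum_{\mathcal T}(A_{ij}-c)^2\geq \sum_{\mathcal T}(A_{ij}-\hat p)^2$ and $\sum_{\mathcal T}(A_{ij}-\hat p)^2-\sum_{\mathcal T}(A_{ij}-B_{l_1l_2})^2=-|\mathcal T|(\hat p-B_{l_1l_2})^2$.

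The key steps are as follows. First, condition on the labels $\hat c_1,\hat c_2$ and the split $\mathcal{E}$. The labels are computed from the training entries only, so the test entries $A_{ij}$, $(i,j)\in\mathcal{E}^c$, stay independent Bernoulli$(B_{l_1l_2})$ within each cell. Second, for each fixed cell apply Hoeffding's inequality exactly as in the proof of Proposition \ref{prop_underfit1}:
\begin{align*}
\Pr\left[|\mathcal{T}_{k_1,k_2,l_1,l_2}|(\hat p_{k_1,k_2,l_1,l_2}-B_{l_1l_2})^2\geq 3\log n\right]\leq 2n^{-6}.
\end{align*}
Third, use the pigeonhole argument. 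If the total sum is at least $3K_1K_2K_1'K_2'\log n$, then some cell has a term at least $3\log n$, because there are at most $K_1'K_2'K_1K_2$ cells. A union bound then gives probability at most $2K_1K_2K_1'K_2'n^{-6}$.

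The point requiring care, and the main obstacle, is that $K_2'$ is unbounded (up to $n_2$), so the number of cells is not $O(\mathrm{poly}(K))$. Counting only the \emph{nonempty} cells fixes this. Each test pair $(i,j)$ lies in exactly one cell, and the cells with a fixed $(k_2,l_2)$ form a partition indexed by $(k_1,l_1)$. Moreover, for fixed $j$ the pair $(k_2,l_2)$ is determined by $j$, so at most $\min\{K_2'K_2,\,n_2\}$ column classes are nonempty. Since $K_2'\leq n_2\leq n$ and $K_1'<K_1$, the number of nonempty cells is at most $K_1'K_1\cdot\min\{K_2'K_2,n_2\}\leq K_1^2\cdot nK_2$. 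The union bound then yields $2K_1^2K_2n\cdot n^{-6}=2K_1^2K_2n^{-5}$, which is the claimed bound. The threshold $3K_1K_2K_1'K_2'\log n$ still dominates the number of nonempty cells times $3\log n$, so the pigeonhole step remains valid.

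Finally, this lower bound combines with the penalty: the gain $d_{K_1',K_2'}\lambda-d_{K_1,K_2}\lambda$ is of order $K_1'K_2'\lambda\gg K_1'K_2'K_1K_2\log n/|\mathcal{E}^c|$ by condition 3 of Theorem \ref{theo1}. That comparison belongs to the subsequent argument and is not needed for this proposition.
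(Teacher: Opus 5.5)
Your proposal is correct and matches the paper's argument: the paper reruns the $\mathcal{III}$ estimate from Proposition~\ref{prop_underfit1} (cellwise sample-mean lower bound, Hoeffding at level $3\log n$, pigeonhole, union bound) with $K_2'$ now bounded by $n$, which gives $2K_1K_2K_1'K_2'n^{-6}\leq 2K_1^2K_2n^{-5}$. Your nonempty-cell count is an unnecessary detour, since $K_2'\leq n_2\leq n$ already bounds the total number of cells by $K_1^2K_2n$; your explicit conditioning on the training-based labels, however, usefully spells out the independence the paper leaves implicit.
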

	\begin{proof}[Proof of Proposition \ref{prop_overfit1}]
		Just follow the lines of the estimation of $\mathcal{III}$ in the above lemma, changing the bound for $K_2'$ to $n$.
	\end{proof}
	
	\begin{propositionA}\label{prop_probcontrol1}
		As long as $n\rho K=\Omega(K_2\log n)$, $\nmin\rho w\beta\gg K^2$, $\min\{n_1,n_2\}w\beta\gg K^3$, $\nmin wd^2\rho\beta\gg K^5K_2^2$ and $\nmin wd^2\beta\gg K^6K_2^2$, we have
		$$\sum_{K_1'<K_1}\sum_{K_2'\in[n]}\mathrm{Pr}\left[L_{K_1,K_2}(A,\mathcal{E}^c)>L_{K_1',K_2'}(A,\mathcal{E}^c)\right]\to0$$
	\end{propositionA}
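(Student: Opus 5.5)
The plan is to split the sum over $K_2'\in[n]$ into the two ranges $K_2'\le 2K_1K_2$ and $K_2'>2K_1K_2$. On each range we compare the penalized losses through the decomposition
\[
L_{K_1,K_2}-L_{K_1',K_2'}=\frac{1}{|\mathcal E^c|}\Bigl[(\ell_{K_1,K_2}-\ell_0)-(\ell_{K_1',K_2'}-\ell_0)\Bigr]+(K_1K_2-K_1'K_2')\lambda_{n_1,n_2}.
\]
Throughout we use $|\mathcal E^c|\asymp n_1n_2(1-w)$, which holds with overwhelming probability by Hoeffding. Proposition \ref{prop_sbmupper} bounds the first bracket term uniformly. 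It gives $\ell_{K_1,K_2}-\ell_0\le C\,nK^2\rho(1-w)\max\{1,K\rho\}/(w\beta)$ with probability $1-3K^2K_2^2n^{-\gamma}$, and this event does not depend on $(K_1',K_2')$. Hence it is intersected only once, not once per candidate.

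\textbf{Range $K_2'\le 2K_1K_2$.} There are at most $K_1\cdot 2K_1K_2$ candidates here.

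For each candidate, Proposition \ref{prop_underfit1} gives the lower bound
\[
\ell_{K_1',K_2'}-\ell_0\ge C\,\frac{n_1n_2d^2\rho^2(1-w)}{K^3K_2^2}.
\]
This bound dominates the upper bound from Proposition \ref{prop_sbmupper} exactly when $\min\{n_1,n_2\}\,w\,d^2\rho\beta\gg K^5K_2^2$ (for the $\max=1$ branch) and $\min\{n_1,n_2\}\,w\,d^2\beta\gg K^6K_2^2$ (for the $K\rho$ branch). These are the hypotheses of the statement; to see this, divide by $n_1n_2$ and use $n/(n_1n_2)=1/\min\{n_1,n_2\}$.

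Dividing by $|\mathcal E^c|$, the gap in unpenalized loss is of order $d^2\rho^2/(K^3K_2^2)$. The penalty difference is at most $K_1K_2\lambda$. Penalty condition 1 of Theorem \ref{theo1} gives $\lambda\le\rho^2/(K^3K_2^4)$ outside an event of probability $o(1/(K^2K_2))$. On that event $K_1K_2\lambda\ll d^2\rho^2/(K^3K_2^2)$, since $d$ is bounded below. So $L_{K_1,K_2}<L_{K_1',K_2'}$.

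For the union bound over the $\le 2K^2K_2$ candidates, we need the failure probabilities in Proposition \ref{prop_underfit1} to be summable. The term $\exp(-\alpha n_1n_2(1-w)^2/K^3K_2^2)$ is tiny under the scaling assumptions. The terms $n^{-\gamma}$ and $K^3K_2^2n^{-6}$ are handled by taking $\gamma$ large, using $K,K_2\le n$. The event $\{\lambda>\rho^2/(K^3K_2^4)\}$ is a single event rather than one per candidate, and condition 1 makes it $o(1)$.

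\textbf{Range $K_2'>2K_1K_2$ (up to $n$).} Here we use Proposition \ref{prop_overfit1} instead: $\ell_{K_1',K_2'}-\ell_0\ge -3K_1K_2K_1'K_2'\log n$, except with probability $2K^2K_2n^{-5}$ per candidate. Summing over at most $Kn$ candidates gives total failure probability $O(K^3K_2n^{-4})\to0$.

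On the good event, $L_{K_1',K_2'}-L_{K_1,K_2}$ is at least
\[
\frac{-3K^2K_2K_2'\log n - C\,nK^2\rho(1-w)\max\{1,K\rho\}/(w\beta)}{|\mathcal E^c|}+(K_1'K_2'-KK_2)\lambda .
\]
Since $K_1'\ge1$ and $K_2'>2KK_2$, we have $K_1'K_2'-KK_2\ge K_2'/2$. The penalty term therefore supplies $\gtrsim K_2'\lambda$.

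Condition 3 of Theorem \ref{theo1} gives $\lambda\ge C_2KK_2^2\log n/(n_1n_2)$ with failure probability $\ll n^{-2}$, and we choose $C_2$ large. Then $K_2'\lambda$ dominates $K^2K_2K_2'\log n/(n_1n_2)$. This comparison needs $KK_2^2\gtrsim K^2K_2$, which holds since $K\le K_2$, possibly after adjusting constants.

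Condition 2 gives $\lambda\ge C_1\rho K\max\{1,K\rho\}/(\beta\min\{n_1,n_2\})$. Multiplying by $K_2'\ge K$, this dominates the Proposition \ref{prop_sbmupper} term divided by $|\mathcal E^c|$, which is of order $K^2\rho\max\{1,K\rho\}/(w\beta\min\{n_1,n_2\})$.

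The lower-tail events for $\lambda$ are single events, so the $\ll n^{-2}$ bounds suffice even without a union. If they were treated per candidate, summing $\le n^2$ of them would still give $o(1)$.

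\textbf{Main obstacle.} The delicate step is the first range. There the penalty must be small enough not to swamp the underfitting signal, while every probability bound must remain summable over a growing grid. I would carefully verify that the exponent comparison yields precisely the stated conditions, including the $\log n$ term absorbed via Proposition \ref{prop_underfit1}'s hypothesis.
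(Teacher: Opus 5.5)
Your proposal follows the paper's own proof essentially step for step. It uses the same split at $K_2'=2K_1K_2$, Propositions~\ref{prop_sbmupper} and~\ref{prop_underfit1} with penalty condition~1 on the lower range, and Propositions~\ref{prop_sbmupper} and~\ref{prop_overfit1} with conditions~2--3 on the upper range, all on the event $|\mathcal E^c|\asymp n_1n_2(1-w)$. Your check that the two $d^2$-hypotheses are exactly what make the underfitting gap beat the Proposition~\ref{prop_sbmupper} bound is correct.

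Two small points, both shared with the paper:

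\emph{The penalty comparison on the lower range.} From $\lambda\le\rho^2/(K^3K_2^4)$ you only get $K_1K_2\lambda\le (K/K_2)\,\rho^2/(K^3K_2^2)$. When $K\asymp K_2$ this is of the same order as the gap $d^2\rho^2/(K^3K_2^2)$, not $\ll$ it, so that step tacitly needs a small constant in condition~1. The paper's proof has the same issue with its threshold $\rho^2/(K_1^4K_2^3)$.

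\emph{Sum versus union.} The statement is a sum of probabilities, not the probability of a union. So the $\lambda$-events and the Proposition~\ref{prop_sbmupper} event must be charged once per candidate rather than once overall. This is why condition~1 carries the factor $K^2K_2$, and the rates you quote (condition~1 at $o(1/(K^2K_2))$, conditions~2--3 at $\ll n^{-2}$, Proposition~\ref{prop_sbmupper} at $n^{-\gamma}$ with large $\gamma$) do absorb it.
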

	\begin{proof}[Proof of Proposition \ref{prop_probcontrol1}]
		Denote the event
		$$\Omega:=\left\{\frac{n_1n_2(1-w)}{2}\leq|\mathcal{E}^c|\leq2n_1n_2(1-w)\right\}.$$
		It is easy to show that cardinality $|\mathcal{E}^c|$ follows a Binomial distribution $\mathcal B(n_1n_2,1-w)$, thus by Hoeffding's inequality, we have
		$$\mathrm{Pr}\left[|\mathcal{E}^c|\leq\frac{n_1n_2(1-w)}{2}\right]\leq\exp\left(-\frac{n_1n_2(1-w)^2}{2}\right).$$
		Similarly, we have
		$$\mathrm{Pr}\left[|\mathcal{E}^c|\geq2n_1n_2(1-w)\right]\leq\exp\left(-2n_1n_2(1-w)^2\right).$$
		Therefore, as $n(1-w)\to\infty,$ $\mathrm{Pr}(\Omega)\to1$. From now on, we shall analyze under the condition that event $\Omega$ holds.
		
		Therefore, by Proposition \ref{prop_sbmupper} and \ref{prop_underfit1}, we have for $K_1'<K_1,K_2'\leq2K_1K_2$,
		\begin{align*}
			&\mathrm{Pr}\left[L_{K_1,K_2}(A,\mathcal{E}^c)>L_{K_1',K_2'}(A,\mathcal{E}^c)\right]\\
			\leq&\mathrm{Pr}\left[\left(\ell_{K_1',K_2'}(A,\mathcal{E}^c)-\ell_{0}(A,\mathcal{E}^c)\right)-\left(\ell_{K_1,K_2}(A,\mathcal{E}^c)-\ell_{0}(A,\mathcal{E}^c)\right)\right.\\
			&\left.<1/2(K_1K_2-K_1'K_2')n_1n_2(1-w)\lambda_{n_1,n_2}\right]\\
			\leq&4K_1^2K_2^2n^{-\gamma}+\exp(-\alpha n_1n_2(1-w)^2/K_1^3K_2^2)+2K_1^3K_2^2n^{-6}+\Pr\left[\lambda_{n_1,n_2}>\frac{\rho^2}{K_1^4K_2^3}\right]
		\end{align*}
		as long as $\nmin wd^2\rho\beta\gg K^5K_2^2$ and $\nmin wd^2\beta\gg K^6K_2^2$. Therefore, as long as $K_1^2K_2\Pr[\lambda_{n_1,n_2}>\rho^2/K_1^4K_2^3]\to0$, we have
		\begin{equation}\label{eqn:control1}
			\sum_{K_1'<K_1,K_2'\leq2K_1K_2}\mathrm{Pr}\left[L_{K_1,K_2}(A,\mathcal{E}^c)>L_{K_1',K_2'}(A,\mathcal{E}^c)\right]\to 0.
		\end{equation}
		
		Moreover, when $K_2'>2K_1K_2$, we have
		\begin{align*}
			&\mathrm{Pr}\left[L_{K_1,K_2}(A,\mathcal{E}^c)>L_{K_1',K_2'}(A,\mathcal{E}^c)\right]\\
			\leq&\mathrm{Pr}\left[\left(\ell_{K_1,K_2}(A,\mathcal{E}^c)-\ell_{0}(A,\mathcal{E}^c)\right)-\left(\ell_{K_1',K_2'}(A,\mathcal{E}^c)-\ell_{0}(A,\mathcal{E}^c)\right)\right.\\
			&\left.>1/2(K_1'K_2'-K_1K_2)n_1n_2(1-w)\lambda_{n_1,n_2}\right]\\
			\leq&\mathrm{Pr}\left[\left(\ell_{K_1,K_2}(A,\mathcal{E}^c)-\ell_{0}(A,\mathcal{E}^c)\right)-\left(\ell_{K_1',K_2'}(A,\mathcal{E}^c)-\ell_{0}(A,\mathcal{E}^c)\right)\right.\\
			&\left.>1/4K_1'K_2'n_1n_2(1-w)\lambda_{n_1,n_2}\right]\\
			\overset{(*)}{\leq}&\mathrm{Pr}\left[C\frac{nK^2\rho(1-w)}{w\beta}\maxrho+3KK_2K_1'K_2'\log n>1/4K_1'K_2'n_1n_2(1-w)\lambda_{n_1,n_2}\right]\\
			&+3K_1^2K_2^2n^{-\gamma}+\frac{2K_1^2K_2}{n^3}\\
			\leq&\mathrm{Pr}\left[C\frac{nK^2\rho(1-w)}{w\beta}\maxrho>1/8K_1K_2n_1n_2(1-w)\lambda_{n_1,n_2}\right]+3K_1^2K_2^2n^{-\gamma}\\
			&+\mathrm{Pr}\left[3K_1K_2K_1'K_2'\log n>1/8K_1'K_2'n_1n_2(1-w)\lambda_{n_1,n_2}\right]+\frac{2K_1^2K_2}{n^3},
		\end{align*}
		where (*) comes from Proposition \ref{prop_overfit1}. Notice that for any $C>0$, if we have $\mathrm{Pr}[\lambda_{n_1,n_2}<C{\rho_{n_1,n_2}K\maxrho}/{w\beta\min\{n_1,n_2\}K_2}]\ll1/n^2,$ then as $(n_1,n_2)\to\infty$, $$\mathrm{Pr}\left[C\frac{nK^2\rho(1-w)}{w\beta}\maxrho>1/8K_1K_2n_1n_2(1-w)\lambda_{n_1,n_2}\right]\ll\frac{1}{n^2},$$
		thus
		$$\sum_{K_1'<K_1,K_2'>2K_1K_2}\mathrm{Pr}\left[C\frac{n\rho(1-w)}{w\beta}>1/8K_1K_2n_1n_2(1-w)\lambda_{n_1,n_2}\right]=o(1).$$
		Similarly, since for any $C>0$, we have $\mathrm{Pr}\left[\lambda_{n_1,n_2}<{CK_1K_2\log n}/{n_1n_2(1-w)}\right]\ll1/n^2$, then as $(n_1,n_2)\to\infty$,
		$$\sum_{K_1'<K_1,K_2'>2K_1K_2}\mathrm{Pr}\left[3K_1K_2K_1'K_2'\log n>1/8K_1'K_2'n_1n_2(1-w)\lambda_{n_1,n_2}\right]=o(1).$$
		Therefore, as $(n_1,n_2)\to\infty$,
		\begin{align*}
			&\sum_{K_1'<K_1,K_2'>2K_1K_2}\mathrm{Pr}\left[L_{K_1,K_2}(A,\mathcal{E}^c)>L_{K_1',K_2'}(A,\mathcal{E}^c)\right]\\
			\leq& o(1)+\sum_{K_2'>2K_1K_2}K_1\frac{2K_1^2K_2}{n^5}+\sum_{K_2'>2K_1K_2}3K_1^3K_2^2n^{-\gamma}\\
			\leq& o(1)+\frac{2K_1^3K_2}{n^4}+3K_1^3K_2^2n^{-\gamma+1}\to0,
		\end{align*}
		as long as we choose an appropriately large $\gamma$. Together with \eqref{eqn:control1}, we can get the conclusion.
	\end{proof}

	\subsubsection{Case $K_1'=K_1$}\label{proof:region2}
	We use a similar argument as in the previous part.
	\begin{lemmaA}\label{lemma_underfitindex2}
		Assume $K_2^2\log K\ll\nmin(1-w)$. Suppose we cluster the first side into $K_1$ communities that satisfies Proposition \ref{prop_labelcons} and the second side into $K_2'< K_2$ communities. Define $\mathcal I_{k_1 k_2} = (\mathcal G_{k_1} \times \mathcal H_{k_2})\, \cap\, \mathcal{E}^c$ and $\hat{\mathcal I}_{k_1 k_2} = (\hat{\mathcal G}_{k_1} \times \hat{\mathcal H}_{k_2})\, \cap\, \mathcal{E}^c$. Then, for some $\alpha>0$ depending on $\pi_0$, with probability larger than $1-\exp(-\alpha n_1n_2(1-w)^2/K_1K_2^2)$, there must exist $l_1, l_2\in [K_2]$ and $l_0 \in [K_2']$, such that for all $k\in[K_1]$,
		\begin{enumerate}
			\setlength{\itemsep}{0.05cm}
			\item $|\hat{\mathcal I}_{k l_0} \cap \mathcal I_{k l_1}| \geq cn_1n_2(1-w)/K_1K_2^2$
			\item $|\hat{\mathcal I}_{k l_0} \cap \mathcal I_{k l_2}| \geq cn_1n_2(1-w)/K_1K_2^2$.
		\end{enumerate}
	\end{lemmaA}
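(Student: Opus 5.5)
We will mirror the proof of Lemma \ref{lemma_underfitindex1}, in two stages: a uniform test-set lower bound, followed by a pigeonhole argument on the side-2 clusters.

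\textbf{Uniform bound on test-set sizes.} Fix $k\in[K_1]$ and $l\in[K_2]$, and consider subsets $S\subset\mathcal{G}_k$ and $T\subset\mathcal{H}_l$ with $|S|\geq \pi_0 n_1/(2K_1)$ and $|T|\geq \pi_0 n_2/K_2^2$. The count $|(S\times T)\cap\mathcal{E}^c|$ is Binomial$(|S||T|,1-w)$. Hoeffding's inequality gives
\[
|(S\times T)\cap\mathcal{E}^c|\geq \tfrac{1}{2}|S||T|(1-w)\geq c\,\frac{n_1n_2(1-w)}{K_1K_2^2}
\]
with failure probability at most $2\exp(-c\,n_1n_2(1-w)^2/(K_1K_2^2))$.

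We then take a union bound over all pairs $(k,l)$ and all such subsets. By Stirling's approximation, the number of subsets is at most
\[
K_1K_2\,n_1n_2\exp\big(c'(n_1+n_2\log K_2/K_2^2)\big).
\]
The exponent $c'n_1$ comes from the side-1 subsets, since their relative size is of constant order in $1/K_1$ and the binomial deviance is at most $n_1\log 2$. The assumption $K_2^2\log K\ll\min\{n_1,n_2\}(1-w)$ makes $n_1n_2(1-w)^2/(K_1K_2^2)$ dominate both $n_2\log K_2/K_2^2$ and $n_1$. For the latter we need $n_2(1-w)^2\gg K_1K_2^2$, which follows from the assumption since $K_1\leq K_2$ and $\log K$ can be bounded below (possibly after adjusting constants, or by inflating the condition slightly).

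\textbf{Main obstacle.} I expect the union bound over side-1 subsets to be the delicate step, because their cardinalities are of order $n_1/K_1$ rather than $n_1/K_1^2$. To avoid this, I would restrict the union bound to the \emph{specific} sets $\hat{\mathcal{G}}_k\cap\mathcal{G}_k$. Proposition \ref{prop_labelcons} guarantees that $\hat{\mathcal G}_k$ nearly equals $\mathcal G_k$, so $|\hat{\mathcal{G}}_k\cap\mathcal{G}_k|\geq \pi_0 n_1/(2K_1)$. These sets are determined by the training data, which is independent of the test mask conditional on labels only up to the split itself. For this reason I would keep the combinatorial union bound over $S$ as the safe route, accepting the $e^{c'n_1}$ factor.

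\textbf{Pigeonhole step.} With $K_2'<K_2$, map each true community $l\in[K_2]$ to an index $\hat{l}\in[K_2']$ that maximizes $|\hat{\mathcal{H}}_{\hat l}\cap\mathcal{H}_l|$. This intersection has size at least $|\mathcal{H}_l|/K_2'\geq\pi_0 n_2/K_2^2$. Since there are $K_2>K_2'$ true communities, two distinct ones $l_1\neq l_2$ share the same image $l_0$.

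Now take any $k\in[K_1]$ and set
\[
S=\hat{\mathcal{G}}_k\cap\mathcal{G}_k,\qquad T=\hat{\mathcal{H}}_{l_0}\cap\mathcal{H}_{l_1}\ \text{(respectively } \mathcal{H}_{l_2}\text{)}.
\]
Then $\hat{\mathcal I}_{kl_0}\cap\mathcal I_{kl_1}\supseteq (S\times T)\cap\mathcal{E}^c$, and the uniform bound yields both items simultaneously for all $k$. This holds with probability at least $1-\exp(-\alpha n_1n_2(1-w)^2/(K_1K_2^2))$ after absorbing polynomial factors into $\alpha$.
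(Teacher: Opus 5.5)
Your skeleton matches the paper's. You prove a Hoeffding bound on $|(S\times T)\cap\mathcal E^c|$, uniform over admissible $S\subset\mathcal G_k$ and $T\subset\mathcal H_l$. You then pick $l_1\neq l_2$ with a common image $l_0$ by pigeonhole and apply the bound with $S=\hat{\mathcal G}_k\cap\mathcal G_k$ and $T=\hat{\mathcal H}_{l_0}\cap\mathcal H_{l_i}$. The pigeonhole step is fine. Keeping the uniform bound is also right, since $\hat{\mathcal G}_k$ and $\hat{\mathcal H}_{l_0}$ are functions of $Y$ and hence of the split $\mathcal E$.

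The gap is in the union bound. You pair one worst-case exponent $c\,n_1n_2(1-w)^2/(K_1K_2^2)$ with the count $e^{c'n_1}$ for side-1 sets, which forces $n_2(1-w)^2\gg K_1K_2^2$. That is \emph{not} a consequence of the hypothesis $K_2^2\log K\ll\min\{n_1,n_2\}(1-w)$. The two differ by a factor of order $K_1/\log K_1$, which diverges in the growing-$K$ regime the lemma is written for. For example, take $K_1=K_2=K$ and $n_1\geq n_2=K^{5/2}$. So this is not a matter of adjusting constants. Your side-2 count $\exp(c'n_2\log K_2/K_2^2)$ has the opposite defect: it counts only subsets of the minimal size $\pi_0n_2/K_2^2$. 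The admissible $T\subset\mathcal H_l$ number up to $2^{|\mathcal H_l|}$, so with a size-independent exponent you would also need roughly $n_1(1-w)^2\gg K_1K_2$, or worse, since Assumption~\ref{asm1} gives no upper bound on $|\mathcal H_l|$.

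The missing idea is to stratify the union bound by set size, so that the larger (more numerous) sets are charged their larger Hoeffding exponent. For $|S|=s\geq\pi_0n_1/(2K_1)$ and $|T|=t\geq\pi_0n_2/K_2^2$, the number of such pairs is at most
$\binom{n_1}{s}\binom{n_2}{t}\leq\exp\{s\log(en_1/s)+t\log(en_2/t)\}\leq\exp\{C(s\log K_1+t\log K_2)\}$.
Each pair fails with probability at most $\exp\{-st(1-w)^2/2\}$. Two comparisons then close the bound:
\begin{itemize}
\item $Cs\log K_1\leq st(1-w)^2/8$ as soon as $n_2(1-w)^2\gtrsim K_2^2\log K_1$;
\item $Ct\log K_2\leq st(1-w)^2/8$ as soon as $n_1(1-w)^2\gtrsim K_1\log K_2$.
\end{itemize}
Both follow from the stated hypothesis when $1-w$ is bounded below, as the paper implicitly uses. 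Summing over $(k,l,s,t)$ gives total failure probability at most $K_1K_2n_1n_2\exp\{-\pi_0^2n_1n_2(1-w)^2/(8K_1K_2^2)\}$. After absorbing the polynomial prefactor as you do, this is the claimed bound.

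The paper's own displayed count evaluates the binomial deviance only at the minimal sizes, so it too is really justified only through this stratification. Your stronger requirement $n_2(1-w)^2\gg K_1K_2^2$ is implied by condition 3 of Assumption~\ref{asm4} when $1-w$ is bounded below. So nothing downstream in Theorem~\ref{theo1} would break, but as written your argument does not prove the lemma under its own hypothesis.
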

	\begin{proof}[Proof of Lemma \ref{lemma_underfitindex2}]
		We first prove a uniform bound on the test sample size as follows. Consider two subsets $S_i \subset \mathcal{G}_i$ and $T_j \subset \mathcal{H}_j$ with $|S_i| = n_{S_i} \geq \frac{\pi_0n_1}{2K_1}$ and $|T_j| = n_{T_j} \geq \frac{\pi_0 n_2}{K_2^2}$, where $\pi_0$ is the constant in Assumption 1. We know that the cardinality of the test set within $S_i \times T_j$, given by $|(S_i \times T_j) \cap \mathcal{E}^c|$, is Binomial$(n_{S_i} n_{T_j}, 1 - w)$.
		
		Thus by Hoeffding's inequality, we have
		\begin{align*}
			\mathrm{Pr}\left( |(S_i \times T_j) \cap \mathcal{E}^c| \geq \frac{\pi_0^2 n_1n_2 (1 - w)}{4K_1K_2^2} \right) 
			&\geq \mathrm{Pr}\left( |(S_i \times T_j) \cap \mathcal{E}^c| \geq \frac{n_{S_i} n_{T_j} (1 - w)}{2} \right) \\
			&\geq 1 - 2 \exp\left( - \frac{1}{2} n_{S_i} n_{T_j} (1 - w)^2 \right) \\
			&\geq 1 - 2 \exp\left( - \frac{ \pi_0^2 n_1n_2 (1 - w)^2}{4K_1K_2^2} \right)
		\end{align*}
		Still taking $c = \frac{\pi_0^2}{4}$ gives
		\[
		\mathrm{Pr}\left( |(S_i \times T_j) \cap \mathcal{E}^c| \geq c \frac{n_1n_2 (1 - w)}{4K_1K_2^2} \right) \geq 1 - 2 \exp\left(-c\frac{n_1n_2 (1 - w)}{4K_1K_2^2}\right)
		\]
		
		To obtain a uniform bound for all $i\in [K_1],j\in[K_2]$ and all subsets satisfying $|S_i| = n_{S_i} \geq \frac{\pi_0 n_1}{2K_1}$ and $|T_j| = n_{T_j} \geq \frac{\pi_0 n_2}{K_2^2}$, we apply the union bound across all such sets. The number of terms in the sum is bounded above by
		\[
		K_1K_2 \sum_{m_1\in [\frac{\pi_0 n_1}{2K_1}, (1 - \pi_0)n_1]} \sum_{m_2\in [\frac{\pi_0 n_2}{K_2^2}, (1 - \pi_0)n_2]}\binom{n_1}{m_1} \binom{n_2}{m_2}.
		\]
		
		By Sterling's approximation, $\log\binom{n_1}{m_1} = \Theta(\ell_{dev}(m_1/n_1)n_1)$ where $\ell_{dev}$ is the binomial deviance. Notice that $\ell_{dev}(\pi_0/K_1)\asymp\log K_1/K_1$, thus the number of terms can be bounded by $$KK_2 n_1n_2 \exp\left(c' \left(\frac{n_1\log K_1}{K_1}+\frac{n_2\log K_2}{K_2^2}\right)\right),$$ where $c'$ is a constant depending on $\pi_0$. Thus by the union bound, as long as $K_2^2\log K\ll\min\{n_1,n_2\}(1-w)$, with probability at least
		\begin{align*}
			&1 - 2K_1K_2 n_1n_2 \exp\left(c' \left(\frac{n_1\log K_1}{K_1}+\frac{n_2\log K_2}{K_2^2}\right)-c\frac{n_1n_2(1-w)}{K_1K_2^2}\right)\\
			\leq&1-\exp\left(-\gamma\frac{n_1n_2(1-w)}{K_1K_2^2}\right),
		\end{align*}
		for any $i\in [K],j\in[K_2]$ and any $S_i\subset\mathcal{G}_i$ and $T_j\subset\mathcal{H}_j$ with $|S_i| = n_{S_i} \geq \frac{\pi_0 n_1}{2}$ and $|T_j| = n_{T_j} \geq \frac{\pi_0 n_2}{K_2}$, we have
		$$|(S_i \times T_j) \cap \mathcal{E}^c|\geq c\frac{n_1n_2(1-w)}{K_1K_2^2}.$$
		Now, for each $l\in[K_2]$, there must exist $\hat{l}\in[K_2']$ such that
		$$|\hat{\mathcal{H}}_{\hat{l}}\,\cap\,\mathcal{H}_l|\geq\frac{|\mathcal{H}_l|}{K_2'}\geq\frac{\pi_0n_2}{K_2^2},$$
		and by Proposition \ref{prop_labelcons}, with probability tending to 1, we have for all $k\in[K]$,
		$$|\hat{\mathcal{G}}_{k}\,\cap\,\mathcal{G}_k|\geq\frac{\pi_0n_1}{2K_1}$$.
		By the pigeonhole principle, there must exist $l_1$ and $l_2$ such that $\hat{l}_1=\hat{l}_2:=l_0$. Then such $l_0,l_1,l_2$ satisfy the requirement.
	\end{proof}
	
	\begin{propositionA}\label{prop_underfit2}
		When $K_2'<K_2$, we have with probability larger than $1-n^{-\gamma}-\exp(-\alpha n_1n_2(1-w)^2/K_1K_2^2)-2K_1^2K_2^2n^{-6}$, $$\ell_{K_1,K_2'}(A,\mathcal{E}^c)-\ell_{0}(A,\mathcal{E}^c)\geq C\left(\frac{n_1n_2d^2\rho^2(1-w)}{K_1K_2^2}\right),$$
		as long as $n_1n_2d^2\rho^2(1-w)\gg K^3K_2^4\log n$ and $n_1n_2d^4\rho(1-w)\gg KK_2^4$.
	\end{propositionA}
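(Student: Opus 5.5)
I will follow the proof of Proposition \ref{prop_underfit1} nearly verbatim, now using the index configuration from Lemma \ref{lemma_underfitindex2}. On the high-probability event of that lemma, intersected with the label-recovery event of Proposition \ref{prop_labelcons} for side 1, there are $l_1\neq l_2\in[K_2]$ and $l_0\in[K_2']$ such that, for every $k\in[K_1]$, both $\hat{\mathcal I}_{kl_0}\cap\mathcal I_{kl_1}$ and $\hat{\mathcal I}_{kl_0}\cap\mathcal I_{kl_2}$ contain at least $cn_1n_2(1-w)/K_1K_2^2$ test pairs.

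Since columns $l_1,l_2$ of $B_0$ are distinct with $\ell_\infty$ separation at least $d$ (Assumption \ref{asm3}), I will choose a row $k^*$ with $|B_{0,k^*l_1}-B_{0,k^*l_2}|\ge d$. This is why the lemma is stated for all $k$.

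With the sets $\mathcal T_{k_1,k_2,l_1,l_2}$ defined as before, I decompose $\ell_{K_1,K_2'}-\ell_0$ into three pieces. The term $\mathcal I$ is the sum over $\mathcal T_{k^*,l_0,k^*,l_1}$, the term $\mathcal{II}$ is the sum over $\mathcal T_{k^*,l_0,k^*,l_2}$, and $\mathcal{III}$ collects all remaining cells. In each of $\mathcal I$ and $\mathcal{II}$, $\hat P$ is replaced by the pooled within-cell average $\hat p=t\hat p_1+(1-t)\hat p_2$, which can only decrease the loss. Both cells have comparable sizes of order $n_1n_2(1-w)/K_1K_2^2$, so $t$ and $1-t$ are of constant order.

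Convexity of $h(p)=\sum(A_{ij}-p)^2$ gives
\[
\mathcal I\ge |\mathcal T_{k^*,l_0,k^*,l_1}|(1-t)^2|\hat p_1-\hat p_2|^2+h(\hat p_1)-h(p_1).
\]
Bernstein's inequality gives $|\hat p_r-p_r|\lesssim\sqrt{\rho K_1K_2^2\log n/(n_1n_2(1-w))}$, which is $\ll d\rho$ under $n_1n_2d^2\rho(1-w)\gg KK_2^2\log n$, a condition implied by the stated assumption. Hence $|\hat p_1-\hat p_2|\ge Cd\rho$, and the main term is of order $n_1n_2d^2\rho^2(1-w)/K_1K_2^2$.

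The fluctuation $|h(\hat p_1)-h(p_1)|$ is bounded as in Proposition \ref{prop_underfit1}, giving order $\sqrt{n_1n_2(1-w)/(K_1K_2)}\,\rho^{3/2}$. This is dominated by the main term precisely when $n_1n_2d^4\rho(1-w)\gg K_1K_2^{4}$, which is the second stated condition. The same argument applies to $\mathcal{II}$.

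For $\mathcal{III}$ I reuse the Hoeffding plus union bound argument verbatim. Each cell satisfies $|\mathcal T|(\hat p-B)^2<3\log n$ except with probability $2n^{-6}$. Over at most $K_1^2K_2K_2'\le K_1^2K_2^2$ cells this yields $\mathcal{III}\ge-3K_1^2K_2^2\log n$ with probability at least $1-2K_1^2K_2^2n^{-6}$.

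The main obstacle is making sure this $\log n$ remainder is negligible against the signal $n_1n_2d^2\rho^2(1-w)/K_1K_2^2$. That requires $n_1n_2d^2\rho^2(1-w)\gg K^3K_2^4\log n$, which is exactly the first stated condition. Collecting the failure probabilities of the three events ($n^{-\gamma}$, $\exp(-\alpha n_1n_2(1-w)^2/K_1K_2^2)$, and $2K_1^2K_2^2n^{-6}$) gives the stated bound.
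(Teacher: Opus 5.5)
Your proposal follows the paper's proof essentially step for step. It uses Lemma~\ref{lemma_underfitindex2} together with column separation to fix a row $k$ with $|B_{0,kl_1}-B_{0,kl_2}|\ge d$, the same $\mathcal I+\mathcal{II}+\mathcal{III}$ decomposition with the pooled test average $\hat p=t\hat p_1+(1-t)\hat p_2$, the same Bernstein step giving $|\hat p_1-\hat p_2|\ge Cd\rho$, and the same Hoeffding-plus-union-bound control of $\mathcal{III}$, so the two rate conditions come out of the same comparisons against the signal.

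Two soft spots are inherited from the paper rather than introduced by you. First, the lemma only lower-bounds the two merged cells, so $t$ need not be of constant order; this is harmless if you bound $\mathcal I+\mathcal{II}$ jointly, since the pooled quadratic term there is exactly $\frac{T_1T_2}{T_1+T_2}(\hat p_1-\hat p_2)^2\ge\frac12\min\{T_1,T_2\}(\hat p_1-\hat p_2)^2$, with $T_1,T_2$ the two cell sizes. Second, when $K_2'<K_1$ the truncated SVD has rank $K_2'<K_1$, so Proposition~\ref{prop_labelcons} does not supply the side-1 recovery event you intersect with. You can avoid needing it by summing the pooled term over all $\hat{\mathcal G}_{k'}$ that contain at least $|\mathcal G_k|/(2K_1)$ nodes of $\mathcal G_k$, which gives the same rate.
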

	
	\begin{proof}[Proof of Proposition \ref{prop_underfit2}]
		Without loss of generalization, we assume that in Lemma \ref{lemma_underfitindex2}, $l_0=1,l_1=1$ and $l_2=2$. Then there exists $k\in[K]$, such that $|B_{0,k1}- B_{0,k2}|\geq d$. Also, without loss of generalization, we can assume such $k=1$. Then similarly as the proof of Proposition \ref{prop_underfit1}, using the same notation, we have  
		\begin{align*}
			&\ell_{K_1,K_2'}(A,\mathcal{E}^c)-\ell_{0}(A,\mathcal{E}^c) \\
			\geq& \sum_{(i,j) \in \mathcal{T}_{1,1,1,1}} \left[(A_{ij}-\hat{p})^2- (A_{ij}-B_{11})^2 \right]+\sum_{(i,j) \in \mathcal{T}_{1,1,1,2}} \left[ (A_{ij}-\hat{p})^2-(A_{ij}-B_{12})^2 \right]\\
			&+ \sum_{{(k_1, k_2, l_1, l_2) \notin \{(1,1,1,1), (1,1,1,2)\}}} 
			\sum_{(i,j) \in \mathcal{T}_{k_1, k_2, l_1, l_2}} 
			\left[(A_{ij}-\hat{p}_{k_1,k_2,l_1,l_2})^2-(A_{ij}-B_{l_1 l_2})^2\right]\\
			:=& \mathcal{I} + \mathcal{II} + \mathcal{III},
		\end{align*}
		where $\hat{p}$ is the average of $A_{ij}$ over $\mathcal{T}_{1,1,1,1} \cup \mathcal{T}_{1,1,1,2}$ and $\hat{p}_{k_1,k_2,l_1,l_2}$ is the average of $A_{ij}$ over $\mathcal{T}_{k_1,k_2,l_1,l_2}$. Here $\hat{p}=t\hat{p}_1+(1-t)\hat{p}_2$, where $\hat{p}_1=\hat{p}_{1,1,1,1}$ and $\hat{p}_2=\hat{p}_{1,1,1,2}$, with $t=\frac{|\mathcal{T}_{1,1,1,1}|}{|\mathcal{T}_{1,1,1,1}|+|\mathcal{T}_{1,1,1,2}|}.$ We have $|\mathcal{T}_{1,1,1,1}|\sim|\mathcal{T}_{1,1,1,2}|$, so $t$ and $1-t$ are of constant order. Denote correspondingly $p_1=B_{11}$, $p_2=B_{12}$, and denote $h(p):=\sum_{(i,j)\in\mathcal{T}_{1,1,1,1}}(A_{ij}-p)^2$. We have
		$$\mathcal{I}=h(t\hat{p}_1+(1-t)\hat{p}_2)-h(p_1)\geq|\mathcal{T}_{1,1,1,1}|(1-t)^2|\hat{p}_1-\hat{p}_2|^2+h(\hat{p}_1)-h(p_1).$$
		By Lemma \ref{lemma_underfitindex2}, $|\mathcal{T}_{1,1,1,1}|,|\mathcal{T}_{1,1,1,2}|\geq cn_1n_2(1-w)/KK_2^2$, so by Bernstein inequality, with probability larger than $1-n^{-\gamma}$,
		$$|\hat{p}_1-\hat{p}_2|\geq|p_1-p_2|-|\hat{p}_1-p_1|-|\hat{p}_2-p_2|\geq|p_1-p_2|-2C\left(\sqrt{\frac{\rho KK_2^2\log n}{n_1n_2(1-w)}}\right)\geq Cd\rho$$
		for some constant $C$ depending on $\gamma$ as long as $n_1n_2d^2\rho(1-w)\gg KK_2^2\log n$. Similarly as the proof of Proposition \ref{prop_sbmupper} but with bound of $|\hat{p}_1-p_1|$ becomes $O_{\mathrm{P}}\left(\sqrt{\frac{\rho KK_2^2\log n}{n_1n_2(1-w)}}\right)$, we get
		$$|h(\hat{p}_1)-h(p_1)|=O_{\mathrm{P}}\left(\sqrt{\frac{n_1n_2(1-w)}{K}}\rho^{3/2}\right).$$
		Thus combining the above formulas, we have as long as $n_1n_2d^4\rho(1-w)\gg KK_2^4$, with probability larger than $1-n^{-\gamma}-\exp(-\alpha n_1n_2(1-w)^2/K_1K_2^2)$,
		$$\mathcal{I}\geq C\left(\frac{n_1n_2d^2\rho^2(1-w)}{KK_2^2}\right).$$
		Similarly Similarly with probability larger than $1-n^{-\gamma}-\exp(-\alpha n_1n_2(1-w)^2/K_1K_2^2)$, $\mathcal{II}$ has the same lower bound. Moreover, using a similar argument as in Proposition \ref{prop_underfit1} or refer to the proof as in Proposition \ref{prop_overfit3}, we have
		$$\mathcal{III}=-\sum_{{(k_1, k_2, l_1, l_2) \notin \{(1,1,1,1), (1,1,1,2)\}}}|\mathcal{T}_{k_1, k_2, l_1, l_2}|(\hat{p}_{k_1, k_2, l_1, l_2}-B_{l_1l_2})^2$$
		and 
		$$\mathrm{Pr}\left[\mathcal{III}\leq-3K_1^2K_2K_2'\log n \right]\leq\frac{2K_1^2K_2^2}{n^6}.$$
		Combining all the results, we have with probability larger than $1-n^{-\gamma}-\exp(-\alpha n_1n_2(1-w)^2/K_1K_2^2)-2K_1^2K_2^2n^{-6}$, $$\ell_{K_1,K_2'}(A,\mathcal{E}^c)-\ell_{0}(A,\mathcal{E}^c)\geq C\left(\frac{n_1n_2d^2\rho^2(1-w)}{K_1K_2^2}\right),$$
		as long as $n_1n_2d^2\rho^2(1-w)\gg K^3K_2^4\log n$.
	\end{proof}
	
	Finally we control the overfitting case.
	\begin{propositionA}\label{prop_overfit3}
		When $K_2'>K_2$, we have $$\mathrm{Pr}\left[\ell_{K_1,K_2'}(A,\mathcal{E}^c)-\ell_{0}(A,\mathcal{E}^c)\leq-3K_1^2K_2K_2'\log n \right]\leq\frac{2K_1^2K_2}{n^5}.$$
	\end{propositionA}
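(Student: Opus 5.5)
The plan is to follow the treatment of term $\mathcal{III}$ in the proof of Proposition \ref{prop_underfit1}, using the cell decomposition of the test set from Proposition \ref{prop_sbmupper}. For $(K_1',K_2')=(K_1,K_2')$ with $K_2'>K_2$, partition $\mathcal{E}^c$ into the cells $\mathcal{T}_{k_1,k_2,l_1,l_2}$, $k_1,l_1\in[K_1]$, $k_2\in[K_2']$, $l_2\in[K_2]$. On each cell the true mean is the constant $B_{l_1l_2}$, while the fitted value is the constant $\hat{B}_{k_1k_2}$.

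The first step is a deterministic reduction. For any constant $q$ and any cell $\mathcal{T}$,
$$\sum_{(i,j)\in\mathcal{T}}(A_{ij}-q)^2\ \ge\ \sum_{(i,j)\in\mathcal{T}}(A_{ij}-\hat p_{\mathcal{T}})^2,$$
where $\hat p_{\mathcal{T}}$ is the cell average. Also
$$\sum_{\mathcal{T}}(A_{ij}-\hat p_{\mathcal{T}})^2-\sum_{\mathcal{T}}(A_{ij}-B_{l_1l_2})^2=-|\mathcal{T}|(\hat p_{\mathcal{T}}-B_{l_1l_2})^2 .$$
Summing over cells gives
$$\ell_{K_1,K_2'}(A,\mathcal{E}^c)-\ell_0(A,\mathcal{E}^c)\ \ge\ -\sum_{k_1,k_2,l_1,l_2}|\mathcal{T}_{k_1,k_2,l_1,l_2}|\,(\hat p_{k_1,k_2,l_1,l_2}-B_{l_1l_2})^2 .$$
This holds for every realization, so no label-recovery accuracy for $\hat c_1,\hat c_2$ is needed. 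The number of nonempty cells is at most $K_1\cdot K_2'\cdot K_1\cdot K_2=K_1^2K_2K_2'$.

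The second step is a per-cell Hoeffding bound. Conditional on the labels and the split, the test entries in a cell are independent Bernoulli variables with common mean $B_{l_1l_2}$. Hoeffding's inequality then gives
$$\Pr\bigl[|\mathcal{T}|(\hat p_{\mathcal{T}}-B_{l_1l_2})^2\ge 3\log n\bigr]\le 2n^{-6}.$$
If the sum exceeds $3K_1^2K_2K_2'\log n$, then by pigeonhole some cell has $|\mathcal{T}|(\hat p_{\mathcal{T}}-B)^2\ge 3\log n$. A union bound over cells therefore gives probability at most $2K_1^2K_2K_2'n^{-6}\le 2K_1^2K_2n^{-5}$, using $K_2'\le n$.

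The main obstacle is justifying the conditional independence in the second step. The labels $\hat c_1,\hat c_2$ depend only on the training entries $\{A_{ij}:(i,j)\in\mathcal{E}\}$ and on $\mathcal{E}$ itself. Hence, conditional on $(\mathcal{E},\{A_{ij}\}_{\mathcal{E}})$, the cells are fixed sets and the test entries remain independent with their true means. The cell partition by true labels $(l_1,l_2)$ is deterministic, which guarantees a common mean within each cell. Integrating the conditional bound over the training data then yields the unconditional bound.
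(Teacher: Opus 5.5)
Your proposal is correct and follows essentially the same route as the paper: the same deterministic reduction to $-\sum|\mathcal{T}_{k_1,k_2,l_1,l_2}|(\hat p_{k_1,k_2,l_1,l_2}-B_{l_1l_2})^2$, the per-cell Hoeffding bound at level $3\log n$, pigeonhole over the at most $K_1^2K_2K_2'$ cells, and a union bound using $K_2'\le n$. The paper applies Hoeffding to the data-dependent cells without comment; your conditioning on $(\mathcal{E},\{A_{ij}\}_{(i,j)\in\mathcal{E}})$ supplies that justification, and it is the correct one.
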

	\begin{proof}[Proof of Proposition \ref{prop_overfit3}]
		We use the same notation as in the proof of Proposition \ref{prop_underfit2}, and we also have
		\begin{align*}
			\ell_{K_1,K_2'}(A,\mathcal{E}^c)-\ell_{0}(A,\mathcal{E}^c)=&\sum_{k_1,k_2,l_1,l_2}\sum_{(i,j)\in\mathcal{T}_{k_1,k_2,l_1,l_2}}\left[\left(A_{ij}-\hat{P}^{(K')}_{ij}\right)^2-\left(A_{ij}-B_{l_1l_2}\right)^2\right]\\
			\geq&\sum_{k_1,k_2,l_1,l_2}\sum_{(i,j)\in\mathcal{T}_{k_1,k_2,l_1,l_2}}\left[\left(A_{ij}-\hat{p}_{k_1,k_2,l_1,l_2}\right)^2-\left(A_{ij}-B_{l_1l_2}\right)^2\right]\\
			=&-\sum_{k_1,k_2,l_1,l_2}|\mathcal{T}_{k_1,k_2,l_1,l_2}|\left(\hat{p}_{k_1,k_2l_1,l_2}-B_{l_1l_2}\right)^2.
		\end{align*}
		
		We have
		\begin{align*}
			&\mathrm{Pr}\left[\ell_{K_1,K_2'}(A,\mathcal{E}^c)-\ell_{0}(A,\mathcal{E}^c)\leq-3K_1^2K_2K_2'\log n\right]\\
			\leq&\mathrm{Pr}\left[\sum_{k_1,k_2,l_1,l_2}|\mathcal{T}_{k_1,k_2,l_1,l_2}|\left(\hat{p}_{k_1,k_2,l_1,l_2}-B_{l_1l_2}\right)^2\geq 3K_1^2K_2K_2'\log n\right]
		\end{align*}
		Now we shall estimate the bound of $\sum_{k_1,k_2,l_1,l_2}|\mathcal{T}_{k_1,k_2,l_1,l_2}|\left(\hat{p}_{k_1,k_2,l_1,l_2}-B_{l_1l_2}\right)^2$. For each $\{k_1,k_2,l_1,l_2\}$, by Hoeffding's inequality, we have
		$$\mathrm{Pr}\left[|\hat{p}_{k_1,k_2,l_1,l_2}-B_{l_1l_2}|\geq \sqrt{\frac{3\log n}{|\mathcal{T}_{k_1,k_2,l_1,l_2}|}}\right]
		\leq 2\exp\left(-\frac{6\log n|\mathcal{T}_{k_1,k_2,l_1,l_2}|}{|\mathcal{T}_{k_1,k_2,l_1,l_2}|}\right)\leq 2n^{-6},$$
		i.e.,
		$$\mathrm{Pr}\left[|\mathcal{T}_{k_1,k_2,l_1,l_2}|\left(\hat{p}_{k_1,k_2,l_1,l_2}-B_{l_1l_2}\right)^2\geq 2\log n\right]\leq 2n^{-4}$$
		Therefore, by pigeonhole principle, if $\sum_{k_1,k_2,l_1,l_2}|\mathcal{T}_{k_1,k_2,l_1,l_2}|\left(\hat{p}_{k_1,k_2,l_1,l_2}-B_{l_1l_2}\right)^2\geq 3K_1^2K_2K_2'\log n$, then there must exists some $\{k_1,k_2,l_1,l_2\}$, such that $|\mathcal{T}_{k_1,k_2,l_1,l_2}|\left(\hat{p}_{k_1,k_2,l_1,l_2}-B_{l_1l_2}\right)^2\geq 3\log n$. Thus, 
		\begin{align*}
			&\mathrm{Pr}\left[\sum_{k_1,k_2,l_1,l_2}|\mathcal{T}_{k_1,k_2,l_1,l_2}|\left(\hat{p}_{k_1,k_2,l_1,l_2}-B_{l_1l_2}\right)^2\geq3K_1^2K_2K_2'\log n\right]\\
			\le & \sum_{k_1,k_2,l_1,l_2}\mathrm{Pr}\left[|\mathcal{T}_{k_1,k_2,l_1,l_2}|\left(\hat{p}_{k_1,k_2,l_1,l_2}-B_{l_1l_2}\right)^2\geq3\log n\right]\\
			\leq & 2K_1^2K_2K_2'n^{-6}\leq 2K_1^2K_2n^{-5}.
		\end{align*}
	\end{proof}
	
	\begin{propositionA}\label{prop_probcontrol2}
		As long as $n\rho K=\Omega(K_2\log n)$, $\nmin\rho w\beta\gg K^2$, $\min\{n_1,n_2\}w\beta\gg K^3$, $\nmin wd^2\rho\beta\gg K^3K_2^2$ and $\nmin wd^2\beta\gg K^4K_2^2$, we have
		$$\sum_{K_2'\neq K_2}\mathrm{Pr}\left[L_{K_1,K_2}(A,\mathcal{E}^c)>L_{K_1,K_2'}(A,\mathcal{E}^c)\right]\to0$$
	\end{propositionA}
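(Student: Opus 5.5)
The argument mirrors Proposition \ref{prop_probcontrol1}, with the candidate set $\{K_2'\neq K_2\}$ split into an underfitting part $K_2'<K_2$ and an overfitting part $K_2'>K_2$.

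\textbf{Setup.} We work throughout on the event
$$\Omega=\left\{\frac{n_1n_2(1-w)}{2}\leq|\mathcal{E}^c|\leq 2n_1n_2(1-w)\right\},$$
whose probability tends to one by Hoeffding's inequality. On $\Omega$ the comparison $L_{K_1,K_2}>L_{K_1,K_2'}$ is equivalent to
$$\bigl(\ell_{K_1,K_2'}-\ell_0\bigr)-\bigl(\ell_{K_1,K_2}-\ell_0\bigr)<|\mathcal{E}^c|\,(K_1K_2-K_1'K_2')\,\lambda_{n_1,n_2},$$
and the factor $|\mathcal{E}^c|$ is comparable to $n_1n_2(1-w)$ up to constants.

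\textbf{Underfitting, $K_2'<K_2$.} There are at most $K_2$ such candidates.
\begin{itemize}
\item Proposition \ref{prop_underfit2} gives $\ell_{K_1,K_2'}-\ell_0\geq C n_1n_2d^2\rho^2(1-w)/(K_1K_2^2)$.
\item Proposition \ref{prop_sbmupper} bounds $\ell_{K_1,K_2}-\ell_0$ by $C\,nK^2\rho(1-w)\max\{1,K\rho\}/(w\beta)$.
\item The penalty difference is at most $K_1K_2\cdot 2n_1n_2(1-w)\lambda_{n_1,n_2}$.
\end{itemize}
The underfitting lower bound dominates the sbm upper bound exactly when $\min\{n_1,n_2\}wd^2\rho\beta\gg K^3K_2^2\max\{1,K\rho\}$. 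The two stated hypotheses $\min\{n_1,n_2\}wd^2\rho\beta\gg K^3K_2^2$ and $\min\{n_1,n_2\}wd^2\beta\gg K^4K_2^2$ together deliver this, using $n/(n_1n_2)=1/\min\{n_1,n_2\}$. The same hypotheses also imply the side conditions of Proposition \ref{prop_underfit2}, namely $n_1n_2d^2\rho^2(1-w)\gg K^3K_2^4\log n$ (this uses $n\rho K=\Omega(K_2\log n)$) and $n_1n_2d^4\rho(1-w)\gg KK_2^4$.

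It remains to show the penalty cannot close the gap, which holds on $\{\lambda_{n_1,n_2}\leq \rho^2/(K^3K_2^4)\}$ given $d$ bounded below. Each such candidate then fails with probability at most
$$n^{-\gamma}+\exp\!\left(-\frac{\alpha n_1n_2(1-w)^2}{K_1K_2^2}\right)+2K_1^2K_2^2n^{-6}+3K^2K_2^2n^{-\gamma}+\Pr\!\left[\lambda_{n_1,n_2}>\frac{\rho^2}{K^3K_2^4}\right].$$
Summing over $K_2$ candidates, this tends to zero by penalty condition 1, $K^2K_2\Pr[\cdot]\to0$, after choosing $\gamma$ large.

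\textbf{Overfitting, $K_2'>K_2$, with $K_2'$ ranging up to $n_2$.} Here $K_1K_2'-K_1K_2\geq K_1K_2'/(K_2+1)\cdot\ldots$, and I would simply use $K_1K_2'-K_1K_2\geq \tfrac12 K_1K_2'$ when $K_2'\geq 2K_2$. For $K_2<K_2'<2K_2$ the difference is at least $K_1$, and this weaker bound suffices as well.

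The failure event then requires
$$(\ell_{K_1,K_2}-\ell_0)-(\ell_{K_1,K_2'}-\ell_0)\gtrsim K_1(K_2'-K_2)\,n_1n_2(1-w)\lambda_{n_1,n_2}.$$
Proposition \ref{prop_sbmupper} bounds the first term, and Proposition \ref{prop_overfit3} bounds $-(\ell_{K_1,K_2'}-\ell_0)$ by $3K_1^2K_2K_2'\log n$ except on probability $2K_1^2K_2n^{-5}$. Splitting into two events:
\begin{itemize}
\item The sbm term loses to the penalty unless $\lambda_{n_1,n_2}\lesssim \rho K\max\{1,K\rho\}/(\beta\min\{n_1,n_2\})$. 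This is penalty condition 2, which holds with probability $1-o(n^{-2})$.
\item The $\log n$ term loses unless $\lambda_{n_1,n_2}\lesssim K_1K_2(K_2'/(K_2'-K_2))\log n/(n_1n_2)$. This is at most $KK_2^2\log n/(n_1n_2)$, which is penalty condition 3, again with failure probability $o(n^{-2})$.
\end{itemize}
Summing over at most $n$ values of $K_2'$ gives $o(1)+n\cdot(2K_1^2K_2n^{-5}+3K^2K_2^2n^{-\gamma})\to0$.

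\textbf{Main obstacle.} The delicate step is the boundary band $K_2<K_2'<2K_2$, where the penalty gain is only of order $K_1$ while the overfit reduction bound $3K_1^2K_2K_2'\log n$ is of order $K^2K_2^2\log n$. This is precisely why condition 3 carries the factor $KK_2^2$. I would check that the constants line up there first, and if they do not, refine Proposition \ref{prop_overfit3} by counting only nonempty cells $\mathcal{T}_{k_1,k_2,l_1,l_2}$.
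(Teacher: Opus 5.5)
Your proposal follows the paper's proof essentially step for step. It uses the same event $\Omega$, handles the underfitting range $K_2'<K_2$ with Propositions \ref{prop_underfit2} and \ref{prop_sbmupper} plus the upper-tail penalty condition, and splits the overfitting range at $2K_2$ using Propositions \ref{prop_sbmupper} and \ref{prop_overfit3} with penalty conditions 2 and 3 of Theorem \ref{theo1}. Your worry about the band $K_2<K_2'\leq 2K_2$ is already settled, because condition 3 is imposed for every constant $C_2$, so Proposition \ref{prop_overfit3} needs no refinement.

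One small correction: the side condition $n_1n_2d^2\rho^2(1-w)\gg K^3K_2^4\log n$ of Proposition \ref{prop_underfit2} does not follow from the listed hypotheses as you claim. Combining $\min\{n_1,n_2\}d^2\rho\gg K^3K_2^2$ with $n\rho\gtrsim (K_2/K)\log n$ gives only $K^2K_2^3\log n$. It has to be taken from condition 2 of Assumption \ref{asm4}, which the paper also relies on only implicitly.
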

	\begin{proof}[Proof of Proposition \ref{prop_probcontrol2}]
		Recall the event
		$$\Omega=\left\{\frac{n_1n_2(1-w)}{2}\leq|\mathcal{E}^c|\leq2n_1n_2(1-w)\right\}$$
		defined in the proof of Proposition \ref{prop_probcontrol1}. Similarly, recall Proposition \ref{prop_sbmupper} gives with probability larger than $1-3K^2K_2^2n^{-\gamma}$,
		\begin{equation*}
			\ell_{K_1,K_2}(A, \mathcal{E}^c) - \ell_0(A, \mathcal{E}^c) \leq C\frac{nK^2\rho(1-w)}{w\beta}\maxrho.
		\end{equation*}
		
		We first prove the consistency when $K_2'\leq K_2$. Notice that
		\begin{align*}
			&\mathrm{Pr}\left[L_{K_1,K_2}(A,\mathcal{E}^c)>L_{K_1,K_2'}(A,\mathcal{E}^c)\right]\\
			\leq&\mathrm{Pr}\left[\left(\ell_{K_1,K_2'}(A,\mathcal{E}^c)-\ell_{0}(A,\mathcal{E}^c)\right)-\left(\ell_{K_1,K_2}(A,\mathcal{E}^c)-\ell_{0}(A,\mathcal{E}^c)\right)\right.\\
			&\left.<1/2K_1(K_2-K_2')n_1n_2(1-w)\lambda_{n_1,n_2}\right]\\
			\leq&4K_1^2K_2^2n^{-\gamma}+\exp(-\alpha n_1n_2(1-w)^2/K_1K_2^2)+2K_1^2K_2^2n^{-6}+\Pr\left[\lambda_{n_1,n_2}>\frac{\rho^2}{K_1^2K_2^3}\right]
		\end{align*}
		by Proposition \ref{prop_sbmupper} and \ref{prop_underfit2}, as long as $\nmin wd^2\rho\beta\gg K^3K_2^2$ and $\nmin wd^2\beta\gg K^4K_2^2$. Therefore, as long as $K_2\Pr[\lambda_{n_1,n_2}>\rho^2/K_1^2K_2^3]\to0$, we have
		\begin{equation*}
			\sum_{K_2'<K_2}\mathrm{Pr}\left[L_{K_1,K_2}(A,\mathcal{E}^c)>L_{K_1,K_2'}(A,\mathcal{E}^c)\right]\to 0.
		\end{equation*}
		
		Now we prove conditioned on the event $\Omega$, we have 
		\begin{equation}\label{eqn:control2}
			\sum_{K_2'>K_2}\mathrm{Pr}\left[L_{K_1,K_2}(A,\mathcal{E}^c)>L_{K_1,K_2'}(A,\mathcal{E}^c)\right]\to0.
		\end{equation}
		When $K_2'>2K_2$, we have
		\begin{align*}
			&\mathrm{Pr}\left[L_{K_1,K_2}(A,\mathcal{E}^c)>L_{K_1,K_2'}(A,\mathcal{E}^c)\right]\\
			\leq&\mathrm{Pr}\left[\left(\ell_{K_1,K_2}(A,\mathcal{E}^c)-\ell_{0}(A,\mathcal{E}^c)\right)-\left(\ell_{K_1,K_2'}(A,\mathcal{E}^c)-\ell_{0}(A,\mathcal{E}^c)\right)\right.\\
			&\left.>1/2K_1(K_2'-K_2)n_1n_2(1-w)\gamma_{n_1,n_2}\right]\\
			\leq&\mathrm{Pr}\left[\left(\ell_{K_1,K_2}(A,\mathcal{E}^c)-\ell_{0}(A,\mathcal{E}^c)\right)-\left(\ell_{K_1,K_2'}(A,\mathcal{E}^c)-\ell_{0}(A,\mathcal{E}^c)\right)\right.\\
			&\left.>1/4K_1K_2'n_1n_2(1-w)\lambda_{n_1,n_2}\right]\\
			\overset{(*)}{\leq}&\mathrm{Pr}\left[C\frac{nK^2\rho(1-w)}{w\beta}\maxrho+3K_1^2K_2K_2'\log n>1/4K_1K_2'n_1n_2(1-w)\lambda_{n_1,n_2}\right]\\
			&+3K_1^2K_2^2n^{-\gamma}+\frac{2K_1^2K_2}{n^5}\\
			\leq&\mathrm{Pr}\left[C\frac{nK^2\rho(1-w)}{w\beta}\maxrho>1/8K_1K_2n_1n_2(1-w)\lambda_{n_1,n_2}\right]+3K_1^2K_2^2n^{-\gamma}\\
			&+\mathrm{Pr}\left[3K_1^2K_2K_2'\log n>1/8K_1K_2'n_1n_2(1-w)\lambda_{n_1,n_2}\right]+\frac{2K_1^2K_2}{n^5},
		\end{align*}
		where (*) comes from Proposition \ref{prop_sbmupper} and \ref{prop_overfit3}. Notice that for any $C>0$, if we have $\mathrm{Pr}[\lambda_{n_1,n_2}<C{\rho_{n_1,n_2}K\maxrho}/{w\beta\min\{n_1,n_2\}K_2}]\ll1/n^2,$ then as $(n_1,n_2)\to\infty$, $$\mathrm{Pr}\left[C\frac{nK^2\rho(1-w)}{w\beta}\maxrho>1/8K_1K_2n_1n_2(1-w)\lambda_{n_1,n_2}\right]\ll\frac{1}{n^2},$$
		thus
		$$\sum_{K_2'>2K_2}\mathrm{Pr}\left[C\frac{nK^2\rho(1-w)}{w\beta}\maxrho>1/8K_1K_2n_1n_2(1-w)\lambda_{n_1,n_2}\right]=o(1).$$
		Similarly, since for any $C>0$, we have $\mathrm{Pr}\left[\lambda_{n_1,n_2}<{CK_1K_2\log n}/{n_1n_2(1-w)}\right]\ll1/n^2$, then as $(n_1,n_2)\to\infty$,
		$$\sum_{K_2'>2K_2}\mathrm{Pr}\left[3K_1^2K_2K_2'\log n>1/8K_1K_2'n_1n_2(1-w)\lambda_{n_1,n_2}\right]=o(1).$$
		Therefore, as $(n_1,n_2)\to\infty$,
		\begin{align*}
			&\sum_{K_2'>2K_2}\mathrm{Pr}\left[L_{K_1,K_2}(A,\mathcal{E}^c)>L_{K_1,K_2'}(A,\mathcal{E}^c)\right]\\
			\leq& o(1)+\sum_{K_2'>2K_2}\frac{2K_1^2K_2}{n^5}+\sum_{K_2'>2K_2}3K_1^2K_2^2n^{-\gamma}\\
			\leq& o(1)+\frac{2K_1^2K_2}{n^4}+3K_1^2K_2^2n^{-\gamma+1}\to0,
		\end{align*}
		as long as we choose an appropriately large $\gamma$. 
		
		Finally, we consider for $K_2<K_2'\leq 2K_2$. We have
		\begin{align*}
			&\mathrm{Pr}\left[L_{K_1,K_2}(A,\mathcal{E}^c)>L_{K_1,K_2'}(A,\mathcal{E}^c)\right]\\
			\leq&\mathrm{Pr}\left[\left(\ell_{K_1,K_2}(A,\mathcal{E}^c)-\ell_{0}(A,\mathcal{E}^c)\right)-\left(\ell_{K_1,K_2'}(A,\mathcal{E}^c)-\ell_{0}(A,\mathcal{E}^c)\right)\right.\\
			&\left.>1/2K_1(K_2'-K_2)n_1n_2(1-w)\lambda_{n_1,n_2}\right]\\
			{\leq}&\mathrm{Pr}\left[C\frac{nK^2\rho(1-w)}{w\beta}\maxrho+3K_1^2K_2K_2'\log n>1/2K_1n_1n_2(1-w)\lambda_{n_1,n_2}\right]\\
			&+3K_1^2K_2^2n^{-\gamma}+\frac{2K_1^2K_2}{n^5}\\
			\leq&\mathrm{Pr}\left[C\frac{nK^2\rho(1-w)}{w\beta}\maxrho>1/8K_1n_1n_2(1-w)\lambda_{n_1,n_2}\right]+3K_1^2K_2^2n^{-\gamma}\\
			&+\mathrm{Pr}\left[6K_1^2K_2^2\log n>1/8K_1n_1n_2(1-w)\lambda_{n_1,n_2}\right]+\frac{2K_1^2K_2}{n^5}
		\end{align*}
		by Proposition \ref{prop_sbmupper},and \ref{prop_overfit3}. Therefore, similarly as before, as long as $K_2\Pr[\lambda_{n_1,n_2}<C\rho K\maxrho/w\beta\nmin]\to0$ and $K_2\mathrm{Pr}\left[\lambda_{n_1,n_2}<{CK_1K_2^2\log n}/{n_1n_2(1-w)}\right]\to 0$, we have
		\begin{equation*}
			\sum_{K_2<K_2'\leq 2K_2}\mathrm{Pr}\left[L_{K_1,K_2}(A,\mathcal{E}^c)>L_{K_1,K_2'}(A,\mathcal{E}^c)\right]\to 0.
		\end{equation*}
		Thus \eqref{eqn:control2} is proved.
	\end{proof}
	
	\subsubsection{Case $K_1'>K_1,K_2'<K_2$}\label{proof:region3}
	Notice that we did not use the fact that $K_1=K<K_2$ in the first part of our proof (Section \ref{proof:region1}). Therefore, all the proofs in Section \ref{proof:region1} can be used in this case by exchanging the role for $K_1,K_2$ and $K_1',K_2'$. We list the symmetric version of the lemmas and propositions used in Section \ref{proof:region1}, with proofs omitted.
	\begin{lemmaA}\label{lemma_underfitindex3}
		Assume $K^2K_2\log K_2\ll\nmin(1-w)$. Suppose we cluster the first side into $K_1'\leq 2K_1K_2$ communities and cluster the second side into $K_2'<K_2$ communities. Define $\mathcal I_{k_1 k_2} = (\mathcal G_{k_1} \times \mathcal H_{k_2})\, \cap\, \mathcal{E}^c$ and $\hat{\mathcal I}_{k_1 k_2} = (\hat{\mathcal G}_{k_1} \times \hat{\mathcal H}_{k_2})\, \cap\, \mathcal{E}^c$. Then, for some $\alpha>0$ depending on $\pi_0$, with probability larger than $1-\exp(-\alpha n_1n_2(1-w)^2/K_1^2K_2^3)$, there must exist $l_2, l_2'\in [K_2]$ and $k_2 \in [K_2']$, with $l_1\in[K_2]$ and $k_1\in[K_1']$ and a constant $c$ that only depends on $\pi_0$, such that
		\begin{enumerate}
			\setlength{\itemsep}{0.05cm}
			\item $|\hat{\mathcal I}_{k_1 k_2} \cap \mathcal I_{l_1l_2}| \geq cn_1n_2(1-w)/K_1^2K_2^3$
			\item $|\hat{\mathcal I}_{k_1k_2} \cap \mathcal I_{l_1l_2'}| \geq cn_1n_2(1-w)/K_1^2K_2^3$.
			\item $|B_{0,l_1l_2}-B_{0,l_1l_2'}|\geq d$.
		\end{enumerate}
	\end{lemmaA}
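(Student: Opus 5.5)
The plan mirrors the proof of Lemma \ref{lemma_underfitindex1}, with the roles of the two sides exchanged. The argument has two parts. First, a uniform lower bound on test-set counts holds over all sufficiently large rectangles inside true blocks. Second, a pigeonhole argument produces the required indices.

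\emph{Uniform counting bound.} Fix $i\in[K_1]$, $j\in[K_2]$, and subsets $S_i\subset\mathcal G_i$, $T_j\subset\mathcal H_j$ with
\[
|S_i|\geq \frac{\pi_0 n_1}{2K_1^2K_2}, \qquad |T_j|\geq \frac{\pi_0 n_2}{K_2^2}.
\]
These thresholds are the side-swapped analogues of those in Lemma \ref{lemma_underfitindex1}. The product of the thresholds is $\pi_0^2 n_1n_2/(2K_1^2K_2^3)$, which matches the claimed rate. The count $|(S_i\times T_j)\cap\mathcal E^c|$ is $\mathrm{Binomial}(|S_i||T_j|,1-w)$, so Hoeffding's inequality gives
\[
|(S_i\times T_j)\cap\mathcal E^c|\geq \tfrac12|S_i||T_j|(1-w)\geq c\,\frac{n_1n_2(1-w)}{K_1^2K_2^3}
\]
with failure probability at most $2\exp(-c\,n_1n_2(1-w)^2/K_1^2K_2^3)$.

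Next I will take a union bound over $i,j$ and all admissible subsets. By Stirling's approximation with the binomial deviance $\ell_{dev}(x)\asymp x\log(1/x)$, the log-number of subsets is at most of order
\[
\frac{n_1\log(K_1K_2)}{K_1^2K_2}+\frac{n_2\log K_2}{K_2^2}.
\]
This must be dominated by $n_1n_2(1-w)/(K_1^2K_2^3)$. That requirement reduces to $K^2K_2\log K_2\ll\min\{n_1,n_2\}(1-w)$, up to harmless log factors and the ordering $K_1\le K_2$; this is exactly the stated assumption. The outcome is a uniform bound with probability at least $1-\exp(-\alpha n_1n_2(1-w)^2/K_1^2K_2^3)$.

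\emph{Pigeonhole step.} For each true community $l\in[K_2]$, some $\hat l\in[K_2']$ satisfies
\[
|\hat{\mathcal H}_{\hat l}\cap\mathcal H_l|\geq \frac{|\mathcal H_l|}{K_2'}\geq \frac{\pi_0 n_2}{K_2^2}.
\]
Since $K_2'<K_2$, two distinct indices $l_2\neq l_2'$ share the same $\hat l=:k_2$. Assumption \ref{asm3}(iii) separates columns of $B_0$ in $\ell_\infty$, so there is a row $l_1\in[K_1]$ with $|B_{0,l_1l_2}-B_{0,l_1l_2'}|\geq d$. For this $l_1$, some $k_1\in[K_1']$ satisfies
\[
|\hat{\mathcal G}_{k_1}\cap\mathcal G_{l_1}|\geq \frac{|\mathcal G_{l_1}|}{K_1'}\geq \frac{\pi_0 n_1}{K_1\cdot 2K_1K_2},
\]
using $K_1'\le 2K_1K_2$. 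Apply the uniform bound with $S=\hat{\mathcal G}_{k_1}\cap\mathcal G_{l_1}$ and $T=\hat{\mathcal H}_{k_2}\cap\mathcal H_{l_2}$ (respectively $\mathcal H_{l_2'}$). This yields items 1 and 2; item 3 holds by the choice of $l_1$. The statement writes $l_1\in[K_2]$, but it should read $l_1\in[K_1]$.

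\emph{Main obstacle.} The main difficulty is the combinatorial union bound. The admissible subset sizes can be as small as $n_1/(K_1^2K_2)$, so the entropy term grows. I must check that the binomial deviance at $x=\pi_0/(2K_1^2K_2)$, which is of order $\log(K_1K_2)/(K_1^2K_2)$, multiplied by $n_1$, stays below the Hoeffding exponent. The same check is needed on side 2. Doing so pins down the assumed scaling, exactly as in Lemma \ref{lemma_underfitindex1}.
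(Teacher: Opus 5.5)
Your route is the paper's: the paper omits this proof as the mirror image of Lemma~\ref{lemma_underfitindex1}. Your thresholds $m_1=\pi_0n_1/(2K_1^2K_2)$ and $m_2=\pi_0n_2/K_2^2$, the pigeonhole on side~2, and the use of column separation are exactly that mirror, and you are right that $l_1\in[K_1]$.

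The gap is your claim that the union-bound requirement reduces to the stated hypothesis. Comparing each entropy term with the Hoeffding exponent $m_1m_2(1-w)^2/2$ gives two separate conditions:
\begin{itemize}
\item The side-2 term needs $m_1(1-w)^2\gg\log K_2$, i.e.\ $K_1^2K_2\log K_2\ll n_1(1-w)^2$. The hypothesis supplies this when $1-w$ is bounded below.
\item The side-1 term, the one you single out as the main obstacle, needs $m_2(1-w)^2\gg\log(K_1^2K_2)$, i.e.\ $K_2^2\log(K_1K_2)\ll n_2(1-w)^2$.
\end{itemize}
The second condition follows from $K^2K_2\log K_2\ll\min\{n_1,n_2\}(1-w)$ only when $K_2\lesssim K_1^2$; the ordering $K_1\le K_2$ works against you. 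In Lemma~\ref{lemma_underfitindex1}, the corresponding other-side condition $K_1^2\log(K_1K_2)\ll n_1(1-w)^2$ was absorbed into $KK_2^2\log K$ precisely because $K_1\le K_2$. Exchanging the roles of the two sides destroys that absorption, and the paper's literally swapped hypothesis has the same defect.

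This is not just bookkeeping: under the stated hypothesis alone, the uniform counting bound you rely on is false. Take $K_1$ fixed, $K_2\asymp n^{0.6}$, $n_1=n_2=n$ and $w$ fixed, so the hypothesis holds but $m_2<1$. Then $T=\{j_0\}$ with $j_0\in\mathcal H_1$ is admissible. The set $S=\{i\in\mathcal G_1:(i,j_0)\in\mathcal E\}$ has size of order $n\ge m_1$, yet $(S\times T)\cap\mathcal E^c=\emptyset$, while your bound claims at least of order $n^{0.2}$ test pairs.

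So you need to add $K_2^2\log(K_1K_2)\ll n_2(1-w)^2$ to the hypotheses, or find an argument that avoids this uniform bound. For Theorem~\ref{theo1} this is harmless when $1-w$ is bounded below, since condition~3 of Assumption~\ref{asm4} forces $\min\{n_1,n_2\}\gg K^4K_2^4$.

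A smaller point: a union bound over all subsets of size at least the thresholds has log-cardinality at least of order $n_1/K_1+n_2/K_2$, not the threshold entropy you wrote. Restrict to subsets of exactly the threshold sizes and use that the test count is monotone under inclusion.
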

	
	\begin{propositionA}\label{prop_underfit3}
		For all $K_1<K_1'\leq2K_1K_2$ and $K_2'<K_2$, we have with probability larger than $1-n^{-\gamma}-\exp(-\alpha n_1n_2(1-w)^2/K_1^2K_2^3)-2K_1^2K_2^3n^{-6}$, $$\ell_{K_1',K_2'}(A,\mathcal{E}^c)-\ell_{0}(A,\mathcal{E}^c)\geq C\left(\frac{n_1n_2d^2\rho^2(1-w)}{K_1^2K_2^3}\right),$$
		as long as $n_1n_2d^2\rho^2(1-w)\gg K^4K_2^6\log n$ and $n_1n_2d^4\rho(1-w)\gg K^4K_2^7$.
	\end{propositionA}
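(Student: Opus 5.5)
The plan is to mirror the proof of Proposition \ref{prop_underfit1}, swapping the roles of the two sides and using Lemma \ref{lemma_underfitindex3} in place of Lemma \ref{lemma_underfitindex1}.

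\emph{Step 1: find the merged block.} On the high-probability event of Lemma \ref{lemma_underfitindex3}, there are indices $k_1\in[K_1']$, $k_2\in[K_2']$, $l_1\in[K_1]$ and $l_2\neq l_2'\in[K_2]$ with $|B_{0,l_1l_2}-B_{0,l_1l_2'}|\ge d$. Both cells $\mathcal{T}_{k_1,k_2,l_1,l_2}$ and $\mathcal{T}_{k_1,k_2,l_1,l_2'}$ then have size at least $c\,n_1n_2(1-w)/K_1^2K_2^3$. Relabel so that $k_1=k_2=l_1=1$, $l_2=1$ and $l_2'=2$.

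\emph{Step 2: decompose the loss difference.} Split $\ell_{K_1',K_2'}-\ell_0\ge \mathcal{I}+\mathcal{II}+\mathcal{III}$, where:
\begin{itemize}
\item $\mathcal{I}$ and $\mathcal{II}$ are the contributions of the two distinguished cells, with $\hat{P}$ replaced by their pooled test average $\hat p=t\hat p_1+(1-t)\hat p_2$;
\item $\mathcal{III}=-\sum|\mathcal{T}_{k_1,k_2,l_1,l_2}|(\hat p_{k_1,k_2,l_1,l_2}-B_{l_1l_2})^2$ collects all remaining cells.
\end{itemize}
The inequality holds because replacing $\hat{B}$ (a training-set estimate) by the test-cell average can only lower the squared loss on each cell. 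The cell sizes come from the uniform binomial bound of Lemma \ref{lemma_underfitindex3}, and the balance of the two cells gives $t,1-t$ of constant order, after bounding their ratio via the community sizes in Assumption \ref{asm1}.

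For $\mathcal{I}$, use the convexity identity $h(t\hat p_1+(1-t)\hat p_2)-h(p_1)\ge |\mathcal{T}_{1,1,1,1}|(1-t)^2(\hat p_1-\hat p_2)^2+h(\hat p_1)-h(p_1)$. Bernstein gives $|\hat p_r-p_r|\lesssim\sqrt{\rho K_1^2K_2^3\log n/(n_1n_2(1-w))}$, so $|\hat p_1-\hat p_2|\ge Cd\rho$ once $n_1n_2d^2\rho(1-w)\gg K^2K_2^3\log n$. The fluctuation term $|h(\hat p_1)-h(p_1)|$ is of order $\sqrt{|\mathcal{T}_{1,1,1,1}|\rho}\cdot\rho$. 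Dominating it by $n_1n_2d^2\rho^2(1-w)/K_1^2K_2^3$ is what requires $n_1n_2d^4\rho(1-w)\gg K^4K_2^7$ (up to the exact powers produced by the cell-size upper bound). The same argument bounds $\mathcal{II}$.

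\emph{Step 3: control $\mathcal{III}$.} Apply the Hoeffding-plus-pigeonhole argument from Proposition \ref{prop_underfit1}. Each cell satisfies $|\mathcal{T}|(\hat p-B)^2<3\log n$ except with probability $2n^{-6}$. Union-bounding over the at most $K_1'K_2'K_1K_2\le 2K_1^2K_2^3$ cells (using $K_1'\le 2K_1K_2$ and $K_2'<K_2$) gives $\mathcal{III}\ge-3K_1K_2K_1'K_2'\log n\gtrsim -K_1^2K_2^3\log n$, except with probability $2K_1^2K_2^3n^{-6}$.

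\emph{Main obstacle.} The delicate point is making this negative term negligible relative to the signal $n_1n_2d^2\rho^2(1-w)/K_1^2K_2^3$. That is exactly the hypothesis $n_1n_2d^2\rho^2(1-w)\gg K^4K_2^6\log n$, so I would check the exponents carefully there. Finally, add the exceptional probabilities to obtain the stated bound $1-n^{-\gamma}-\exp(-\alpha n_1n_2(1-w)^2/K_1^2K_2^3)-2K_1^2K_2^3n^{-6}$.
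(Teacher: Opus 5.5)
Your proposal is correct and follows essentially the paper's argument. The paper omits this proof, saying only that it is the side-swapped version of Proposition~\ref{prop_underfit1}, and you carry out exactly that transposition: the merged cell from Lemma~\ref{lemma_underfitindex3}, the $\mathcal{I}+\mathcal{II}+\mathcal{III}$ decomposition, Bernstein on the distinguished pair, and the Hoeffding--pigeonhole bound over at most $2K_1^2K_2^3$ cells for $\mathcal{III}$.

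One small point, which the paper shares: Assumption~\ref{asm1} does not make $t,1-t$ of constant order when $K_2$ grows, since the two cells can differ in size by a factor growing with $K_2$. You do not need this, because exactly $\mathcal{I}+\mathcal{II}=\frac{|\mathcal{T}_{1,1,1,1}||\mathcal{T}_{1,1,1,2}|}{|\mathcal{T}_{1,1,1,1}|+|\mathcal{T}_{1,1,1,2}|}(\hat p_1-\hat p_2)^2-|\mathcal{T}_{1,1,1,1}|(\hat p_1-p_1)^2-|\mathcal{T}_{1,1,1,2}|(\hat p_2-p_2)^2$. Here the first term is at least half the smaller cell size times $(\hat p_1-\hat p_2)^2$, and the last two terms are only $O(\rho\log n)$ with high probability.
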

	
	\begin{propositionA}\label{prop_overfit4}
		When $K_1'>2K_1K_2$ and $K_2'<K_2$, we have $$\mathrm{Pr}\left[\ell_{K_1',K_2'}(A,\mathcal{E}^c)-\ell_{0}(A,\mathcal{E}^c)\leq-3K_1K_2K_1'K_2'\log n \right]\leq\frac{2K_1K_2^2}{n^5}.$$
	\end{propositionA}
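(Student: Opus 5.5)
The statement to prove is Proposition \ref{prop_overfit4}: when $K_1'>2K_1K_2$ and $K_2'<K_2$, the loss difference falls below $-3K_1K_2K_1'K_2'\log n$ only with probability at most $2K_1K_2^2n^{-5}$. I would follow the argument for Proposition \ref{prop_overfit3} (equivalently, the bound on $\mathcal{III}$ in Proposition \ref{prop_underfit1}), with the roles of the two sides exchanged. No label-recovery result is needed, because the bound must hold for whatever labels $\hat c_1,\hat c_2$ the $k$-means step returns.

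\textbf{Step 1: reduce to within-cell deviations.} Partition $\mathcal{E}^c$ into the cells $\mathcal{T}_{k_1,k_2,l_1,l_2}$, indexed by $k_1\in[K_1']$, $k_2\in[K_2']$, $l_1\in[K_1]$, $l_2\in[K_2]$. On each cell the fitted value $\hat P_{ij}$ is the constant $\hat B_{k_1k_2}$. Replacing that constant by the cell's own test mean $\hat p_{k_1,k_2,l_1,l_2}$ can only lower the sum of squares. Then use the identity $\sum_{\mathcal T}(A_{ij}-\hat p)^2-\sum_{\mathcal T}(A_{ij}-B_{l_1l_2})^2=-|\mathcal T|(\hat p-B_{l_1l_2})^2$. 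This gives
\[
\ell_{K_1',K_2'}(A,\mathcal{E}^c)-\ell_0(A,\mathcal{E}^c)\ \ge\ -\sum_{k_1,k_2,l_1,l_2}|\mathcal{T}_{k_1,k_2,l_1,l_2}|\,(\hat p_{k_1,k_2,l_1,l_2}-B_{l_1l_2})^2 .
\]

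\textbf{Step 2: bound each cell.} Within a fixed cell the $A_{ij}$ are independent Bernoulli$(B_{l_1l_2})$, and they are independent of the test split and of the labels, which are built from training entries only. Conditioning on the labels and on $\mathcal E$, Hoeffding's inequality gives
\[
\Pr\bigl[|\mathcal T|(\hat p-B_{l_1l_2})^2\ge 3\log n\bigr]\le 2n^{-6}.
\]
If the whole sum exceeds $3K_1K_2K_1'K_2'\log n$, then by pigeonhole over the $K_1'K_2'K_1K_2$ cells at least one cell term is at least $3\log n$. A union bound therefore gives total probability at most $2K_1K_1'K_2K_2'n^{-6}$.

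\textbf{Step 3: remove the dependence on $K_1'$.} Use $K_1'\le n_1\le n$ and $K_2'<K_2$ to get $2K_1K_1'K_2K_2'n^{-6}\le 2K_1K_2^2n^{-5}$, which is the claimed bound.

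The main obstacle is the conditioning in Step 2. The labels come from the SVD of $Y$, which uses only training entries, so they are independent of the test entries $A_{ij}$, $(i,j)\in\mathcal E^c$. This independence is what allows Hoeffding to be applied cellwise after conditioning on $(\mathcal E,\hat c_1,\hat c_2)$; I would state it explicitly.
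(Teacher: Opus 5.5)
Your proposal is correct and is essentially the paper's argument. The paper omits this proof as the side-swapped version of Proposition~\ref{prop_overfit1}, i.e.\ the bound on $\mathcal{III}$: replace the fitted value in each cell by the cell's test mean, apply Hoeffding cellwise at threshold $3\log n$, use pigeonhole and a union bound over the $K_1K_2K_1'K_2'$ cells, and finally use $K_1'\le n$ and $K_2'<K_2$. Your explicit conditioning on $\mathcal{E}$ and the training-based labels, which the paper leaves implicit, is the right justification for applying Hoeffding cellwise.
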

	
	\begin{propositionA}\label{prop_probcontrol3}
		As long as $\nmin wd^2\rho\beta\gg K^4K_2^3$ and $\nmin wd^2\beta\gg K^5K_2^3$, we have
		$$\sum_{K_1'>K_1}\sum_{K_2'<K_2}\mathrm{Pr}\left[L_{K_1,K_2}(A,\mathcal{E}^c)>L_{K_1',K_2'}(A,\mathcal{E}^c)\right]\to0$$
	\end{propositionA}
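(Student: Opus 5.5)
The argument mirrors the proof of Proposition \ref{prop_probcontrol1}, with the roles of the two sides exchanged as in Lemma \ref{lemma_underfitindex3}, Proposition \ref{prop_underfit3} and Proposition \ref{prop_overfit4}. We work on the event $\Omega=\{n_1n_2(1-w)/2\leq|\mathcal{E}^c|\leq 2n_1n_2(1-w)\}$, whose probability tends to one. On $\Omega$, $L_{K_1,K_2}>L_{K_1',K_2'}$ implies
$$\left(\ell_{K_1,K_2}-\ell_0\right)-\left(\ell_{K_1',K_2'}-\ell_0\right)>\tfrac12(K_1'K_2'-K_1K_2)\,n_1n_2(1-w)\lambda_{n_1,n_2},$$
possibly with a different constant when $K_1'K_2'<K_1K_2$. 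The range $K_1'>K_1$, $K_2'<K_2$ is split into the two sub-regions $K_1<K_1'\leq 2K_1K_2$ and $K_1'>2K_1K_2$.

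\textbf{Moderate region, $K_1<K_1'\leq 2K_1K_2$.} Here the candidate underfits side 2.
\begin{itemize}
\item Proposition \ref{prop_underfit3} gives the lower bound $\ell_{K_1',K_2'}-\ell_0\geq C n_1n_2d^2\rho^2(1-w)/(K_1^2K_2^3)$.
\item Proposition \ref{prop_sbmupper} gives the upper bound $\ell_{K_1,K_2}-\ell_0\leq C nK^2\rho(1-w)\max\{1,K\rho\}/(w\beta)$.
\end{itemize}
The assumed scalings $\nmin wd^2\rho\beta\gg K^4K_2^3$ and $\nmin wd^2\beta\gg K^5K_2^3$, together with $n\leq n_1n_2/\nmin$, should make the second bound negligible against the first. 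The penalty difference is at most $|K_1'K_2'-K_1K_2|\,n_1n_2(1-w)\lambda\leq 2K_1^2K_2^2\,n_1n_2(1-w)\lambda$. This is dominated by the underfitting gap unless $\lambda\gtrsim \rho^2/(K^4K_2^5)$, an event whose probability is controlled by penalty Condition 1 (possibly after adjusting powers of $K$).

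Summing over the at most $2K_1K_2\cdot K_2$ pairs, the failure probabilities
$$n^{-\gamma}+\exp\!\left(-\alpha n_1n_2(1-w)^2/K_1^2K_2^3\right)+2K_1^2K_2^3n^{-6}$$
are each multiplied by $K_1K_2^2$. This sum vanishes for $\gamma$ large, since $K_2\leq n$ and $n_1n_2/(K_1^2K_2^3)\gg\log n$ under Assumption \ref{asm4}.

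\textbf{Large region, $K_1'>2K_1K_2$.} Here $K_1'K_2'-K_1K_2\geq K_1'K_2'/2$, so the penalty excess is at least $\tfrac14K_1'K_2'n_1n_2(1-w)\lambda$.
\begin{itemize}
\item Proposition \ref{prop_overfit4} bounds the overfitting gain: $\ell_{K_1',K_2'}-\ell_0\geq -3K_1K_2K_1'K_2'\log n$, outside probability $2K_1K_2^2n^{-5}$.
\item Proposition \ref{prop_sbmupper} again bounds $\ell_{K_1,K_2}-\ell_0$.
\end{itemize}
The required inequality then splits into two events.
\begin{itemize}
\item The first is $C nK^2\rho(1-w)\max\{1,K\rho\}/(w\beta)>\tfrac18K_1K_2n_1n_2(1-w)\lambda$. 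Its probability is $\ll n^{-2}$ by Condition 2.
\item The second is $3K_1K_2K_1'K_2'\log n>\tfrac18K_1'K_2'n_1n_2(1-w)\lambda$, in which $K_1'K_2'$ cancels. Its probability is $\ll n^{-2}$ by Condition 3.
\end{itemize}
Summing over the at most $n\cdot K_2$ pairs $(K_1',K_2')$ gives $o(1)$, together with $nK_2\cdot K_1K_2^2n^{-5}+nK_2\cdot 3K^2K_2^2n^{-\gamma}\to0$.

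\textbf{Main obstacle.} The delicate step is the moderate region. The underfitting gap there degrades to $K_1^{-2}K_2^{-3}$, which is worse than in Section \ref{proof:region1}. It has to beat both the penalty difference (up to order $K_1^2K_2^2\lambda$) and the estimation error of the true model. Reconciling the exact powers of $K$ with the stated penalty Condition 1 and with the scaling conditions of this proposition requires careful exponent bookkeeping. I would first check that $\rho^2/(K^3K_2^4)$ in Condition 1 is small enough after multiplying by $K_1^2K_2^2$. If it is not, the fallback is to use $K_2'<K_2$ to bound $K_1'K_2'-K_1K_2$ more sharply, or to tighten the side-1 count $K_1'$ in Lemma \ref{lemma_underfitindex3}.
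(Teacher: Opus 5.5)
Your route is the paper's. Section~\ref{proof:region3} is the side-swapped version of Section~\ref{proof:region1}, split at $K_1'=2K_1K_2$: Proposition~\ref{prop_underfit3} against Proposition~\ref{prop_sbmupper} below the split, and Proposition~\ref{prop_overfit4} with penalty Conditions 2--3 above it. Your check that the two stated scalings make $C nK^2\rho(1-w)\max\{1,K\rho\}/(w\beta)$ negligible against $n_1n_2d^2\rho^2(1-w)/(K_1^2K_2^3)$ is right, and the large region is handled correctly.

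The step that fails as written is the penalty comparison in the moderate region.
\begin{itemize}
\item Bounding the penalty via $|K_1'K_2'-K_1K_2|\le 2K_1^2K_2^2$ gives the threshold $\rho^2/(K^4K_2^5)$. This is strictly below the $\rho^2/(K^3K_2^4)$ of Condition 1, so Condition 1 gives no control of $\{\lambda_{n_1,n_2}\gtrsim\rho^2/(K^4K_2^5)\}$.
\item Your fallback of using $K_2'<K_2$ does not rescue this, since it only improves the bound to $2K_1K_2^2$.
\item The missing idea is the sign, which your first display already contains. The bad event is $(\ell_{K_1',K_2'}-\ell_0)-(\ell_{K_1,K_2}-\ell_0)<(K_1K_2-K_1'K_2')\,|\mathcal{E}^c|\,\lambda_{n_1,n_2}$.
\item If $K_1'K_2'\ge K_1K_2$, the right side is nonpositive. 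The event then already fails on the high-probability events of Propositions~\ref{prop_underfit3} and~\ref{prop_sbmupper}, whatever $\lambda_{n_1,n_2}$ is.
\item If $K_1'K_2'<K_1K_2$, then $K_1K_2-K_1'K_2'<K_1(K_2-K_2')<K_1K_2$, because $K_1'>K_1$.
\end{itemize}
So on $\Omega$ the event forces $\lambda_{n_1,n_2}\gtrsim d^2\rho^2/(K_1^3K_2^4)$. With $K_1=K$ this is exactly the threshold in Condition 1; this region is where those exponents come from.

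One further piece of bookkeeping concerns how you count this $\lambda$-event. It does not depend on $(K_1',K_2')$. If you add its probability once per pair, as in the proof of Proposition~\ref{prop_probcontrol1}, the roughly $2K_1K_2^2$ pairs here require $K_1K_2^2\Pr[\cdot]\to0$. That is the side-swapped multiplier, and it exceeds the $K^2K_2$ factor in Condition 1 when $K_2\gg K$. For Theorem~\ref{theo1}, it suffices to bound the probability of the union over pairs, where this event is charged only once.
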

	
	\subsubsection{Case $K_1'>K_1,K_2'=K_2$}\label{proof:region4}
	Similarly it's a symmetric version of the proof in Section \ref{proof:region2}. We also list the symmetric versions of the lemmas and propositions used in Section \ref{proof:region2} and omit the proofs.
	
	\begin{propositionA}\label{prop_overfit6}
		When $K_1'>K_1$, we have $$\mathrm{Pr}\left[\ell_{K_1',K_2}(A,\mathcal{E}^c)-\ell_{0}(A,\mathcal{E}^c)\leq-3K_1K_2^2K_1'\log n \right]\leq\frac{2K_1K_2^2}{n^5}.$$
	\end{propositionA}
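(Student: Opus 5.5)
The plan is to mirror the proof of Proposition \ref{prop_overfit3}, with the roles of the two sides exchanged. Let $\mathcal{T}_{k_1,k_2,l_1,l_2}$ be the test-set cells indexed by estimated labels $(k_1,k_2)\in[K_1']\times[K_2]$ and true labels $(l_1,l_2)\in[K_1]\times[K_2]$. The key structural observation is that on every test entry, $\hat P^{(K_1',K_2)}_{ij}$ is constant within each estimated block $\hat{\mathcal G}_{k_1}\times\hat{\mathcal H}_{k_2}$. Consequently, on each finer cell $\mathcal{T}_{k_1,k_2,l_1,l_2}$ it is a single constant. On a fixed cell, the sum of squares $\sum_{\mathcal{T}}(A_{ij}-p)^2$ is minimized at the cell average $\hat p_{k_1,k_2,l_1,l_2}$. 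This gives the deterministic lower bound
\[
\ell_{K_1',K_2}(A,\mathcal{E}^c)-\ell_0(A,\mathcal{E}^c)\geq -\sum_{k_1,k_2,l_1,l_2}|\mathcal{T}_{k_1,k_2,l_1,l_2}|\left(\hat p_{k_1,k_2,l_1,l_2}-B_{l_1l_2}\right)^2,
\]
which follows from the identity $\sum_{\mathcal T}(A_{ij}-\hat p)^2-\sum_{\mathcal T}(A_{ij}-B)^2=-|\mathcal T|(\hat p-B)^2$.

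Next I will bound the right-hand side cell by cell. The number of nonempty cells is at most $K_1'K_2\cdot K_1K_2=K_1K_2^2K_1'$. For each cell, I condition on the labels and on the split. Hoeffding's inequality then gives
\[
\Pr\left[|\mathcal{T}|(\hat p-B_{l_1l_2})^2\geq 3\log n\right]\leq 2n^{-6}.
\]
By the pigeonhole principle, if the total exceeds $3K_1K_2^2K_1'\log n$, then some cell must exceed $3\log n$. A union bound over the cells then bounds the probability by $2K_1K_2^2K_1'n^{-6}\leq 2K_1K_2^2n^{-5}$, using $K_1'\leq n$.

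The main obstacle is the conditioning step. The estimated labels $\hat c_1,\hat c_2$ depend only on training entries $(i,j)\in\mathcal E$, while the test entries $A_{ij}$, $(i,j)\in\mathcal E^c$, are independent of them given $\mathcal E$. So, conditional on $(\mathcal E,\hat c_1,\hat c_2)$, the entries in each cell are independent Bernoulli variables with common mean $B_{l_1l_2}$, and Hoeffding's inequality applies. I will state this explicitly, as is done implicitly in Proposition \ref{prop_overfit3}. A small point is that the fitted value uses $\hat B$ estimated from the training data. This does not matter, because we only use that it is constant on each cell.
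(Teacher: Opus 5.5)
Your proposal is correct and follows the paper's route. The paper omits this proof as the side-swapped version of Proposition \ref{prop_overfit3}: it uses the same cell-wise lower bound $-\sum|\mathcal T|(\hat p-B_{l_1l_2})^2$, Hoeffding at level $3\log n$, pigeonhole, and a union bound over at most $K_1'K_2\cdot K_1K_2$ cells with $K_1'\le n$, exactly as you do. Your explicit argument that the test entries are conditionally independent Bernoulli variables given $(\mathcal E,\hat c_1,\hat c_2)$ makes precise a step the paper leaves implicit.
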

	\begin{propositionA}\label{prop_probcontrol4}
		As long as $n\rho K=\Omega(K_2\log n)$, $\nmin\rho w\beta\gg K^2$ and $\min\{n_1,n_2\}w\beta\gg K^3$, we have
		$$\sum_{K_1'> K_1}\mathrm{Pr}\left[L_{K_1,K_2}(A,\mathcal{E}^c)>L_{K_1',K_2}(A,\mathcal{E}^c)\right]\to0.$$
		Here an additional assumption for $\lambda_n$ will write as $$K\Pr\left[\lambda_{n_1,n_2}<C\frac{\rho K^2}{w\beta\nmin K_2}\maxrho\right]\to0$$
		and
		$$K\Pr\left[\lambda_{n_1,n_2}<C\frac{K^2K_2\log n}{n_1n_2(1-w)}\right]\to0$$
	\end{propositionA}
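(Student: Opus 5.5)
The plan is to treat this as the mirror image of the overfitting half of Proposition \ref{prop_probcontrol2}, with the roles of the two sides exchanged. The correctly specified comparison uses Proposition \ref{prop_sbmupper}. The overfitting bound is Proposition \ref{prop_overfit6}. Throughout I work on the event $\Omega=\{n_1n_2(1-w)/2\le|\mathcal{E}^c|\le 2n_1n_2(1-w)\}$, whose complement has exponentially small probability. On $\Omega$, the event $L_{K_1,K_2}>L_{K_1',K_2}$ implies
\[
\bigl(\ell_{K_1,K_2}-\ell_0\bigr)-\bigl(\ell_{K_1',K_2}-\ell_0\bigr)>\tfrac12 (K_1'-K_1)K_2\, n_1n_2(1-w)\lambda_{n_1,n_2}.
\]
The left-hand side is bounded above by two quantities. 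Proposition \ref{prop_sbmupper} gives $C\,nK^2\rho(1-w)\max\{1,K\rho\}/(w\beta)$ with probability at least $1-3K^2K_2^2n^{-\gamma}$. Proposition \ref{prop_overfit6} gives $3K_1K_2^2K_1'\log n$ with probability at least $1-2K_1K_2^2n^{-5}$.

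I will split the range of $K_1'$ into two regimes.

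\emph{Far regime, $K_1'>2K_1$.} Here $K_1'-K_1\ge K_1'/2$, so the penalty gap is at least $\tfrac14 K_1'K_2n_1n_2(1-w)\lambda$. Split this into two halves and compare each half with one of the two error terms.
\begin{itemize}
\item The first error term is beaten as soon as $\lambda\ge C\rho K^2\max\{1,K\rho\}/(w\beta\nmin K_2)$, using $n/(n_1n_2)=1/\nmin$ and $K_1'\ge K$.
\item The second error term is beaten as soon as $\lambda\ge C K^2K_2\log n/(n_1n_2(1-w))$; the factor $K_1'$ cancels on both sides.
\end{itemize}
Both lower-tail events have probability $\ll n^{-2}$ by Conditions 2 and 3 of Theorem \ref{theo1}, noting that $w$, $1-w$ and $d$ are constants. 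Summing over at most $n$ values of $K_1'$ therefore gives $o(1)$. The residual probabilities $3K^2K_2^2n^{-\gamma}$ and $2K_1K_2^2n^{-5}$ also sum to $o(1)$ over $K_1'\le n$, once $\gamma$ is taken large.

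\emph{Near regime, $K_1<K_1'\le 2K_1$.} Here only $K_1'-K_1\ge 1$ is available, so the penalty gap is at least $\tfrac12 K_2n_1n_2(1-w)\lambda$. The overfitting error is at most $6K_1^2K_2^2\log n$. The requirements on $\lambda$ become the two displayed extra conditions of the proposition: $\lambda$ must exceed $C\rho K^2\max\{1,K\rho\}/(w\beta\nmin K_2)$ and $CK^2K_2\log n/(n_1n_2(1-w))$. There are only $K$ such values of $K_1'$, so it suffices that $K$ times each lower-tail probability tends to $0$, which is exactly what is assumed.

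The main obstacle is checking that the constant-order losses line up. First, the factor $K_2$ produced by the penalty difference $d_{K_1',K_2}-d_{K_1,K_2}=(K_1'-K_1)K_2$ must absorb the $K^2$ in Proposition \ref{prop_sbmupper}. Second, the logarithmic overfitting term $3K_1K_2^2K_1'\log n$ must scale linearly in $K_1'$, so that it is dominated uniformly over the unbounded candidate set. If the bound in Proposition \ref{prop_overfit6} had superlinear dependence in $K_1'$, I would instead redo the Hoeffding/pigeonhole count, counting only cells with nonempty $\mathcal{T}_{k_1,k_2,l_1,l_2}$. At most $K_1'K_2+K_1K_2$ refinement cells are nonempty, which restores linearity in $K_1'$.
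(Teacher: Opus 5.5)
Your proof is correct and is essentially the paper's argument: the paper omits this proof as the mirror image of the overfitting half of Proposition \ref{prop_probcontrol2}, i.e.\ working on $\Omega$, splitting at $K_1'=2K_1$, and combining Propositions \ref{prop_sbmupper} and \ref{prop_overfit6} to obtain exactly the thresholds on $\lambda_{n_1,n_2}$ you derive. One aside in your contingency remark is false and should be dropped: the common refinement of partitions into $K_1'$ and $K_1$ parts can have up to $K_1'K_1$ nonempty parts, so ``at most $K_1'K_2+K_1K_2$ nonempty cells'' fails in general; nothing depends on it, since Proposition \ref{prop_overfit6} is already linear in $K_1'$.
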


	\subsubsection{Case $K_1'>K_1,K_2'>K_2$}\label{proof:region5}
	We use exactly the same argument as in Proposition \ref{prop_overfit1} to get the following proposition.
	\begin{propositionA}\label{prop_overfit8}
		When $K_1'>K_1$ or $K_2'>K_2$, we have $$\mathrm{Pr}\left[\ell_{K_1',K_2'}(A,\mathcal{E}^c)-\ell_{0}(A,\mathcal{E}^c)\leq-3K_1K_2K_1'K_2'\log n \right]\leq\frac{2K_1K_2}{n^4}.$$
	\end{propositionA}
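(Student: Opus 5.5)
The plan is to follow the proof of Proposition \ref{prop_overfit3}, and the estimate of $\mathcal{III}$ in Proposition \ref{prop_underfit1}, verbatim. Fix any candidate $(K_1',K_2')$ with $K_1'>K_1$ or $K_2'>K_2$. Let $\hat c_1,\hat c_2$ be the labels produced for this candidate, and partition the test set into the cells $\mathcal{T}_{k_1,k_2,l_1,l_2}$, with $k_r\in[K_r']$ and $l_r\in[K_r]$. On each cell the true mean is the constant $B_{l_1l_2}$. The empirical average $\hat p_{k_1,k_2,l_1,l_2}$ minimizes $p\mapsto\sum_{\mathcal{T}_{k_1,k_2,l_1,l_2}}(A_{ij}-p)^2$, and any fitted value $\hat P^{(K_1',K_2')}_{ij}$ that is constant on the cell can only do worse. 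Hence
\begin{equation*}
\ell_{K_1',K_2'}(A,\mathcal{E}^c)-\ell_0(A,\mathcal{E}^c)\geq-\sum_{k_1,k_2,l_1,l_2}|\mathcal{T}_{k_1,k_2,l_1,l_2}|\bigl(\hat p_{k_1,k_2,l_1,l_2}-B_{l_1l_2}\bigr)^2 .
\end{equation*}
This is the same algebraic identity already used for $\mathcal{III}$, and it needs no label-recovery guarantee. That is essential, since in the overfitted regime the labels are arbitrary.

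Next I bound each summand. I condition on the labels and on the split $\mathcal{E}$. The labels are functions only of the training entries, so the test entries $A_{ij}$, $(i,j)\in\mathcal{E}^c$, remain independent Bernoulli$(B_{l_1l_2})$ on each cell. Hoeffding's inequality with deviation $\sqrt{3\log n/|\mathcal{T}|}$ then gives
\begin{equation*}
\Pr\bigl[|\mathcal{T}_{k_1,k_2,l_1,l_2}|(\hat p_{k_1,k_2,l_1,l_2}-B_{l_1l_2})^2\geq 3\log n\bigr]\leq 2n^{-6}.
\end{equation*}
There are $K_1K_2K_1'K_2'$ cells. If the total exceeds $3K_1K_2K_1'K_2'\log n$, then by pigeonhole some cell exceeds $3\log n$. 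A union bound therefore gives a failure probability of at most $2K_1K_2K_1'K_2'n^{-6}$.

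To reach the stated form $2K_1K_2n^{-4}$, I use $K_1'\leq n_1\leq n$ and $K_2'\leq n_2\leq n$, since the candidate sets are at most $[n_1]$ and $[n_2]$. This gives $K_1'K_2'\leq n^2$, and the bound $2K_1K_2K_1'K_2'n^{-6}\leq 2K_1K_2n^{-4}$ follows.

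The main subtlety is the conditioning step. One must check that $\hat c_1,\hat c_2$ and the cell structure are measurable with respect to the training data, namely $Y$ and $\mathcal{E}$. Only then can Hoeffding be applied on the test cells with the cell sizes treated as fixed. This holds because the SVD and $k$-means use only $Y$. Empty cells contribute zero and cause no issue.
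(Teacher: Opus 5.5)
Your proposal is correct and is essentially the paper's own argument. Both use the cellwise least-squares lower bound $-\sum|\mathcal{T}_{k_1,k_2,l_1,l_2}|(\hat p_{k_1,k_2,l_1,l_2}-B_{l_1l_2})^2$, then Hoeffding at deviation $\sqrt{3\log n/|\mathcal{T}_{k_1,k_2,l_1,l_2}|}$, then pigeonhole plus a union bound over the $K_1K_2K_1'K_2'$ cells, and finally $K_1'K_2'\leq n^2$. You also make explicit the conditioning on $\mathcal{E}$ and the training entries that justifies applying Hoeffding to data-dependent cells, which the paper leaves implicit, and you avoid the paper's minor inconsistency between the thresholds $2\log n$ and $3\log n$.
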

	\begin{proof}[Proof of Proposition \ref{prop_overfit8}]
		We use the same notation as in Proposition \ref{prop_overfit3}. Notice that
		\begin{align*}
			\ell_{K_1',K_2'}(A,\mathcal{E}^c)-\ell_{0}(A,\mathcal{E}^c)=&\sum_{k_1,k_2,l_1,l_2}\sum_{(i,j)\in\mathcal{T}_{k_1,k_2,l_1,l_2}}\left[\left(A_{ij}-\hat{P}^{(K_1',K_2')}_{ij}\right)^2-\left(A_{ij}-B_{l_1l_2}\right)^2\right]\\
			\geq&\sum_{k_1,k_2,l_1,l_2}\sum_{(i,j)\in\mathcal{T}_{k_1,k_2,l_1,l_2}}\left[\left(A_{ij}-\hat{p}_{k_1,k_2,l_1,l_2}\right)^2-\left(A_{ij}-B_{l_1l_2}\right)^2\right]\\
			=&-\sum_{k_1,k_2,l_1,l_2}|\mathcal{T}_{k_1,k_2,l_1,l_2}|\left(\hat{p}_{k_1,k_2l_1,l_2}-B_{l_1l_2}\right)^2.
		\end{align*}
		
		We have
		\begin{align*}
			&\mathrm{Pr}\left[\ell_{K_1',K_2'}(A,\mathcal{E}^c)-\ell_{0}(A,\mathcal{E}^c)\leq-3K_1K_2K_1'K_2'\log n\right]\\
			\leq&\mathrm{Pr}\left[\sum_{k_1,k_2,l_1,l_2}|\mathcal{T}_{k_1,k_2,l_1,l_2}|\left(\hat{p}_{k_1,k_2,l_1,l_2}-B_{l_1l_2}\right)^2\geq 3K_1K_2K_1'K_2'\log n\right]
		\end{align*}
		Now we shall estimate the bound of $\sum_{k_1,k_2,l_1,l_2}|\mathcal{T}_{k_1,k_2,l_1,l_2}|\left(\hat{p}_{k_1,k_2,l_1,l_2}-B_{l_1l_2}\right)^2$. For each $\{k_1,k_2,l_1,l_2\}$, by Hoeffding's inequality, we have
		$$\mathrm{Pr}\left[|\hat{p}_{k_1,k_2,l_1,l_2}-B_{l_1l_2}|\geq \sqrt{\frac{3\log n}{|\mathcal{T}_{k_1,k_2,l_1,l_2}|}}\right]
		\leq 2\exp\left(-\frac{6\log n|\mathcal{T}_{k_1,k_2,l_1,l_2}|}{|\mathcal{T}_{k_1,k_2,l_1,l_2}|}\right)\leq 2n^{-6},$$
		i.e.,
		$$\mathrm{Pr}\left[|\mathcal{T}_{k_1,k_2,l_1,l_2}|\left(\hat{p}_{k_1,k_2,l_1,l_2}-B_{l_1l_2}\right)^2\geq 2\log n\right]\leq 2n^{-6}$$
		Therefore, by pigeonhole principle, if $\sum_{k_1,k_2,l_1,l_2}|\mathcal{T}_{k_1,k_2,l_1,l_2}|\left(\hat{p}_{k_1,k_2,l_1,l_2}-B_{l_1l_2}\right)^2\geq 3K_1K_2K_1'K_2'$, then there must exists some $\{k_1,k_2,l_1,l_2\}$, such that $$|\mathcal{T}_{k_1,k_2,l_1,l_2}|\left(\hat{p}_{k_1,k_2,l_1,l_2}-B_{l_1l_2}\right)^2\geq 3\log n.$$ Thus, 
		\begin{align*}
			&\mathrm{Pr}\left[\sum_{k_1,k_2,l_1,l_2}|\mathcal{T}_{k_1,k_2,l_1,l_2}|\left(\hat{p}_{k_1,k_2,l_1,l_2}-B_{l_1l_2}\right)^2\geq3K^2K_2K'\log n\right]\\
			\le & \sum_{k_1,k_2,l_1,l_2}\mathrm{Pr}\left[|\mathcal{T}_{k_1,k_2,l_1,l_2}|\left(\hat{p}_{k_1,k_2,l_1,l_2}-B_{l_1l_2}\right)^2\geq3\log n\right]\\
			\leq & 2K_1K_2K_1'K_2'n^{-6}\leq 2K_1K_2n^{-4}.
		\end{align*}
	\end{proof}
	Using the above proposition, we have
	\begin{propositionA}\label{prop_probcontrol5}
		As long as $n\rho K=\Omega(K_2\log n)$, $\nmin\rho w\beta\gg K^2$ and $\min\{n_1,n_2\}w\beta\gg K^3$, we have
		$$\sum_{K_1'>K_1}\sum_{K_2'>K_2}\mathrm{Pr}\left[L_{K_1,K_2}(A,\mathcal{E}^c)>L_{K_1',K_2'}(A,\mathcal{E}^c)\right]\to0$$
	\end{propositionA}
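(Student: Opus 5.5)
We need to show that, summed over all $K_1'>K_1$ and $K_2'>K_2$, the probability that the overfitted pair beats the true one vanishes. We follow the overfitting branch of Proposition \ref{prop_probcontrol2}, using Proposition \ref{prop_overfit8} in place of Proposition \ref{prop_overfit3}.

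Work on the event $\Omega$ from Proposition \ref{prop_probcontrol1}, whose probability tends to one. On $\Omega$ the penalty contributes at least $\tfrac12(K_1'K_2'-K_1K_2)n_1n_2(1-w)\lambda_{n_1,n_2}$ to the difference of the unnormalized losses. Since $K_1'\ge K_1+1$ and $K_2'\ge K_2+1$, we have $K_1'K_2'-K_1K_2\ge \max\{1,\tfrac12 K_1'K_2'\}$ uniformly on this quadrant, which is cleaner than in the one-sided cases. Hence
\[
\Pr\left[L_{K_1,K_2}>L_{K_1',K_2'}\right]\le \Pr\left[(\ell_{K_1,K_2}-\ell_0)-(\ell_{K_1',K_2'}-\ell_0)>\tfrac14K_1'K_2'n_1n_2(1-w)\lambda_{n_1,n_2}\right]+\Pr(\Omega^c).
\]

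We bound $\ell_{K_1,K_2}-\ell_0$ from above by Proposition \ref{prop_sbmupper}, valid with probability at least $1-3K^2K_2^2n^{-\gamma}$ under the stated conditions. We bound $-(\ell_{K_1',K_2'}-\ell_0)$ by $3K_1K_2K_1'K_2'\log n$ via Proposition \ref{prop_overfit8}, with failure probability at most $2K_1K_2K_1'K_2'n^{-6}$ before the crude simplification. The event then splits into two pieces, each compared with $\tfrac18K_1'K_2'n_1n_2(1-w)\lambda_{n_1,n_2}$.

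\emph{Bias piece.} Using $K_1'K_2'\ge K_1K_2$, this piece requires $\lambda_{n_1,n_2}\lesssim \rho K\max\{1,K\rho\}/(w\beta\nmin K_2)$. Condition 2 of Theorem \ref{theo1} gives probability $\ll n^{-2}$ for this, since its threshold is larger.

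\emph{Log piece.} This piece reduces to $\lambda_{n_1,n_2}<24K_1K_2\log n/(n_1n_2(1-w))$. Its probability is $\ll n^{-2}$ by condition 3, noting that $K_1K_2\le KK_2^2$ and that $w$ is bounded away from $1$ in the relevant regime, so $1/(1-w)$ is absorbed into the constant.

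\emph{Summation.} The region has at most $n^2$ candidate pairs. The two $\lambda$-pieces contribute $n^2\cdot o(n^{-2})=o(1)$, and the $n^{-\gamma}$ terms vanish for $\gamma$ large.

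The main obstacle is the summation for the Proposition \ref{prop_overfit8} failure probabilities, because the stated bound $2K_1K_2n^{-4}$ summed over $n^2$ pairs gives only $O(K_1K_2n^{-2})$. I would instead keep the sharper per-pair bound $2K_1K_2K_1'K_2'n^{-6}$ from inside that proof. Summing $K_1'K_2'\le n^2$ over $n^2$ pairs gives $2K_1K_2n^{-2}$, which tends to zero since $K_1K_2\ll n$ by Assumption \ref{asm4}. If more room is needed, I would raise the Hoeffding threshold in that proof from $3\log n$ to $c\log n$ with $c$ large, giving $n^{-2c}$ per cell and a summable total trivially.
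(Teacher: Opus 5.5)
Your overall route matches the paper's: Proposition~\ref{prop_sbmupper} for the true model, Proposition~\ref{prop_overfit8} for the overfitted one, then conditions 2 and 3 of Theorem~\ref{theo1}. However, the step that makes your version ``cleaner'' is false. It is not true that $K_1'K_2'-K_1K_2\ge \tfrac12 K_1'K_2'$ on the whole quadrant $K_1'>K_1$, $K_2'>K_2$.

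That inequality is equivalent to $K_1'K_2'\ge 2K_1K_2$, and it fails near the corner. At $(K_1',K_2')=(K_1+1,K_2+1)$ it reads $K_1+K_2+1\ge K_1K_2$. This is already false for $(K_1,K_2)=(3,3)$, and it is off by a factor of order $K$ when the community numbers grow. There the penalty gap is only $\tfrac12(K_1'K_2'-K_1K_2)n_1n_2(1-w)\lambda_{n_1,n_2}$, with $K_1'K_2'-K_1K_2$ possibly as small as $K_1+K_2+1$. So you cannot compare both pieces with $\tfrac18K_1'K_2'n_1n_2(1-w)\lambda_{n_1,n_2}$. In particular, cancelling $K_1'K_2'$ in your log piece to get the threshold $24K_1K_2\log n/(n_1n_2(1-w))$ is unjustified there.

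The repair is the case split the paper makes.

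\emph{Outside the box} $K_1<K_1'\le 2K_1$, $K_2<K_2'\le 2K_2$: here $K_1'K_2'>2K_1K_2$, and your argument goes through verbatim.

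\emph{Inside the box} (only $K_1K_2$ pairs): use $K_1'K_2'-K_1K_2\ge K_1$ and $K_1'K_2'\le 4K_1K_2$.
\begin{itemize}
\item The bias piece then requires $\lambda_{n_1,n_2}\lesssim \rho K\max\{1,K\rho\}/(w\beta\min\{n_1,n_2\})$.
\item The log piece requires $\lambda_{n_1,n_2}\lesssim K_1K_2^2\log n/(n_1n_2(1-w))$.
\end{itemize}
These are exactly the thresholds in conditions 2 and 3. Their being larger than what your far-region computation needs, namely $\rho K\max\{1,K\rho\}/(w\beta\min\{n_1,n_2\}K_2)$ and $K_1K_2\log n/(n_1n_2)$, is precisely what handles such near-corner overfits.

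Separately, your ``main obstacle'' is not one. The per-pair bound $2K_1K_2K_1'K_2'n^{-6}$ summed over the quadrant gives the same $O(K_1K_2n^{-2})$ as summing the stated bound $2K_1K_2n^{-4}$, which is what the paper does. Either way it vanishes, because $K_1K_2\ll n$.
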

	\begin{proof}[Proof of Proposition \ref{prop_probcontrol5}]
		Recall the event
		$$\Omega=\left\{\frac{n_1n_2(1-w)}{2}\leq|\mathcal{E}^c|\leq2n_1n_2(1-w)\right\}$$
		defined in the proof of Proposition \ref{prop_probcontrol1}. Similarly, Similarly, recall Proposition \ref{prop_sbmupper} gives with probability larger than $1-3K^2K_2^2n^{-\gamma}$,
		\begin{equation*}
			\ell_{K_1,K_2}(A, \mathcal{E}^c) - \ell_0(A, \mathcal{E}^c) \leq C\frac{nK^2\rho(1-w)}{w\beta}\maxrho.
		\end{equation*}
		
		We first prove the consistency when $K_1<K_1'\leq 2K_1$ and $K_2<K_2'\leq2K_2$. Notice that $K_1'K_2'-K_1K_2\geq K_1$ in this scenario. Thus we have
		\begin{align*}
			&\mathrm{Pr}\left[L_{K_1,K_2}(A,\mathcal{E}^c)>L_{K_1,K_2'}(A,\mathcal{E}^c)\right]\\
			\leq&\mathrm{Pr}\left[\left(\ell_{K_1,K_2}(A,\mathcal{E}^c)-\ell_{0}(A,\mathcal{E}^c)\right)-\left(\ell_{K_1,K_2'}(A,\mathcal{E}^c)-\ell_{0}(A,\mathcal{E}^c)\right)\right.\\
			&\left.>1/2(K_1'K_2'-K_1K_2)n_1n_2(1-w)\lambda_{n_1,n_2}\right]\\
			{\leq}&\mathrm{Pr}\left[C\frac{nK^2\rho(1-w)}{w\beta}\maxrho+3K_1K_2K_1'K_2'\log n>1/2K_1n_1n_2(1-w)\lambda_{n_1,n_2}\right]\\
			&+3K_1^2K_2^2n^{-\gamma}+\frac{2K_1K_2}{n^4}\\
			\leq&\mathrm{Pr}\left[C\frac{nK^2\rho(1-w)}{w\beta}\maxrho>1/8K_1n_1n_2(1-w)\lambda_{n_1,n_2}\right]+3K_1^2K_2^2n^{-\gamma}\\
			&+\mathrm{Pr}\left[12K_1^2K_2^2\log n>1/8K_1n_1n_2(1-w)\lambda_{n_1,n_2}\right]+\frac{2K_1K_2}{n^4}
		\end{align*}
		by Proposition \ref{prop_sbmupper}, \ref{prop_overfit8}. Therefore, as long as $$K_1K_2\Pr\left[\lambda_{n_1,n_2}<C\frac{\rho K}{w\beta\nmin}\maxrho\right]\to0$$ and $$K_1K_2\mathrm{Pr}\left[\lambda_{n_1,n_2}<C\frac{K_1K_2^2\log n}{n_1n_2(1-w)}\right]\to 0$$, we have
		\begin{equation*}
			\sum_{K_1<K_1'\leq 2K_1,K_2<K_2'\leq 2K_2}\mathrm{Pr}\left[L_{K_1,K_2}(A,\mathcal{E}^c)>L_{K_1',K_2'}(A,\mathcal{E}^c)\right]\to 0.
		\end{equation*}
		
		Now we prove conditioned on the event $\Omega$, we have 
		\begin{equation}\label{eqn:control3}
			\sum_{K_1'>2K_1\text{ or }K_2'>K_2}\mathrm{Pr}\left[L_{K_1,K_2}(A,\mathcal{E}^c)>L_{K_1',K_2'}(A,\mathcal{E}^c)\right]\to0.
		\end{equation}
		When $K_1'>2K_1$ or $K_2'>K_2$, we have
		\begin{align*}
			&\mathrm{Pr}\left[L_{K_1,K_2}(A,\mathcal{E}^c)>L_{K_1,K_2'}(A,\mathcal{E}^c)\right]\\
			\leq&\mathrm{Pr}\left[\left(\ell_{K_1,K_2}(A,\mathcal{E}^c)-\ell_{0}(A,\mathcal{E}^c)\right)-\left(\ell_{K_1,K_2'}(A,\mathcal{E}^c)-\ell_{0}(A,\mathcal{E}^c)\right)\right.\\
			&\left.>1/2(K_1'K_2'-K_1K_2)n_1n_2(1-w)\gamma_{n_1,n_2}\right]\\
			\leq&\mathrm{Pr}\left[\left(\ell_{K_1,K_2}(A,\mathcal{E}^c)-\ell_{0}(A,\mathcal{E}^c)\right)-\left(\ell_{K_1,K_2'}(A,\mathcal{E}^c)-\ell_{0}(A,\mathcal{E}^c)\right)\right.\\
			&\left.>1/4K_1'K_2'n_1n_2(1-w)\lambda_{n_1,n_2}\right]\\
			\overset{(*)}{\leq}&\mathrm{Pr}\left[C\frac{nK^2\rho(1-w)}{w\beta}\maxrho+3K_1K_2K_1'K_2'\log n>1/4K_1'K_2'n_1n_2(1-w)\lambda_{n_1,n_2}\right]\\
			&+3K_1^2K_2^2n^{-\gamma}+\frac{2K_1K_2}{n^4}\\
			\leq&\mathrm{Pr}\left[C\frac{nK^2\rho(1-w)}{w\beta}\maxrho>1/8K_1K_2n_1n_2(1-w)\lambda_{n_1,n_2}\right]+3K_1^2K_2^2n^{-\gamma}\\
			&+\mathrm{Pr}\left[3K_1K_2K_1'K_2'\log n>1/8K_1'K_2'n_1n_2(1-w)\lambda_{n_1,n_2}\right]+\frac{2K_1K_2}{n^4},
		\end{align*}
		where (*) comes from Proposition \ref{prop_sbmupper} and \ref{prop_overfit8}. Notice that for any $C>0$, if we have $\mathrm{Pr}[\lambda_{n_1,n_2}<C{\rho_{n_1,n_2}K\maxrho}/{w\beta\min\{n_1,n_2\}K_2}]\ll1/n^2,$ then as $(n_1,n_2)\to\infty$, $$\mathrm{Pr}\left[C\frac{nK^2\rho(1-w)}{w\beta}\maxrho>1/8K_1K_2n_1n_2(1-w)\lambda_{n_1,n_2}\right]\ll\frac{1}{n^2},$$
		thus
		$$\sum_{K_1'>2K_1\text{ or }K_2'>K_2}\mathrm{Pr}\left[C\frac{nK^2\rho(1-w)}{w\beta}\maxrho>1/8K_1K_2n_1n_2(1-w)\lambda_{n_1,n_2}\right]=o(1).$$
		Similarly, since for any $C>0$, we have $\mathrm{Pr}\left[\lambda_{n_1,n_2}<{CK_1K_2\log n}/{n_1n_2(1-w)}\right]\ll1/n^2$, then as $(n_1,n_2)\to\infty$,
		$$\sum_{K_1'>2K_1\text{ or }K_2'>K_2}\mathrm{Pr}\left[3K_1K_2K_1'K_2'\log n>1/8K_1'K_2'n_1n_2(1-w)\lambda_{n_1,n_2}\right]=o(1).$$
		Therefore, as $(n_1,n_2)\to\infty$,
		\begin{align*}
			&\sum_{K_1'>2K_1\text{ or }K_2'>K_2}\mathrm{Pr}\left[L_{K_1,K_2}(A,\mathcal{E}^c)>L_{K_1,K_2'}(A,\mathcal{E}^c)\right]\\
			\leq& o(1)+\sum_{K_1'>2K_1\text{ or }K_2'>K_2}\frac{2K_1K_2}{n^4}+\sum_{K_1'>2K_1\text{ or }K_2'>K_2}3K_1^2K_2^2n^{-\gamma}\\
			\leq& o(1)+\frac{2K_1K_2}{n^2}+3K_1^2K_2^2n^{-\gamma+2}\to0,
		\end{align*}
		as long as we choose an appropriately large $\gamma$. Thus \eqref{eqn:control3} is proved.
	\end{proof}
	
	\begin{proof}[Proof of Theorem 1]
		Combining Propositions \ref{prop_probcontrol1}, \ref{prop_probcontrol2}, \ref{prop_probcontrol3}, \ref{prop_probcontrol4} and \ref{prop_probcontrol5}, it's clear that all the conditions there can be derived from the conditions in Assumption 4. Thus, one can directly obtain
		$$\sum_{(K_1',K_2')\neq(K_1,K_2)}\mathrm{Pr}\left[L_{K_1,K_2}(A,\mathcal{E}^c)>L_{K_1',K_2'}(A,\mathcal{E}^c)\right]\to 0,\quad (n_1,n_2)\to\infty,$$
		which directly implies the desired result.
	\end{proof}

	\subsection{Proof of Corollary 1}
	We first consider the balanced case. Here, we have $n_1\asymp n_2\asymp n$. Therefore, the three conditions in Assumption 4 will write as
	\begin{enumerate}
		\item $n\rho=\Omega((K_2/K)\log n)$.
		\item $n^2\rho^2\min\{1/\log n,1/(K_2\rho)\}\gg K^4K_2^6$.
		\item $n\rho\beta\min\{1,1/(K\rho)\}\gg K^4K_2^3$.
	\end{enumerate}
	The first condition implies that $\rho=\Omega(\log n/n)$, which covers the condition in Proposition \ref{prop_con}. Therefore, this is the sparsity threshlod rate in the balanced case. In order to let this rate holds, plug in $\rho_{n_1,n_2}\asymp \log n/n$ and $\beta=\Omega(K/K_2)$ into the three conditions, we will get
	$$K_2/K=O(1),\qquad \log n\gg K^4K_2^6,\qquad \log n\gg K^3K_2^4,$$
	respectively. The first formula says $K_1\asymp K_2$, and thus the last two formulas will give $K=o((\log n)^{1/10})$.
	
	Then we consider the highly imbalanced case, where $n_1\sim n_2^a$. Then, the three conditions in Assumption 4 will write as
	\begin{enumerate}
		\item $n\rho=\Omega((K_2/K)\log n)$.
		\item $n_2^{a+1}\rho^2\min\{1/\log n,1/(K_2\rho)\}\gg K^4K_2^6$.
		\item $n_2\rho\beta\min\{1,1/(K\rho)\}\gg K^4K_2^3$.
	\end{enumerate}
	Here, the last formula implies that $n_2\rho\gg K^4K_2^3\gg 1$, thus we must have $\rho_{n_1,n_2}\gg 1/n_2\sim n^{-1/a}\gg \log n/n$. Thus, $\rho_{n_1,n_2}\gg 1/n_2$ is the sparsity threshold rate in this regime, and in order this rate can be achieved, the third formula tells us that we must have $K_1\asymp K_2\asymp 1$.
	
	\subsection{Proof of Corollary 2}
	Denote $\bar{K}=\bar{K}_1=\min\{\bar{K}_1,\bar{K}_2\}$. Since we assume that $K_1\leq\bar{K}_1$, $K_2\leq\bar{K}_2$, it is obvious that the three assumptions for $\lambda_{n_1,n_2}$ proposed in Theorem 1 holds if the following three conditions hold:
	\begin{enumerate}
		\setlength{\itemsep}{0.05cm}
		\item $\bar{K}^2\bar{K}_2\Pr[\lambda_{n_1,n_2}>\rho^2/\bar{K}^3\bar{K}_2^4]\to 0.$
		\item For any constant $C>0$, $$\Pr\left[\lambda_{n_1,n_2}<C\frac{\rho_{n_1,n_2} \bar{K}}{\beta_{n_1,n_2}\nmin}\maxbarrho\right]\ll \frac{1}{n^2}.$$ 
		\item For any constant $C>0$, $$\Pr\left[\lambda_{n_1,n_2}<C_2\frac{\bar{K}\bar{K}_2^2\log n}{n_1n_2}\right]\ll \frac{1}{n^2}.$$
	\end{enumerate}
	Then we shall prove that our proposed $\lambda_{n_1,n_2}$ satisfies the above three conditions.
	
	We first prove a concentration result on $\hat{\rho}_{n_1,n_2}$, which we will denote simply as $\hat{\rho}$. By Bernstein's inequality, we have
	$$\Pr\left(\left|\sum_{i,j}A_{ij}-\mathbb{E}\left[\sum_{i,j}A_{ij}\right]\right|\right)\leq 2\exp\left(-\frac{t^2/2}{\sum_{ij}P_{ij}+t/3}\right)\leq\left(-\frac{t^2/2}{n_1n_2\rho+t/3}\right).$$
	Choose $t=\sqrt{12n_1n_2\rho\log n}$, one can prove that 
	$$\Pr\left(\left|\hat{\rho}-\frac{\sum_{kl}n_{1k}n_{2l}B_{kl}}{n_1n_2}\rho\right|\right)\leq 2n^{-3}.$$
	Notice that $C\leq\sum_{kl}n_{1k}n_{2l}B_{kl}/n_1n_2\leq 1$ where $C$ is the constant in Assumption 3, thus there exist constants $0<C_1<C_2$, such that with probability larger than $1-2n^{-3}$, we have
	$$C_1\rho\leq\hat{\rho}\leq C_2\rho.$$
	
	Now we verify the three conditions. For the first one, we have
	\begin{align*}
		&\Pr[\lambda_{n_1,n_2}>\rho^2/\bar{K}^3\bar{K}_2^4]\\
		\leq&\Pr\left[C_1^{2}\frac{\rho^{3/2}}{\bar{K}\bar{K}_2^2\sqrt{\nmin}}\sqrt{\maxbarrho}>\rho^2/\bar{K}^3\bar{K}_2^4\right]\\
		&+\Pr\left[C_1\frac{\rho}{\bar{K}\bar{K}_2}\sqrt{\frac{\log n}{n_1n_2}}>\rho^2/\bar{K}^3\bar{K}_2^4\right]+2n^{-3}\\
		=&\Pr\left[\nmin\rho\min\left\{1,1/\bar{K}\rho\right\}<\bar{K}^4\bar{K}_2^4\right]+\Pr\left[n_1n_2\rho^2<\bar{K}^4\bar{K}_2^6\right]+2n^{-3}\\
		\leq& 2n^{-3},
	\end{align*}
	by the second and third condition in Assumption 4. Thus $$\bar{K}^2\bar{K}_2\Pr\left[\lambda_{n_1,n_2}>\frac{\rho^2}{\bar{K}^3\bar{K}_2^4}\right]\leq 2\bar{K}^2\bar{K}_2n^{-3}\to 0$$, 
	
	For the second one, we have for any $C>0$,
	\begin{align*}
		&\Pr\left[\lambda_{n_1,n_2}<C\frac{\rho \bar{K}}{\beta_{n_1,n_2}\nmin}\maxbarrho\right]\\
		\leq&\Pr\left[C_2^{2}\frac{\rho^{3/2}}{\bar{K}\bar{K}_2^2\sqrt{\nmin}}\sqrt{\maxbarrho}<C\frac{\rho \bar{K}}{\beta_{n_1,n_2}\nmin}\maxbarrho\right]+2n^{-3}\\
		\leq&\Pr\left[\nmin\rho\min\{1,1/\bar{K}\rho\}<\frac{C^2}{C_2^4}\bar{K}^4\bar{K}_2^4\right]+2n^{-3}\\
		<&2n^{-3}\ll n^{-2},
	\end{align*}
	by the third condition in Assumption 4. 
	
	For the last one, we have for any $C>0$,
	\begin{align*}
		&\Pr\left[\lambda_{n_1,n_2}<C\frac{\bar{K}\bar{K}_2^2\log n}{n_1n_2}\right]\\
		\leq&\Pr\left[C_2\frac{\rho}{\bar{K}\bar{K}_2}\sqrt{\frac{\log n}{n_1n_2}}<C\frac{\bar{K}\bar{K}_2^2\log n}{n_1n_2}\right]+2n^{-3}\\
		\leq&\Pr\left[n_1n_2\rho^2<\frac{C}{C_2}\bar{K}^4\bar{K}_2^6\log n\right]+2n^{-3}\\
		<&2n^{-3}\ll n^{-2},
	\end{align*}
	by the second condition in Assumption 4.
	
	Finally, when $K_1$, $K_2$ are fixed (correspondingly, $\bar{K}_1$ and $\bar{K}_2$ are of constant order), since $\hat{\rho}\asymp \rho$ still holds, we have $\max\{1,\bar{K}_1\hat{\rho}\}=1$, and
	$$\frac{\log n}{n_1n_2}=O\left( \frac{\hat{\rho}}{\nmin}\right),$$
	since $\rho=\Omega(\log n/n)$. Therefore, the form of $\lambda_{n_1,n_2}$ in (2) in the main article directly reduces to
	$$\lambda_{n_1, n_2}=C\frac{\hat{\rho}_{n_1,n_2}^{3/2}}{\sqrt{\nmin}}.$$
	
\end{document}